\documentclass[11pt]{article}
\usepackage[margin=1in]{geometry}
\usepackage{graphicx} 

\usepackage{xspace}
\usepackage{xcolor}
\usepackage{amscd}
\usepackage{amsmath}
\usepackage{amssymb}
\usepackage{amstext}
\usepackage{amsthm}
\usepackage{bbold}
\usepackage{bm}
\usepackage{colonequals}
\usepackage{tikz}

\usepackage{hyperref}

\usepackage{algpseudocode}
\usepackage[ruled,vlined]{algorithm2e}

\newcommand{\sM}{\mathcal{M}}

\newcommand{\NN}{\mathbb{N}}
\newcommand{\PP}{\mathbb{P}}
\newcommand{\QQ}{\mathbb{Q}}

\DeclareSymbolFont{bbold}{U}{bbold}{m}{n}
\DeclareSymbolFontAlphabet{\mathbbold}{bbold}

\newcommand{\sgn}{\mathrm{sgn}}

\newcommand{\Var}{\mathrm{Var}}

\newcommand{\Aut}{\mathrm{Aut}}
\newcommand{\Iso}{\mathrm{Iso}}

\DeclareMathOperator*{\E}{\mathbb{E}}

\newcommand{\Rad}{\mathrm{Rad}}
\newcommand{\Unif}{\mathrm{Unif}}
\newcommand{\Ber}{\mathrm{Ber}}
\newcommand{\Bin}{\mathrm{Bin}}

\newtheorem{theorem}{Theorem}[section]
\newtheorem{remark}[theorem]{Remark}

\newtheorem{lemma}[theorem]{Lemma}
\newtheorem{question}[theorem]{Question}

\newtheorem{definition}[theorem]{Definition}

\newtheorem{proposition}[theorem]{Proposition}
\newtheorem{corollary}[theorem]{Corollary}

\title{Near Optimal Algorithms for Noisy $k$-XOR under Low-Degree Heuristic}
\author{Songtao Mao\thanks{\texttt{smao13@jhu.edu}, Department of Computer Science, Johns Hopkins University.}}
\date{}

\begin{document}
\maketitle
\begin{abstract}
Noisy $k$-XOR is a basic average-case inference problem in which one observes random noisy $k$-ary parity constraints and seeks to recover, or more weakly, detect, a hidden Boolean assignment. A central question is to characterize the tradeoff among sample complexity, noise level, and running time.

We give a recovery algorithm, and hence also a detection algorithm, for noisy $k$-XOR in the high-noise regime. For every parameter $D$, our algorithm runs in time $n^{D+O(1)}$ and succeeds whenever
$$
m \ge C_k \frac{n^{k/2}}{D^{\,k/2-1}\delta^2},
$$
where $C_k$ is an explicit constant depending only on $k$, and $\delta$ is the noise bias. Our result matches the best previously known time--sample tradeoff for detection, while simultaneously yielding recovery guarantees. In addition, the dependence on the noise bias $\delta$ is optimal up to constant factors, matching the information-theoretic scaling.

We also prove matching low-degree lower bounds. In particular, we show that the degree-$D$ low-degree likelihood ratio has bounded $L^2$-norm below the same threshold, up to the same factor $D^{k/2-1}$. Under the low-degree heuristic, this implies that our algorithm is near-optimal over a broad range of parameters.

Our approach combines a refined second-moment analysis with color coding and dynamic programming for structured hypergraph embedding statistics. These techniques may be of independent interest for other average-case inference problems.
\end{abstract}
\newpage
\tableofcontents
\newpage
\section{Introduction}
The study of noisy $k$-XOR, also known as sparse LPN, lies at the intersection of average-case complexity, learning theory, and cryptography. In this problem, one observes random parity queries of Hamming weight $k$, together with noisy labels, and the goal is either to distinguish the planted distribution from pure noise or to recover the hidden Boolean assignment. From the cryptographic perspective, sparse LPN has become a widely used hardness assumption, underlying a variety of constructions including public-key encryption, pseudorandom generators, secure computation protocols, and other cryptographic primitives \cite{alekhnovich2003more,applebaum2010public,applebaum2017secure,couteau2021silver,dao2023multi,ragavan2025indistinguishability,raghuraman2023expand}.

We work with the equivalent $\{\pm1\}$-encoding of bits, in which $0$ and $1$ are represented by $+1$ and $-1$, respectively, so that parity constraints are written multiplicatively rather than additively over $\mathbb F_2$. In noisy $k$-XOR, there is an unknown vector $x=(x_1,\dots,x_n)\in\{\pm1\}^n$, and each sample consists of a uniformly random $k$-set of coordinates $\alpha=\{i_1,\dots,i_k\}\subseteq[n]$, together with a noisy observation of the parity $x_\alpha:=\prod_{i\in\alpha}x_i\in\{\pm1\}$. Under the planted model, the observed label takes the form $z(\alpha)=x_\alpha \xi_\alpha$, where $\xi_\alpha\in\{\pm1\}$ is an independent noise variable with mean $\E[\xi_\alpha]=\delta$. Equivalently, $z(\alpha)$ agrees with the true parity $x_\alpha$ with probability $(1+\delta)/2$, and is flipped with probability $(1-\delta)/2$. Given $m$ such samples, the goal is either to recover the hidden assignment $x$ (up to global sign when $k$ is even), or to distinguish this planted model from a null model in which the observed labels are independent unbiased random signs. Thus, \emph{distinguishing} and \emph{recovery} are two closely related but fundamentally different tasks: the former asks only whether the samples contain any detectable planted signal, whereas the latter asks to reconstruct the hidden assignment itself, and is therefore typically the stronger objective.

Ignoring for the moment the dependence on the noise parameter, the regime $m \gtrsim n^{k/2}$ is algorithmically more accessible. Around this scale, birthday-paradox-type collisions begin to appear among the sampled $k$-sparse queries, or among suitable projected queries. Such collisions yield repeated or overlapping noisy observations that can be aggregated to amplify the planted signal \cite{applebaum2016cryptographic}. Consequently, the more interesting regime is $m<n^{k/2}$, where such collisions are too sparse to be exploited directly.

There is by now a substantial line of work on distinguishing and refutation algorithms. Early work gave witness-based refutation for dense random formulas \cite{feige2006witnesses}, while later work established general refutation criteria for random CSPs based on the failure of $t$-wise independence \cite{allen2015refute}. Subsequent works further developed strong-refutation tradeoffs below the $n^{k/2}$ threshold for random and smoothed CSPs, together with structural simplifications and sharpenings of this framework \cite{raghavendra2017strongly,guruswami2022algorithms,hsieh2023simple}.

On the recovery side, substantially less is known. A foundational result of Feldman, Perkins, and Vempala gives a polynomial-time recovery algorithm for planted CSPs, including planted XOR-type problems, at roughly the $\widetilde O(n^{k/2})$ sample scale, via a reduction to noisy XOR together with subsampled power iteration \cite{feldman2015subsampled}. A separate line of work in cryptography studies \emph{search-to-decision} reductions, which aim to convert distinguishing algorithms into recovery algorithms. These results show that distinguishing information can sometimes be leveraged to recover the hidden secret, but typically at a nontrivial cost: some reductions \cite{applebaum2010public,applebaum2012pseudorandom,bogdanov2025sample} incur a substantially larger blow-up in the number of samples. In contrast, the reduction of Bogdanov, Sabin, and Vasudevan \cite{bogdanov2019xor} comes at the cost of an exponentially small success probability.

A complementary line of work studies noisy $k$-XOR across a range of noise regimes. Chen, Shu, and Zhou \cite{chen2025algorithms} developed improved algorithms for sparse LPN and LSLPN using domain-reduction and elimination-based techniques, with particularly strong guarantees when the noise rate is small. More recently, Basu et al.~\cite{basu2025solving} obtained a new algorithm for the \emph{search} version of noisy $k$-XOR. Their result gives an $n^{O(D)}$-time algorithm under the condition
\[
m \;\ge\; 2^{O(k)}\, n\log n \cdot
\max\!\left\{
\delta^{-11}\left(\frac{n}{D}\right)^{k/2-1},\;
\delta^{-2}
\right\}.
\]
Thus, although the dependence on $n$ and $D$ matches the best known refutation-style tradeoff, the leading term in the high-noise regime still has a substantially worse-than-quadratic dependence on the bias parameter $\delta$.

It is therefore natural to ask whether one can improve the dependence of the sample complexity on the
bias parameter $\delta$ while still achieving the best known running-time tradeoff.
The dependence $\delta^{-2}$ is information-theoretically natural: each sample is a noisy parity bit,
equivalently one use of a binary symmetric channel with crossover probability $(1-\delta)/2$, and hence
contributes only $\Theta(\delta^2)$ bits of mutual information in the high-noise regime.
Consequently, even for recovering an $n$-bit hidden vector in the absence of computational constraints,
one needs $m = \Theta(n/\delta^2)$ samples.
Our conjectural picture is further inspired by the recent work of Bresler and Harbuzova
\cite{bresler2026average}, which builds average-case reductions between noisy $k$-XOR and Tensor PCA. For the spiked covariance / PCA problem, the corresponding threshold is naturally expressed as $m \asymp n\delta^{-2}$.
Motivated by this perspective, together with the preceding discussion, we propose the following analogous questions for noisy $k$-XOR.
\begin{question}\label{q:eps2}
Does there exist an efficient algorithm for noisy $k$-XOR in the high-noise regime that achieves
the conjectured optimal quadratic dependence on the bias parameter?
More concretely, can one obtain detection or recovery with
\[
m = \widetilde{\Theta}\!\left(n^{k/2}\delta^{-2}\right),
\]
or, more generally, with the $n^{O(D)}$-time tradeoff
\[
m=\widetilde{\Theta}\!\left(n\left(\frac{n}{D}\right)^{k/2-1}\delta^{-2}\right),
\]
without incurring a worse polynomial dependence on $\delta$?
\end{question}

In this work, we answer this question affirmatively. We give a recovery algorithm for noisy $k$-XOR in the high-noise regime with high success probability, and therefore also obtain a distinguisher. Our result substantially improves the dependence on the bias parameter over the previous best recovery algorithm of Basu et al.~\cite{basu2025solving}, achieving the information-theoretically optimal $\delta^{-2}$ scaling. Moreover, unlike \cite{basu2025solving}, our algorithm incurs no additional $\log n$ loss in the runtime--sample tradeoff. More generally, the tradeoff among running time, sparsity $k$, and the parameter $D$ matches or improves upon the best previous distinguishing results.

A natural follow-up question is whether the tradeoff between sample complexity and running time must inherently exhibit an information--computation gap.

\begin{question}
Is there a computational barrier to detecting noisy $k$-XOR in time $n^{O(D)}$ when the sample size satisfies
\[
m=o\!\left(n\left(\frac{n}{D}\right)^{k/2-1}\delta^{-2}\right)?
\]
\end{question}

We provide evidence from the low-degree polynomial method, a framework inspired by the sum-of-squares hierarchy that has emerged as a powerful approach for predicting computational thresholds in high-dimensional inference problems \cite{barak2019nearly,hopkins2017efficient,hopkins2017power,hopkins2018statistical}. Roughly speaking, for a testing problem between two distributions $\mathbb{P}$ and $\mathbb{Q}$, one studies whether there exists a polynomial $f$ of degree $D$ whose expectation under $\mathbb{P}$ is significantly larger than its typical fluctuation under $\mathbb{Q}$. A convenient proxy is the low-degree advantage
\[
\mathrm{Adv}(f)
\;:=\;
\frac{\mathbb{E}_{\mathbb{P}}[f]}{\sqrt{\mathbb{E}_{\mathbb{Q}}[f^2]}}.
\]
The low-degree heuristic predicts that, if this quantity remains bounded for all degree-$D$ polynomials, then under constant noise no algorithm running in time roughly $n^{\Theta(D)}$ (or $e^{\widetilde\Theta(D)}$, depending on the parameterization) should strongly distinguish $\PP$ from $\QQ$. Although there are known counterexamples showing that, in certain noisy models, bounded low-degree statistics do not by themselves imply computational hardness \cite{holmgren2020counterexamples,buhai2025quasi}, the low-degree framework is still widely regarded as a robust predictor for many high-dimensional inference problems.

In our setting, the low-degree perspective suggests that the observed information--computation tradeoff between the sample size $m$ and the running time is intrinsic. More precisely, we prove a matching low-degree lower bound for noisy $k$-XOR: under the standard low-degree heuristic, any algorithm running in time $n^D$ must use
\[
m = \Omega\left( \frac{n^{k/2}}{D^{\,k/2-1}\delta^2}\right)
\]
samples. This matches the algorithmic tradeoff in the principal scaling. Thus, our upper bound is not only information-theoretically optimal in its $\delta^{-2}$ dependence, but also matches the conjectured computational threshold predicted by the low-degree method.
\subsection{Our results}
We begin by formalizing the noisy $k$-XOR model and the two algorithmic tasks that will be studied throughout the paper, namely distinguishing and recovery.

\begin{definition}[Noisy $k$-XOR]
Fix $k\ge 2$, and let $x\in\{\pm1\}^n$ be an unknown planted assignment. In the noisy $k$-XOR model, we observe $m$ samples $(\alpha_1,z_1),\dots,(\alpha_m,z_m)$,
where each $\alpha_t\in\binom{[n]}{k}$ is drawn uniformly at random, and
\[
z_t=\Bigl(\prod_{i\in\alpha_t}x_i\Bigr)\xi_t \in \{\pm1\},
\]
with $\xi_t$ independent and
$\Pr[\xi_t=1]=\frac{1+\delta}{2}$,
$\Pr[\xi_t=-1]=\frac{1-\delta}{2}$.
We denote this planted distribution by $\PP_x$. The null distribution $\QQ$ is defined in the same way, except that $z_1,\dots,z_m$ are independent uniform random signs, independent of the sampled $k$-tuples.

A \emph{distinguishing algorithm} outputs $0$ or $1$ given the samples. Its success probability is
\[
\frac12\Pr_{\PP_x}[\mathcal A=1]+\frac12\Pr_{\QQ}[\mathcal A=0].
\]
We say that $\mathcal A$ \emph{strongly distinguishes} if this probability is $1-o(1)$, and \emph{weakly distinguishes} if it is at least $1/2+\varepsilon$ for some fixed constant $\varepsilon>0$.

A \emph{recovery algorithm} outputs an estimate $\widehat x\in\{\pm1\}^n$ from samples drawn from $\PP_x$. Its success probability is
\[
\Pr_{\PP_x}[\widehat x=x]
\qquad\text{if $k$ is odd,}
\]
and
\[
\Pr_{\PP_x}[\widehat x\in\{x,-x\}]
\qquad\text{if $k$ is even.}
\]
If this probability is $1-o(1)$, we say the algorithm achieves \emph{exact recovery}.
\end{definition}
We now state our main positive result. It gives a recovery algorithm for noisy $k$-XOR with a smooth tradeoff between running time and sample complexity. In particular, for every sufficiently large multiple $D$ of $k$, we obtain an $n^{D+O(1)}$-time algorithm whose required sample size decreases as $D$ grows.

\begin{theorem}[Informal version of Theorem~\ref{thm:full-recovery-formal}]
Fix $k\ge 3$. Suppose $D$ is a multiple of $k$ and satisfies $Ck \;\le\; D \;\le\; \frac{0.1\log n}{\log\log n}$,
for some absolute constant $C>0$. Then there is an algorithm running in time
$n^{D+O(1)}$
that recovers, and hence also distinguishes, the planted assignment in noisy $k$-XOR, provided the number of samples satisfies
\[
m
\;\ge\;
\frac{2^{k} k^{k/2}}{k!}
\cdot
\frac{n^{k/2}}{\delta^2\,D^{\,k/2-1}},
\]
with success probability at least $2/3$. Moreover, if $D=\omega(1)$, or \[
m
=
\omega\!\left(
\frac{2^{k} k^{k/2}}{k!}
\cdot
\frac{n^{k/2}}{\delta^2\,D^{\,k/2-1}}
\right),
\]
then the success probability improves to $1-o(1)$. For odd $k$, the algorithm outputs $x$ exactly; for even $k$, it outputs one of $x$ and $-x$.
\end{theorem}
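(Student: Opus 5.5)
The plan is to estimate each bit $x_i$ (more precisely, each product $x_ix_j$ when $k$ is even) using a low-degree polynomial statistic: a signed count of structured hypergraph embeddings that connect the coordinate $i$ to itself, or $i$ to $j$. Concretely, fix a template hypergraph $H$ made of $D/k$ hyperedges of size $k$. We take $H$ to be a ``long cycle/path'' in which consecutive hyperedges share exactly $k/2$ vertices (for odd $k$, alternate $\lceil k/2\rceil$ and $\lfloor k/2\rfloor$ overlaps). Every internal vertex of $H$ then has degree exactly $2$. The only odd-degree vertex is the root $i$ for odd $k$, or the endpoints $i,j$ for even $k$. For each injective embedding $\phi$ of $H$ into the sample hypergraph, the product of the labels $\prod_{e}z(\phi(e))$ equals $x_i$ (respectively $x_ix_j$) times a product of $D/k$ independent noise bits, so it has mean $\delta^{D/k}x_i$. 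The statistic is
\[
f_i=\sum_{\phi}\prod_{e\in E(H)} z(\phi(e)),
\]
and we output $\widehat x_i=\sgn(f_i)$. For even $k$ we first fix $\widehat x_1=1$ and set $\widehat x_j=\sgn(f_{1j})$.

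\textbf{Step 1 (signal).} Compute the mean of $f_i$. The expected number of embeddings is roughly $n^{v(H)-1}(m/\binom nk)^{D/k}$, where $v(H)\approx D/2$ non-root vertices. So $\E f_i \approx x_i\,\delta^{D/k}\, n^{D/2}\,(m k!/n^k)^{D/k}$, up to the automorphism and ordering factors that produce the constant $\frac{2^kk^{k/2}}{k!}$.

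\textbf{Step 2 (second moment).} Bound $\Var(f_i)$ by summing over pairs of embeddings $(\phi,\psi)$ classified by their overlap pattern. Pairs that overlap on no hyperedge contribute only through the shared random hypergraph, which is negligible. For a pair sharing a set $S$ of hyperedges, the noise on the shared edges squares to $1$, so we lose a factor $\delta^{2|S|}$ of damping but also lose vertex freedom. This is the refined second-moment analysis. Writing $\lambda = m\delta^2 D^{k/2-1}/(c_k n^{k/2})$, we aim to show that each overlap pattern with $s$ shared edges contributes relative to $(\E f_i)^2$ at most $(\text{number of patterns})\cdot\lambda^{-s}$-type terms. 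The $D^{k/2-1}$ gain must come from counting: overlapping segments of a long path can sit in about $D$ positions, while each ``break'' in an overlap costs a factor of $n^{-k/2}$-scale freedom that exchanges with $D$. Summing the geometric series gives $\Var f_i/(\E f_i)^2 \le O(1/\lambda)+o(1)$.

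\textbf{Step 3 (amplification and union bound).} A constant ratio only gives each bit correct with constant probability. We therefore aggregate, either by majority over several disjoint templates or by a cleanup (boosting) round. In the cleanup round we take an initial estimate correct on a $1-\epsilon$ fraction of coordinates, then re-estimate each $x_i$ by majority over the degree-$1$ votes $z(\alpha)\prod_{j\in\alpha\setminus i}\widehat x_j$. There are about $mk/n\gg \delta^{-2}\log n$ such votes per coordinate, which makes all coordinates correct simultaneously. When $D=\omega(1)$ or $\lambda=\omega(1)$, the Chebyshev error is $o(1)$, giving success probability $1-o(1)$; otherwise we get $2/3$.

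\textbf{Step 4 (runtime).} Computing $f_i$ naively costs $n^{v(H)}$, which is too much for the desired $n^{D+O(1)}$ bound once all $i$ are included. Instead we use color coding: randomly color the vertices with $v(H)$ colors and count colorful embeddings of the path-like $H$ by dynamic programming over the hyperedges in sequence. Each state is the set of used colors together with the current $k/2$-vertex interface, which gives time $2^{O(D)}n^{k/2+O(1)}$ per bit, well within $n^{D+O(1)}$. Averaging over colorings, or reweighting by the colorful probability $v!/v^v$, preserves the mean and the variance bounds. The condition $D\le 0.1\log n/\log\log n$ keeps $D^{O(D)}$ factors below $n^{o(1)}$.

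The main obstacle is Step 2. It requires enumerating overlap patterns of two embedded long paths and showing that their combinatorial count is balanced exactly by the $n$ and $\delta$ losses, so that the threshold comes out with the sharp $D^{k/2-1}$ and $\delta^{-2}$ dependence and no $\log n$ loss. The plan for this step is to encode each pattern by its sequence of maximal shared segments and bound the contribution segment by segment, obtaining a transfer-matrix or geometric-sum bound.
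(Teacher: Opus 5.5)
There is a genuine gap, and it sits exactly where you expect the $D^{k/2-1}$ gain to come from. With a single template $H$ of $s=D/k$ edges, the bound you aim for in Step~2 is false.

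Under $\PP_x$, consider a pair of rooted embeddings $(\phi,\psi)$ whose images share $j\ge 1$ edges. It contributes $p^{2s-j}\delta^{2s-2j}-(p\delta)^{2s}\ge 0$ to $\Var(f)$. The sign $\prod_{\alpha\in E(\phi(H))\triangle E(\psi(H))}x_\alpha$ is the square of the root parity, hence $1$. Edge-disjoint pairs contribute exactly $0$. So every overlap pattern only increases the variance.

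The pairs $\psi=\phi$ alone give $\Var(f)\ge(1-o(1))\,N_\phi\,p^{s}$, where $N_\phi$ is the number of rooted embeddings. Every non-root vertex has degree at least $2$, so $N_\phi\le n^{ks/2}$. Since $\E f=N_\phi(p\delta)^s$,
\[
\frac{\Var(f)}{(\E f)^2}\;\ge\;\frac{1-o(1)}{N_\phi\,p^{s}\delta^{2s}}\;\ge\;(1-o(1))\bigl(n^{k/2}p\delta^2\bigr)^{-s},
\qquad
n^{k/2}p\delta^2\approx\frac{k!\,m\,\delta^2}{n^{k/2}}.
\]
So a second-moment argument for one template needs $m\gtrsim n^{k/2}/(k!\,\delta^2)$. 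This exceeds the sample size in the statement by a factor of order $D^{k/2-1}/(2^kk^{k/2})$, which is unbounded when $D=\omega(1)$. Enumerating positions of shared segments only adds nonnegative terms, and a majority over a few templates does not change the order.

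The missing idea is that the gain comes from the entropy of a pattern family, not from any single pattern. The paper sums over the isomorphism-closed family $\mathcal J(r,k,\ell)$: chains of $\ell$ gadgets from $\mathcal U(r,k)$. Each gadget has $rk+1=D+1$ vertices, $2r$ edges and exactly two leaves, and stays connected after deleting any vertex.

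For each rooted embedding $(J,\phi)$ there are only $(v-2)!$ pairs $(J',\psi)$ with the same image, namely the relabelings fixing the roots. So the mean scales with $|\mathcal J|$, while the diagonal variance scales with $|\mathcal J|\,(v-2)!$. The ratio becomes $(v-2)!/(|\mathcal J|\,(n-2)_{v-2}\,p^s\delta^{2s})$. Li's lower bound on $|\mathcal J|$ plus Stirling then turns $|\mathcal J|/v!$ into a per-edge gain of order $D^{k/2-1}$, up to $k$- and $\ell$-dependent constants.

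The vertex-deletion property gives $jk\le 2i-6$ for every nondiagonal overlap (Lemma~\ref{lem:Nij-zero-recovery}); this is what replaces your segment-by-segment accounting. It is also why the running time is $n^{D}$: each $(D+1)$-vertex gadget is brute-forced between its two leaves inside the color-coding DP. Your DP over $k/2$-vertex interfaces is cheap precisely because a path-like template has no entropy.

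Two smaller issues.
\begin{itemize}
\item A path whose consecutive $k$-edges share $k/2$ vertices has about $k$ vertices of degree one. Its signed count is therefore not proportional to $x_i$ or $x_ix_j$; the gadget chains are built to have exactly two leaves.
\item Your clean-up reuses the samples that produced $\widehat x$. This breaks the independence needed for the per-coordinate Hoeffding bound, so you need a fresh batch via random sample splitting, as in the paper.
\end{itemize}
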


To complement this algorithmic upper bound, we next state a matching low-degree lower bound. This shows that, at least from the low-degree perspective, the tradeoff achieved by our algorithm is essentially optimal.

For the low-degree analysis, it is convenient to work in the Bernoulli model, in which each $\alpha\in\binom{[n]}{k}$ is observed independently, and when it is observed it receives a label $y_\alpha\in\{\pm1\}$, while unobserved coordinates are set to $0$. Thus an instance is encoded by a vector
\[
y\in\{0,\pm1\}^{\binom{[n]}{k}},
\]
and a polynomial test means a real polynomial in the coordinates $\{y_\alpha\}_{\alpha\in\binom{[n]}{k}}$, with degree taken with respect to these coordinates. For the low-degree lower bound, we let $\PP$ denote the mixture planted distribution obtained by averaging $\PP_x$ over a uniformly random hidden assignment $x\in\{\pm1\}^n$. By a standard symmetry reduction, this average-case formulation is equivalent to the fixed-$x$ formulation.

\begin{theorem}[Informal version of Theorem~\ref{thm:lowdegree-bound}]
Fix $k\ge 3$, and let $D=D(n)$ be a degree parameter. If
\[
m
=
o\!\left(
\frac{1}{e^{k/2+2}k^{k/2}}
\cdot
\frac{n^{k/2}}{D^{\,k/2-1}\delta^2}
\right),
\]
then no degree-$D$ polynomial test can distinguish $\PP$ from $\QQ$ with success probability $1/2+\varepsilon$, for any fixed constant $\varepsilon>0$.

Moreover, if $D$ is constant and
\[
m
=
O\!\left(
\frac{1}{e^{k/2+2}k^{k/2}}
\cdot
\frac{n^{k/2}}{D^{\,k/2-1}\delta^2}
\right),
\]
then no degree-$D$ polynomial test can distinguish $\PP$ from $\QQ$ with success probability $1-o(1)$.
\end{theorem}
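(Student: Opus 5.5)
We would prove the statement by bounding the norm of the degree-$D$ low-degree likelihood ratio $\|L^{\le D}\|_{L^2(\QQ)}^2=\sum_{|S|\le D}\bigl(\E_{\PP}[\chi_S]\bigr)^2$. Here $\chi_S$ ranges over an orthonormal basis of polynomials in the Bernoulli-model coordinates $\{y_\alpha\}$ under $\QQ$. In the Bernoulli model each $\alpha\in\binom{[n]}{k}$ is observed independently with probability $p=m/\binom nk$. Under $\QQ$ the variable $y_\alpha$ is $0$ with probability $1-p$ and a uniform sign otherwise. An orthonormal basis is given by products over multisets; since $y_\alpha^2=\mathbb 1[\text{observed}]$, degree two per coordinate also appears. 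We first check that the terms containing $y_\alpha^2$ contribute nothing: the observation indicator has the same law under $\PP$ and $\QQ$, so its centered version has zero planted mean. This reduces everything to multilinear $\chi_S=\prod_{\alpha\in S}y_\alpha/\sqrt p$ with $S$ a set of hyperedges.

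Under $\PP$, averaging over a uniform $x$ gives $\E_\PP[\chi_S]=(p\delta)^{|S|}p^{-|S|/2}\,\E_x\bigl[\prod_{\alpha\in S}x_\alpha\bigr]$. The expectation $\E_x[\prod_{\alpha\in S}x_\alpha]$ equals $1$ exactly when every vertex has even degree in the hypergraph $S$, and equals $0$ otherwise. Hence
\[
\|L^{\le D}\|^2-1=\sum_{\ell=1}^{D}(p\delta^2)^{\ell}\,N_\ell ,
\]
where $N_\ell$ is the number of $\ell$-edge $k$-uniform hypergraphs on $[n]$ with all degrees even. The proof then reduces to a counting bound of the form
\[
N_\ell\le \Bigl(C\,\ell^{\,k/2-1}\,\tfrac{n^{k/2}}{\ldots}\Bigr)^{\ell}\cdot\frac{1}{\ell^{\Theta(1)}},
\]
stated precisely enough that $p\delta^2$ times the per-edge factor is at most a constant $c<1$ under the hypothesis on $m$.

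For the count we use vertex counting. An even hypergraph with $\ell$ edges has total degree $k\ell$ and every vertex degree at least $2$, so it has at most $k\ell/2$ vertices. We choose the vertex set in at most $n^{k\ell/2}/(k\ell/2)!$ ways. Then we assign the $k\ell$ edge-slots to vertices so that each vertex receives an even number of slots, which can be done in roughly $(k\ell/2)^{k\ell/2}\cdot$ (pairing factor) ways. Dividing by $\ell!\,(k!)^{\ell}$ for the orderings gives
\[
N_\ell\lesssim \frac{n^{k\ell/2}(k\ell)^{k\ell/2}}{(k\ell/2)!\,\ell!\,(k!)^\ell}\approx\Bigl(\frac{e^{k/2+1}k^{k/2}\ell^{k/2-1}n^{k/2}}{k!}\cdot O(1)\Bigr)^{\ell}.
\]
Multiplying by $p^\ell=(mk!/n^k)^\ell$ makes the $k!$ factors cancel. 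Each term is then at most $\bigl(e^{k/2+2}k^{k/2}D^{k/2-1}m\delta^2/n^{k/2}\bigr)^{\ell}$, using $\ell\le D$. Hypergraphs with fewer vertices (vertices of degree $\ge4$) lose a factor of $n$ per missing vertex, which dominates the extra combinatorial choices as long as $D\ll n$.

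With this in hand the two conclusions follow directly. If $m=o(\cdot)$, the ratio is $o(1)$, so $\|L^{\le D}\|^2=1+o(1)$ and no degree-$D$ test weakly distinguishes, by the standard implication from bounded low-degree norm. If $D$ is constant and $m=O(\cdot)$, the sum is a finite sum of $O(1)$ terms, so the norm stays bounded, which rules out strong distinguishing. We expect the main obstacle to be the precise counting of even hypergraphs, in particular controlling the degenerate configurations (repeated edges are excluded since $S$ is a set, but vertices of high degree are not) uniformly in $\ell\le D$ without losing more than the constant $e^{k/2+2}$. The first approach we would try is to stratify by the number of vertices $v\le k\ell/2$ and bound each stratum by $n^{v}\binom{v}{\cdot}$-type estimates times $v^{k\ell}/v!$.
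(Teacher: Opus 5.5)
Your plan follows the paper's proof. You expand $L$ in the normalized monomials $\prod_{\alpha\in S}z_\alpha/\sqrt p$, find coefficient $(\delta\sqrt p)^{|S|}\mathbf{1}\{S\text{ even}\}$, reduce to $\sum_{t\le D}N_t(p\delta^2)^t$, bound $N_t\le (C\,n^{k/2}t^{k/2-1})^t$, and sum (a finite sum when $D$ is constant). Your handling of the $y_\alpha^2$ directions is more careful than the paper's. The paper calls $\{\Phi_S\}$ an orthonormal basis of $L^2(\QQ)$ even though each coordinate takes three values. Your observation that $y_\alpha^2=b_\alpha$ is independent of $x$ and of all other coordinates is exactly what shows the missing directions carry no weight.

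The step that is wrong as written is the displayed count. By Stirling,
$\frac{n^{k\ell/2}(k\ell)^{k\ell/2}}{(k\ell/2)!\,\ell!\,(k!)^\ell}\approx\bigl((2e)^{k/2}e\,n^{k/2}/(k!\,\ell)\bigr)^{\ell}$, which carries $\ell^{-1}$ per edge instead of $\ell^{k/2-1}$. So the ``$\approx$'' that follows does not come from it. Worse, for large $\ell$ it is smaller than the number of $2$-regular hypergraphs by a superexponential factor, so it is not an upper bound. What dropped out is the pairing factor $(k\ell-1)!!\approx(k\ell/e)^{k\ell/2}$ that you mention in words.

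Restoring it (pair up the $k\ell$ slots, send each pair to one of $w:=k\ell/2$ vertices) gives the top-stratum term of the paper's bound. In Lemma~\ref{lem:counting-even}, the sum over even degree sequences $d_i=2e_i$ becomes, after $(2a)!\ge 2^a a!$ and the multinomial identity, exactly $\frac{(kt-1)!!}{t!\,(k!)^t}\sum_{v\le m}\binom nv v^m$ with $m=kt/2$. The degenerate configurations you flag as the main obstacle are then harmless. The paper bounds each stratum by the top one, $\binom nv v^m\le\binom nm m^m$ for $v\le m\le n/2$. More directly, every even hypergraph with $\ell$ edges lies inside some $w$-subset $U\subseteq[n]$, and isolated vertices have even degree $0$. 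Pairing equal entries canonically shows that at most $(2w-1)!!\,w^w$ sequences in $U^{k\ell}$ have all multiplicities even. Hence, for $k\ell$ even and $w\le n$ (and $N_\ell=0$ if $k\ell$ is odd),
\[
N_\ell\le\binom{n}{w}\frac{(2w-1)!!\,w^{w}}{\ell!\,(k!)^{\ell}}\le e\sqrt{k\ell}\,\Bigl(\frac{e\,k^{k/2}}{k!}\,n^{k/2}\ell^{k/2-1}\Bigr)^{\ell}.
\]
The cruder route in your last sentence also works: drop evenness and use $N_\ell\le\binom nw\,w^{k\ell}/(\ell!\,(k!)^\ell)$, valid because each simple hypergraph on $U$ yields $\ell!(k!)^\ell$ distinct slot sequences. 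It gives per-edge constant $e^{k/2+1}k^{k/2}/(2^{k/2}k!)$. For fixed $k$ these constants are immaterial, and with any of these bounds your concluding paragraph goes through.
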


In particular, under the low-degree heuristic, no degree-$D$ polynomial-time algorithm should be able to even weakly or strongly distinguish $\PP$ from $\QQ$ in these regimes.

Taken together, the two theorems indicate that our positive result is remarkably tight. Our upper bound matches the low-degree lower bound in the principal scaling
\[
\frac{n^{k/2}}{D^{\,k/2-1}\delta^2},
\]
and also nearly matches its dependence on $k$. In particular, when $k$ is superconstant, the leading factor $k^{-k/2}$ matches up to lower-order terms. This near-complete agreement gives strong evidence that our algorithm attains essentially the optimal runtime--sample tradeoff predicted by the low-degree framework.

We conclude by discussing an important caveat. In the low-noise regime, existing algorithms can go beyond what the standard constant-noise low-degree heuristic would suggest.

\paragraph{Limits of low-degree heuristic in the low-noise regime}
The algorithm of \cite{chen2025algorithms} achieves recovery in a parameter
regime beyond what the low-degree heuristic would suggest in the low-noise setting.

\begin{theorem}\cite[Theorem 2]{chen2025algorithms}
For any noisy $k$-XOR problem with constant $k$ and noise rate $\eta = (1-\delta)/2$, and for any $\varepsilon\in(0,1)$, there is a learning algorithm that, given
\[
m
:=
\max\left\{
1,\,
\frac{\log n}{k},\,
\frac{2\eta\,n^{\frac{1+\varepsilon}{2}}}{k^2}\log n
\right\}
\cdot
O(n)^{\,1+(1-\varepsilon)\cdot \frac{k-1}{2}}
\]
samples, returns the planted secret in time
\[
e^{\widetilde O\!\left(2\eta\,n^{\frac{1+\varepsilon}{2}}\right)}
\;+\;
n^{O(1)}\cdot m.
\]
\end{theorem}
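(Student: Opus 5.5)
Since this is \cite[Theorem 2]{chen2025algorithms}, one option is simply to invoke it. If I had to reprove it, I would use a \emph{reduce-then-guess-the-noise} strategy, which is what the runtime bound suggests. The exponential term $e^{\widetilde O(2\eta n^{(1+\varepsilon)/2})}$ looks like the cost of enumerating the noise pattern on a set of $N\approx n^{(1+\varepsilon)/2}$ equations. Such a set contains about $\eta N$ flipped labels, and the number of candidate flip patterns is $\binom{N}{\le 2\eta N}=e^{\widetilde O(\eta N)}$. The polynomial term $n^{O(1)}\cdot m$ is the cost of preprocessing all samples and, for each guess, of linear algebra over $\mathbb F_2$ together with a verification step.

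\textbf{Step 1: domain reduction.} I would use the large sample budget $m\approx n^{1+(1-\varepsilon)(k-1)/2}$ to manufacture sparse derived equations over a small effective domain. Concretely, I would bucket the samples by a $(k-1)$-subset of their support, or more generally by a projection onto a random subset of coordinates. Pairs of samples that agree on the shared part are then XORed to cancel the common variables. The exponent $(1-\varepsilon)(k-1)/2$ is chosen so that a birthday-type count gives enough collisions. Each derived equation is the product of two original ones, so its noise bias is $\delta^2$. The resulting equations then form, after iterating if needed, a low-arity system (ideally 2-XOR) on $n$ variables whose constraint hypergraph is well-connected. The failure probabilities should be absorbed by the $\max\{1,\log n/k,\dots\}$ factor.

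\textbf{Step 2: select a small core system.} Among the derived equations I would pick $N=\widetilde O(n^{(1+\varepsilon)/2})$ of them that determine $x$ up to global sign by elimination, for instance a spanning structure with small depth. Keeping the total number of original noisy samples involved at $\widetilde O(n^{(1+\varepsilon)/2})$ is what makes the expected number of errors about $2\eta n^{(1+\varepsilon)/2}$; the factor $2$ reflects that each derived equation combines two samples.

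\textbf{Step 3: enumerate and verify.} For each subset of at most $(1+o(1))\cdot 2\eta N$ core samples, I would flip their labels and solve the resulting system by Gaussian elimination to get a candidate $\widehat x$. I would then test each candidate against a fresh batch of $O(n)$ held-out samples, accepting it if the agreement rate is near $(1+\delta)/2$ rather than near $1/2$. A Chernoff bound plus a union bound over the $e^{\widetilde O(\eta N)}$ candidates shows that only $x$ (or $\pm x$) passes.

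\textbf{Main obstacle.} I expect Step 1 to be the main obstacle: showing that with exactly this many samples the collision structure yields a connected, low-noise-count core system. The dependencies between derived equations that share original samples must be controlled, and the noise amplification must stay bounded so that the number of corrupted core equations remains $O(\eta N)$ rather than blowing up. I would first try a second-moment bound on the collision graph, followed by a standard random-graph connectivity argument.
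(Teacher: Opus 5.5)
This statement is not proved in the paper. It is quoted from \cite{chen2025algorithms} only to illustrate the limits of the low-degree heuristic in the low-noise regime, so simply invoking the citation is exactly what the paper does. The reconstruction you sketch, however, would not go through, for two concrete reasons.

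\emph{Step~2 fails for dimension reasons.} Since $\varepsilon<1$, your core system has $N=\widetilde O(n^{(1+\varepsilon)/2})=o(n)$ equations, and each derived equation is still a single $\mathbb F_2$-linear constraint. Its solution space in the $n$ unknowns therefore has dimension at least $n-N\to\infty$; a spanning tree on $[n]$ alone already needs $n-1$ edges. Enumerating noise patterns at cost $e^{\widetilde O(\eta n^{(1+\varepsilon)/2})}$ can only pin down about $n^{(1+\varepsilon)/2}$ \emph{variables}. Your plan has no stage that extends such a partial solution to all of $[n]$.

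\emph{Step~1 does not produce the system you want.} Pairs of samples sharing a $(k-1)$-set number $\Theta_k(m^2/n^{k-1})$, that is, about $n^{2-\varepsilon(k-1)}$ times the square of the $\max\{\cdot\}$ prefactor. When the prefactor is polylogarithmic, this is $o(n)$ as soon as $\varepsilon>1/(k-1)$, too few even to connect a $2$-XOR graph on $n$ vertices. In the paper's own example $k=5$, $\varepsilon=0.6$, $\eta=n^{-0.7}$, one has $m\le n^{1.9+o(1)}$ and hence $m^2/n^{4}\le n^{-0.2+o(1)}$. So there are essentially no collisions at all, and iterating only makes this worse.

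The mechanism that produces exactly these exponents is a random \emph{restriction} of the domain rather than a birthday collision. This is consistent with the paper's description of \cite{chen2025algorithms} as using domain reduction and elimination.
\begin{enumerate}
\item Pick a uniformly random $S\subseteq[n]$ with $|S|=s:=n^{(1+\varepsilon)/2}$, and keep only samples whose whole support lies in $S$. There are about $m(s/n)^k$ of them, and $m(s/n)^k\gtrsim s$ is precisely $m\gtrsim n^k/s^{k-1}=n^{1+(1-\varepsilon)(k-1)/2}$.
\item This is a $k$-XOR instance on $s$ variables. Solve it by choosing $s$ linearly independent such equations and running Gaussian elimination for each pattern of at most about $2\eta s$ flipped labels among them, at cost $e^{\widetilde O(2\eta s)}$. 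Check each candidate against the remaining equations inside $S$; this is where your Step~3 belongs.
\item For each $i\notin S$, the samples consisting of $i$ together with $k-1$ coordinates of $S$ number about $k\,m\,s^{k-1}/n^k$. This is of order $\log n$ thanks to the $\log n/k$ factor in $m$.
\item Since $x_S$ is now known exactly, each such sample gives $y_e\prod_{j\in e\setminus\{i\}}x_j=\xi_e x_i$, a direct noisy vote on $x_i$. A majority vote with a union bound over $i$ then recovers all of $x$ in time $n^{O(1)}\cdot m$.
\end{enumerate}
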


For example, taking $\varepsilon=0.6$, $\eta=n^{-0.7}$, and $k=5$, the
theorem gives an algorithm using $m=O(n^{1.8})$
samples and running in time $e^{\widetilde O(n^{0.1})}$. This lies beyond the range suggested by standard low-degree methods. Nevertheless, it is not a standard counterexample to the low-degree conjecture, since the noise level in this regime, $\eta=n^{-0.7}$, vanishes with $n$ rather than remaining constant. Existing formulations of the low-degree heuristic are usually interpreted in the constant-noise setting, and to the best of our knowledge, there is no well-established conjectural framework that sharply captures the low-noise regime. This suggests an interesting direction for future work: formulating an appropriate low-degree conjecture when the error rate decays with $n$, and understanding the resulting interpolation between noise level and computational complexity.

\subsection{Related work}
\paragraph{Learning parity with noise}
The Learning Parity with Noise (LPN) problem is a central average-case problem in learning theory and cryptography \cite{blum1993cryptographic,blum2003noise,lyubashevsky2005parity,regev2009lattices,becker2012decoding,raz2018fast,bshouty2024approximating}. In its standard form, one is given random samples $(a,\langle a,s\rangle + e)$ over $\mathbb F_2$, where $a\in\mathbb F_2^n$ is uniformly random, $s\in\mathbb F_2^n$ is a hidden secret, and $e$ is a noise bit that flips the label with some probability. The noisy $k$-XOR problem can be viewed as a sparse variant of LPN, often called sparse LPN, in which each sample vector has Hamming weight $k$. Standard LPN is widely conjectured to be hard even in very low-noise regimes, for instance when the noise rate is as small as about $1/\sqrt n$, and the best known algorithms in this regime are still far from polynomial time, with running times of the form $e^{\Theta(\eta n)}$ when $\eta=o(1/\log n)$ and $2^{\Theta(n/\log n)}$ when $\eta=\Omega(1/\log n)$ \cite{blum2003noise}. It is further conjectured that standard LPN remains hard even when the noise rate is as small as $\eta=O(\log^2 n/n)$. Because of this apparent hardness, several easier variants of LPN have been studied. One important example is Learning Sparse Parities with Noise (LSPN) \cite{valiant2015finding,chen2025algorithms}, where the secret itself is $k$-sparse; this additional structure makes recovery more tractable. Another natural variant is batch LPN \cite{arora2011new,golowich2024exploration,li2025hardness}, where the learner receives samples in batches and the noise variables inside each batch may be correlated. More recently, \cite{bangachev2025near} established near-optimal reductions between running time and sparsity for noisy linear equations, although their results do not address sample complexity.

\paragraph{Solving random planted CSPs}
A line of work has studied the algorithmic task of solving random planted CSPs, where one seeks to recover a planted satisfying assignment from a highly random instance. In contrast to worst-case CSPs, where strong hardness results are known in sparse regimes while efficient algorithms exist only for very dense instances \cite{arora1995polynomial,haastad2001some,fotakis2015sub,impagliazzo2001complexity}, the random planted setting admits substantially better algorithms. Noisy $k$-XOR is a particularly structured special case of this framework, in which each constraint is a linear parity constraint of arity $k$ over $\mathbb F_2$, observed through independent label noise. Early and subsequent works developed algorithms and frameworks for planted CSPs and related average-case search problems \cite{barthel2002hiding,jia2007generating,bogdanov2009security,coja2010efficient,feldman2015subsampled}. In particular, \cite{feldman2015subsampled} gave a polynomial-time algorithm succeeding at about the $\widetilde O(n^{k/2})$ constraint scale, and more recent work has continued to refine the landscape, including connections to semirandom models and noisy $k$-XOR reductions \cite{abascal2021strongly,guruswami2022algorithms,hsieh2023simple,guruswami2023efficient,basu2025solving}.

\paragraph{Low-degree polynomial method.}
The low-degree method has become a central framework for predicting computational thresholds in high-dimensional inference. Its modern formulation was developed through the connection between polynomial tests and the sum-of-squares hierarchy \cite{hopkins2017efficient,hopkins2017power,hopkins2018statistical,kunisky2019notes}. For hypothesis testing and estimation problems, this viewpoint has led to sharp conjectural barriers in several major classes of models. For community-detection-type problems, see \cite{hopkins2017efficient,chen2026computational}. For sparse and correlated graph models, including dense subgraph detection, correlated Erd\H{o}s--R\'enyi, and graph lifts, see \cite{dhawan2025detection,ding2025low,kunisky2024computational}. For principal-component-type problems, including sparse PCA, constrained PCA, spiked-subspace recovery, and correlated spiked models, see \cite{bandeira2019computational,mao2021optimal,ding2024subexponential,bandeira2022franz,li2025algorithmic}. The same perspective has also been extended beyond pure detection to optimization and refutation problems, including random $k$-SAT, maximum independent set, and Boolean optimization landscapes \cite{bresler2022algorithmic,wein2022optimal,gamarnik2024hardness,wang2016average}. Taken together, these works suggest that bounded low-degree likelihood ratio is often the correct proxy for the onset of algorithmic hardness.
\section{Technical Overview}

In this section, we briefly outline the main ideas behind our upper and lower bounds.
\paragraph{Model selection and equivalence}
We consider three natural formulations of noisy $k$-XOR: the standard fixed-$m$ model with replacement, the fixed-$m$ model without replacement, and the Bernoulli-$p$ model. The fixed-$m$ formulations are standard in prior work and in the LPN literature. For the counting arguments and low-degree analysis in this paper, however, the Bernoulli model is more convenient, since the constraints are sampled independently. We therefore begin by showing that, after identifying the parameters via $p=\frac{m}{\binom{n}{k}}$,
the three models are asymptotically equivalent in the regime relevant to our results.

\paragraph{A general mean--variance framework}
Before introducing the specific hypergraph statistic, let us recall the standard statistical framework for distinguishing two distributions $\PP$ and $\QQ$. One chooses a real-valued test statistic $T=T(z)$, together with a threshold $\tau$, and outputs $\mathcal A_T(z):=\mathbf{1}\{T(z)\ge \tau\}$.
The guiding principle is that $T$ should have well-separated expectations under the planted and null models, while its fluctuations under both models remain smaller than this mean gap. In particular, whenever
\[
\bigl(\E_{\PP}[T]-\E_{\QQ}[T]\bigr)^2
\gg
\Var_{\PP}(T)+\Var_{\QQ}(T),
\]
the test $\mathcal A_T$ strongly distinguishes $\PP$ from $\QQ$. Thus, the main task in designing a detection algorithm is to construct a statistic $T$ whose mean is amplified under $\PP$ but remains small under $\QQ$, while simultaneously controlling its second moment. The hypergraph statistic introduced below is tailored precisely for this purpose.
\paragraph{Detection statistics: character sums over hypergraph families}
Our detection statistic is defined by aggregating character sums over a structured family $\mathcal H$ of $k$-uniform hypergraphs. We describe the statistic in the Bernoulli model, viewing the input as a function $z:\binom{[n]}{k}\to\{0,\pm1\}$. For each pattern $H\in\mathcal H$, we set
\[
T_H(z)
:=
\sum_{\phi:V(H)\hookrightarrow[n]}
\prod_{e\in E(H)} z(\phi(e)),
\]
where the sum ranges over injective embeddings of $H$ into the ambient vertex set $[n]$. We then define the global statistic
\[
F_{\mathcal H}(z):=\sum_{H\in\mathcal H} T_H(z).
\]

The key design choice is the pattern family $\mathcal H$. We take $\mathcal H$ to consist of $2$-regular $k$-uniform hypergraphs. This ensures that, under the planted model, every vertex appears an even number of times in the product over hyperedges, so the planted signs cancel identically:
\[
\prod_{e\in E(H)} x_{\phi(e)}
=
\prod_{u\in V(H)} x_{\phi(u)}^{\deg_H(u)}
=1,
\]
since $\deg_H(u)=2$ for every $u\in V(H)$. As a result, the expectation of the contribution from a copy of $H$ depends only on the sampling and noise parameters, and not on the unknown planted assignment. This cancellation phenomenon is the basic reason that $2$-regular hypergraphs are the correct objects for our statistic.

For the actual construction, we use the bounded-width hypergraph family introduced by Li~\cite{li2025smooth}, adapted here to the labeled setting. Working with labeled hypergraphs is technically convenient: every embedding on a prescribed vertex set is counted explicitly, and the resulting symmetry makes the moment computations more transparent. In particular, the null expectation vanishes, $\E_{\QQ}[F_{\mathcal H}(z)]=0$,
while under the planted model the statistic acquires a positive mean of order $\E_{\PP_x}[F_{\mathcal H}(z)]
=
|\mathcal H|(n)_v p^s\delta^s$,
when every $H\in\mathcal H$ has $v$ vertices and $s$ edges.

The main task is therefore to control the second moments. Expanding $F_{\mathcal H}(z)^2$ reduces the variance to counting pairs of embeddings with prescribed overlap structure, and the structure of the hypergraph family implies that these overlap counts are sparse enough. In the parameter regime of interest, this yields
\[
\frac{\Var_{\QQ}(F_{\mathcal H}(z))}{\E_{\PP_x}[F_{\mathcal H}(z)]^2}=o(1),
\qquad
\frac{\Var_{\PP_x}(F_{\mathcal H}(z))}{\E_{\PP_x}[F_{\mathcal H}(z)]^2}=o(1).
\]
By the standard mean--variance method, it follows that $F_{\mathcal H}(z)$ separates $\PP_x$ and $\QQ$ with vanishing error. Thus, the sample--time tradeoff in our algorithm is ultimately governed by the size and structure of the underlying bounded-width hypergraph family.
\paragraph{Recovery statistics and clean-up.}
The recovery statistic is a rooted variant of the detection statistic, based on a family $\mathcal J$ of bounded-width $k$-uniform hypergraphs in which all vertices have degree $2$ except for two distinguished leaves of degree $1$. For each ordered pair $a\neq b\in[n]$, we root these two leaves at $a$ and $b$, and define a character sum $F_{\mathcal J,a,b}(z)$ by aggregating over all rooted embeddings of patterns in $\mathcal J$. The key difference from detection is that the parity cancellation is now incomplete: all internal vertices still cancel, but the two leaves contribute
\[
\prod_{e\in E(J)} x_{\phi(e)}
=
x_ax_b,
\]
so under the planted model the statistic has mean proportional to $x_ax_b$, rather than a fixed positive bias. Thus the sign of $F_{\mathcal J,a,b}(z)$ predicts the pairwise product $x_ax_b$. As in the detection analysis, the main technical task is to control the second moment, and the bounded-width structure of $\mathcal J$ ensures that overlap patterns are sufficiently sparse, so that the statistic recovers $x_ax_b$ with probability $1-o(1)$ in the desired regime. Fixing an anchor vertex then yields a preliminary estimate $\hat x$ that is correct on a $1-o(1)$ fraction of coordinates, up to a global sign. Finally, we use a fresh independent batch of samples for a standard clean-up step: each coordinate is re-estimated by correlating the new constraints containing it against the preliminary assignment on the other $k-1$ variables. Since the preliminary estimate is already highly accurate, each such local vote is positively correlated with the true bit, and concentration upgrades weak recovery to exact recovery with high probability; when $k$ is odd, the same fresh samples also resolve the global sign ambiguity.
\paragraph{Algorithmic implementation: color coding and dynamic programming.}
Color coding, introduced by Alon, Yuster, and Zwick~\cite{alon1995color}, is a standard way to turn subgraph-counting problems into dynamic programs over color classes; it has since been used broadly in motif discovery and related counting tasks~\cite{alon2008biomolecular}. In average-case inference, closely related ideas have recently been combined with structured polynomial statistics, including Bayesian low-degree methods for stochastic block models~\cite{hopkins2017bayesian,mao2024testing,chen2026detecting}, smooth color-coded hypergraph statistics for tensor PCA and correlated spiked models~\cite{li2025smooth,li2025algorithmic}. We use the same general paradigm for noisy $k$-XOR. Since a direct computation of $F_{\mathcal H}(z)$ or $F_{\mathcal J,a,b}(z)$ would require enumerating all injective embeddings, we instead retain only \emph{colorful} embeddings under a random coloring $\tau:[n]\to[v]$, where $v$ is the number of vertices of the pattern. For any fixed embedding, the probability of being colorful is
$\rho=\frac{v!}{v^v}$,
so the colorful statistic is an unbiased estimator of the original one after rescaling by $1/\rho$; averaging over $t=\lceil 1/\rho\rceil$ independent colorings reduces the additional variance to a negligible level. The algorithmic advantage is that, once embeddings are required to be colorful, they can be computed by dynamic programming along the bounded-width decomposition of the pattern hypergraph. For detection, each $H\in\mathcal H$ decomposes into a cyclic chain of gadgets $U_1,\dots,U_\ell$; for recovery, each $J\in\mathcal J$ decomposes similarly into a path with two exposed leaves. In both cases, the dynamic program stores the contribution of a partial embedding as a function of its boundary vertices together with its used color set, and glues blocks together by summing over intermediate vertices and compatible color partitions. Since the width is controlled by the gadget size $D = rk$, the running time is dominated by the local enumeration inside one block, giving total time
\[
n^{D+3+o(1)}.
\]
Thus color coding serves two purposes simultaneously: it gives an unbiased approximation to the original character-sum statistic, and it converts the embedding problem into a bounded-width dynamic program, thereby yielding the claimed sample--time tradeoff.

\paragraph{On low-degree hardness.}
On the lower-bound side, we analyze the low-degree likelihood ratio in the Bernoulli sampling model. Since the coordinates are independent under the null law $\QQ$, the likelihood ratio $L=\mathrm{d}\PP/\mathrm{d}\QQ$ admits an orthogonal expansion in the natural product basis $\{\Phi_S\}$, indexed by sets $S$ of $k$-constraints. The Fourier coefficient of $L$ at $S$ is obtained by averaging the corresponding planted sign contribution over the hidden assignment $x$. This coefficient is nonzero exactly when every variable appears an even number of times across the constraints in $S$, or equivalently, when $S$ forms a $k$-uniform hypergraph in which every vertex has even degree. Thus the Fourier support of $L$ is precisely the family of vertex-even $k$-uniform hypergraphs. As a result, the degree-$D$ low-degree norm takes the form
\[
\|L_{\le D}\|_{L^2(\QQ)}^2
=
\sum_{\substack{S\text{ even}\\ |S|\le D}}
(p\delta^2)^{|S|},
\]
so the problem reduces to counting even $k$-uniform hypergraphs with at most $D$ edges. The main combinatorial input is an upper bound on the number of such hypergraphs: because every non-isolated vertex has degree at least $2$, a $t$-edge even hypergraph can involve only $O(kt)$ vertices, and a more careful counting argument yields the required estimate on their total number. Substituting this counting bound into the norm expression shows that the low-degree norm remains bounded, and in fact tends to $1$, whenever
\[
m\delta^2 \ll \frac{n^{k/2}}{D^{\,k/2-1}}.
\]
Under the low-degree heuristic, this implies that no algorithm running in time $n^{O(D)}$ should strongly distinguish $\PP$ from $\QQ$. This matches our upper bound up to the same $D^{k/2-1}$ factor, providing evidence that the algorithm is essentially optimal from the low-degree perspective.
\section{Preliminaries}
\paragraph{Basic notation.}
We write $\NN$ for the set of positive integers, and for $n\in\NN$ we write
$[n]:=\{1,2,\dots,n\}$. For a finite set $S$ and an integer $k\ge 0$, let
$\binom{S}{k}$ denote the family of all $k$-element subsets of $S$. For $k\ge 0$, we write $(n)_k:=n(n-1)\cdots(n-k+1)$,
with the convention $(n)_0:=1$.

We use the standard asymptotic notation $O(\cdot)$, $\Omega(\cdot)$,
$\Theta(\cdot)$, $o(\cdot)$, and $\omega(\cdot)$ with respect to the limit $n\to\infty$.
We also use $\widetilde O(\cdot)$, $\widetilde \Omega(\cdot)$, and
$\widetilde \Theta(\cdot)$ to suppress polylogarithmic factors in $n$.

\subsection{Hypergraph and isomorphism}
\paragraph{Hypergraphs.}
A \emph{$k$-uniform hypergraph} $H$ consists of a vertex set $V(H)$ and an edge set
$E(H)\subseteq \binom{V(H)}{k}$.
The \emph{degree} of a vertex $u\in V(H)$ is $\deg_H(u):=|\{e\in E(H): u\in e\}|$.
We say $H$ is \emph{$d$-regular} if $\deg_H(u)=d$ for all $u\in V(H)$.
We say $H$ is \emph{simple} if it has no repeated edges.


\paragraph{Isomorphism and labeled families.}
Let $H$ and $H'$ be $k$-uniform hypergraphs with $|V(H)|=|V(H')|=v$.
An \emph{isomorphism} from $H$ to $H'$ is a bijection $\pi:V(H)\to V(H')$ such that
\[
e\in E(H)\quad\Longleftrightarrow\quad \pi(e):=\{\pi(u):u\in e\}\in E(H').
\]
We write $H\cong H'$ if there exists an isomorphism between them.
When $V(H)=[v]$, we regard $H$ as a labeled hypergraph, and define
\[
\Aut(H)
:=
\Bigl\{\pi\in S_v:\ e\in E(H)\iff \pi(e):=\{\pi(u):u\in e\}\in E(H)\Bigr\}.
\]
be the automorphism group of $H$.
For two pattern hypergraphs $H,H'$ on $[v]$, let
\[
\Iso(H,H')
:=
\Bigl\{\pi\in S_v:\ e\in E(H)\iff \pi(e)\in E(H')\Bigr\}.
\]
be the set of graph isomorphisms from $H$ to $H'$.
In particular, $\Iso(H,H)=\Aut(H)$, and $\Iso(H,H')=\emptyset$ if $H\not\cong H'$.

\begin{definition}[Isomorphism-closed]\label{def:iso-closed}
Fix $v\in\mathbb N$ and view all patterns as labeled graphs on vertex set $[v]$.
A family $\mathcal H$ of labeled graphs on $[v]$ is called \emph{isomorphism-closed} if for every $H\in\mathcal H$ and every permutation
$\pi\in S_v$, the relabeled graph $\pi(H)$ also lies in $\mathcal H$, where
\[
V(\pi(H))=[v],
\qquad
E(\pi(H))=\{\pi(e):e\in E(H)\}.
\]
Equivalently, membership of $H$ in $\mathcal H$ depends only on the isomorphism type of $H$.
\end{definition}

Throughout the paper, the family $\mathcal H$ will be taken to be isomorphism-closed.

\begin{lemma}
Let $\mathcal H$ be an isomorphism-closed family of labeled $k$-uniform hypergraphs on $[v]$. Then for any fixed $H\in\mathcal H$,
\[
\sum_{H'\in\mathcal H} |\Iso(H,H')| = v!.
\]
\end{lemma}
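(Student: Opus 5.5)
The plan is a double-counting argument: partition the symmetric group $S_v$ according to where each permutation sends $H$. For each $\pi\in S_v$, let $\pi(H)$ denote the relabeled hypergraph from Definition~\ref{def:iso-closed}, with $V(\pi(H))=[v]$ and $E(\pi(H))=\{\pi(e):e\in E(H)\}$. I will show that the sets $\Iso(H,H')$, as $H'$ ranges over $\mathcal H$, form a partition of $S_v$. Summing their sizes then gives exactly $|S_v|=v!$.

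The key steps are as follows.

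\emph{Step 1.} For a fixed $\pi\in S_v$, I claim $\pi\in\Iso(H,H')$ if and only if $H'=\pi(H)$ as labeled hypergraphs. If $\pi\in\Iso(H,H')$, then $e\in E(H)\iff\pi(e)\in E(H')$. Since $\pi$ induces a bijection on $\binom{[v]}{k}$, this says $E(H')=\{\pi(e):e\in E(H)\}=E(\pi(H))$. Both hypergraphs have vertex set $[v]$, so $H'=\pi(H)$. The converse is immediate from the definition of $\pi(H)$.

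\emph{Step 2.} Consequently, each $\pi$ lies in at most one of the sets $\Iso(H,H')$, because labeled hypergraphs on $[v]$ are determined by their edge sets. Hence these sets are pairwise disjoint as $H'$ varies.

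\emph{Step 3.} This is the only place isomorphism-closedness is used. Because $\mathcal H$ is isomorphism-closed and $H\in\mathcal H$, we have $\pi(H)\in\mathcal H$ for every $\pi$. So every $\pi$ lies in exactly one set $\Iso(H,H')$ with $H'\in\mathcal H$, namely $H'=\pi(H)$. Therefore
\[
\sum_{H'\in\mathcal H}|\Iso(H,H')|=\Bigl|\bigsqcup_{H'\in\mathcal H}\Iso(H,H')\Bigr|=|S_v|=v!.
\]

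There is no real obstacle. The only point needing care is Step 1: the map $e\mapsto\pi(e)$ must be a bijection of $\binom{[v]}{k}$, so that the biconditional defining $\Iso$ pins down $E(H')$ completely. One should also note that $\mathcal H$ is a set of distinct labeled hypergraphs, so no $H'$ is counted twice.
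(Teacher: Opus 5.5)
Your proof is correct and follows essentially the paper's route. The paper states the lemma without a standalone proof, but when it computes $\Gamma_{\mathcal H}(H,\varphi)=v!$ it uses the same reasoning: it counts the $v!$ vertex relabelings $\pi(H)$, each of which lies in $\mathcal H$ by isomorphism-closedness. Your observation that $\pi\in\Iso(H,H')$ exactly when $H'=\pi(H)$, so that the sets $\Iso(H,H')$ partition $S_v$, makes that count precise, and because you count permutations rather than distinct relabeled hypergraphs, nontrivial automorphisms cause no overcounting.
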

\subsection{Statistical analysis framework}

\paragraph{Distributions.}
We write $\Ber(p)$ for the Bernoulli distribution on $\{0,1\}$ with
\[
\Pr[B=1]=p,
\qquad
\Pr[B=0]=1-p.
\]
We write $\Rad(1/2)$ for the Rademacher distribution on $\{\pm1\}$ with
\[
\Pr[Y=+1]=\Pr[Y=-1]=\frac12.
\]
More generally, $\Rad(\delta)$ denotes the $\{\pm1\}$-valued distribution with mean $\E[Y]=\delta$, namely
\[
\Pr[Y=+1]=\frac{1+\delta}{2},
\qquad
\Pr[Y=-1]=\frac{1-\delta}{2},
\qquad
\delta\in[-1,1].
\]
In particular, if $\delta=1-2\eta$, then $\Rad(\delta)$ corresponds to flipping a planted $\pm1$ label with probability $\eta$.

\begin{definition}
Let $\PP=\{\PP_n\}$ and $\QQ=\{\QQ_n\}$ be two sequences of distributions on $\Omega_n$, and let $\{\mathcal A_n\}$ be a sequence of tests, where each
\[
\mathcal A_n:\Omega_n\to\{0,1\}.
\]
We say that the sequence $\mathcal A_n$ \emph{weakly distinguishes} $\PP$ and $\QQ$ if
\[
\PP_n[\mathcal A_n=0]+\QQ_n[\mathcal A_n=1]\le 1-\Omega(1).
\]
We say that the sequence $\mathcal A_n$ \emph{strongly distinguishes} $\PP$ and $\QQ$ if
\[
\PP_n[\mathcal A_n=0]+\QQ_n[\mathcal A_n=1]=o(1).
\]
\end{definition}

\begin{lemma}\label{lem:mean-variance-separation}
Let $T_n:\Omega_n\to\mathbb R$ be any statistic with finite variance under both models, and assume that
\[
\E_{\PP_n}[T_n]>\E_{\QQ_n}[T_n].
\]
Define the midpoint threshold
\[
\tau_n:=\frac{\E_{\PP_n}[T_n]+\E_{\QQ_n}[T_n]}{2},
\]
and the test
\[
\mathcal A_n:=\mathbf 1\{T_n\ge \tau_n\}.
\]
Then
\[
\PP_n[\mathcal A_n=0]+\QQ_n[\mathcal A_n=1]
\le
\frac{
4\Var_{\PP_n}(T_n)+4\Var_{\QQ_n}(T_n)
}{
\bigl(\E_{\PP_n}[T_n]-\E_{\QQ_n}[T_n]\bigr)^2
}.
\]
In particular, if
\[
\bigl(\E_{\PP_n}[T_n]-\E_{\QQ_n}[T_n]\bigr)^2
\gg
\Var_{\PP_n}(T_n)+\Var_{\QQ_n}(T_n),
\]
then the sequence $\mathcal A_n$ strongly distinguishes $\PP$ from $\QQ$.
\end{lemma}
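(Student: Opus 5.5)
This follows from Chebyshev's inequality applied separately under each model, with the midpoint threshold chosen so that each error event forces a deviation of at least half the mean gap. Write $\mu_P:=\E_{\PP_n}[T_n]$, $\mu_Q:=\E_{\QQ_n}[T_n]$, and $\Delta:=\mu_P-\mu_Q>0$, so that $\tau_n-\mu_Q=\mu_P-\tau_n=\Delta/2$.

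\emph{Type I error.} The event $\{\mathcal A_n=1\}$ is $\{T_n\ge \tau_n\}$, i.e. $\{T_n-\mu_Q\ge \Delta/2\}$. This is contained in $\{|T_n-\mu_Q|\ge \Delta/2\}$. Chebyshev under $\QQ_n$ then gives
\[
\QQ_n[\mathcal A_n=1]\le \frac{\Var_{\QQ_n}(T_n)}{(\Delta/2)^2}=\frac{4\Var_{\QQ_n}(T_n)}{\Delta^2}.
\]

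\emph{Type II error.} The event $\{\mathcal A_n=0\}$ is $\{T_n<\tau_n\}$, i.e. $\{\mu_P-T_n>\Delta/2\}$. This is contained in $\{|T_n-\mu_P|\ge\Delta/2\}$. Chebyshev under $\PP_n$ gives
\[
\PP_n[\mathcal A_n=0]\le \frac{4\Var_{\PP_n}(T_n)}{\Delta^2}.
\]

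Adding the two bounds yields the stated inequality. The hypotheses guarantee that both Chebyshev applications are valid: the variances are finite by assumption, and $\Delta>0$.

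For the ``in particular'' clause, suppose $\Delta^2\gg \Var_{\PP_n}(T_n)+\Var_{\QQ_n}(T_n)$. Then the right-hand side is $o(1)$, which is exactly the definition of strong distinguishing. There is no real obstacle here. The only point requiring care is the direction of the inequalities at the threshold, where the closed event $T_n\ge\tau_n$ goes to type I and the open event to type II; both are absorbed by the non-strict Chebyshev event.
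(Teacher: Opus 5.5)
Your proof is correct and follows the paper's argument: both apply Chebyshev's inequality separately under $\PP_n$ and $\QQ_n$, using that the midpoint threshold forces a deviation of at least half the mean gap in each error event, and then add the two resulting bounds of $4\Var/\Delta^2$.
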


\begin{proof}
By Chebyshev's inequality,
\[
\PP_n[T_n<\tau_n]
=
\PP_n\!\left[
T_n-\E_{\PP_n}[T_n]
\le
-\frac{\E_{\PP_n}[T_n]-\E_{\QQ_n}[T_n]}{2}
\right]
\le
\frac{
4\Var_{\PP_n}(T_n)
}{
\bigl(\E_{\PP_n}[T_n]-\E_{\QQ_n}[T_n]\bigr)^2
}.
\]
Similarly,
\[
\QQ_n[T_n\ge\tau_n]
=
\QQ_n\!\left[
T_n-\E_{\QQ_n}[T_n]
\ge
\frac{\E_{\PP_n}[T_n]-\E_{\QQ_n}[T_n]}{2}
\right]
\le
\frac{
4\Var_{\QQ_n}(T_n)
}{
\bigl(\E_{\PP_n}[T_n]-\E_{\QQ_n}[T_n]\bigr)^2
}.
\]
Summing the two bounds proves the claim.
\end{proof}
\subsection{Equivalence of the fixed-\texorpdfstring{$m$}{m} and Bernoulli sampling models}\label{subsec:equi}

In this section, we compare three natural formulations of noisy $k$-XOR: the standard fixed-$m$ model with sampling with replacement, the fixed-$m$ model without replacement, and the Bernoulli-$p$ model. We will show that these models are equivalent for the purposes of our analysis.

Let
\[
\sM:=\binom{[n]}{k},
\qquad
M:=|\sM|=\binom{n}{k}.
\]
Fix an unknown planted vector $x\in\{\pm1\}^n$, and for each $\alpha\in\sM$, write
\[
x_\alpha:=\prod_{i\in\alpha}x_i\in\{\pm1\}.
\]
Let $\xi\sim\Rad(\delta)$; equivalently, the planted sign is flipped with probability $\eta=\frac{1-\delta}{2}$.
We now describe the three sampling models.

\paragraph{Model 1: fixed-$m$ sampling with replacement.}
We sample
\[
\alpha_1,\dots,\alpha_m \stackrel{\mathrm{i.i.d.}}{\sim} \Unif(\sM),
\]
and independently draw
\[
\xi_1,\dots,\xi_m \stackrel{\mathrm{i.i.d.}}{\sim} \Rad(\delta).
\]
Under the planted model $\PP_x$, the observation is the sequence
\[
\bigl((\alpha_t,z_t)\bigr)_{t=1}^m,
\qquad
z_t=x_{\alpha_t}\xi_t.
\]
Under the null model $\QQ$, the indices $\alpha_1,\dots,\alpha_m$ are sampled in the same way, while the labels $z(\alpha_t)$ are i.i.d. unbiased signs, independent of the sampled indices.

\paragraph{Model 2: fixed-$m$ sampling without replacement.}
We choose a uniformly random subset
\[
S_m\subseteq\sM,
\qquad
|S_m|=m.
\]
Under the planted model $\PP_x$, for each $\alpha\in S_m$ we observe
\[
z(\alpha)=x_\alpha \xi_\alpha,
\]
where the $(\xi_\alpha)_{\alpha\in S_m}$ are i.i.d. $\Rad(\delta)$. Under the null model $\QQ$, the same subset $S_m$ is sampled, but the labels $(z(\alpha))_{\alpha\in S_m}$ are independent unbiased signs.

\paragraph{Model 3: Bernoulli-$p$ sampling.}
For each $\alpha\in\sM$, independently sample
\[
b_\alpha\sim\Ber(p),
\qquad
\xi_\alpha\sim\Rad(\delta).
\]
Under the planted model $\PP_x$, we observe
\[
z(\alpha)=b_\alpha x_\alpha \xi_\alpha
\qquad
\text{for each }\alpha\in\sM.
\]
Equivalently, if
\[
S:=\{\alpha\in\sM:b_\alpha=1\},
\]
then the observed support is $S$, and each $\alpha\in S$ carries the label
\[
z(\alpha)=x_\alpha\xi_\alpha.
\]
Under the null model $\QQ$, the support $S$ is sampled in the same way, while conditional on $S$, the labels $(z(\alpha))_{\alpha\in S}$ are independent unbiased signs.

We will show that, in the nontrivial regime $m=o(M^{1/2})$,
and after matching the parameters via
\[
m=pM,
\qquad\text{equivalently}\qquad
p=\frac{m}{M},
\]
the above three models are asymptotically equivalent, up to an $o(m)$ discrepancy in the number of observed samples. Detailed proofs of these equivalences, together with the corresponding comparison results, are deferred to Appendix~\ref{appex:models}. In the rest of the paper, we work with the Bernoulli-$p$ model, with
\[
p=\frac{m}{\binom{n}{k}}.
\]
\section{Hypergraph Family and Statistics for Noisy $k$-XOR}

In this section, we introduce our hypergraph-based statistics for detection and recovery, and establish their theoretical guarantees.

\subsection{Hypergraph family for detection}
In this subsection, we describe the hypergraph family introduced in \cite{li2025smooth}, which was originally developed in the context of tensor PCA. In that work, the family is formulated in terms of unlabeled hypergraphs, and the relevant statistic is defined by counting labeled copies in $[n]$. Throughout this paper, by contrast, we work with labeled hypergraphs. This convention makes the notation cleaner and simplifies both the counting arguments and the analysis, while changing the resulting formulas only by the same combinatorial factors.

\begin{definition}\label{def:ugraph}
For any integer $r$, let $\mathcal U=\mathcal U(r,k)$ denote the family of labeled $k$-uniform hypergraphs $U$ such that:
\begin{enumerate}
\item $U$ has $rk+1$ vertices and $2r$ edges, and every vertex has degree either $1$ or $2$;
\item $U$ is connected;
\item deleting any single vertex leaves a connected hypergraph, that is, for every $v\in V(U)$, the hypergraph $U\setminus\{v\}$ remains connected.
\end{enumerate}
\end{definition}

Observe that
\[
\sum_{v\in V(U)} \deg_U(v)=k|E(U)|.
\]
Hence, condition~(1) implies that $U$ has exactly two vertices of degree $1$, while every other vertex has degree $2$. In other words, $U$ has exactly two leaves, and all remaining vertices are non-leaf vertices of degree $2$.

In \cite{li2025smooth}, Li proved the following lower bound on the size of $\mathcal U$.

\begin{lemma}\cite[Lemma 2.2]{li2025smooth}
For $k\ge 3$, there exists an absolute constant $C>0$ such that
\[
|\mathcal U|
\;\ge\;
C\cdot 
\frac{(2kr)!}{2^{rk}\,(k!)^{2r}\,(2r)!}.
\]
for sufficiently large $r>r(C)$.
\end{lemma}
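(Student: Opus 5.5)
We count labeled hypergraphs on the vertex set $[rk+1]$ by a configuration (pairing) model, in the spirit of the standard proof for random regular graphs. First we fix which two vertices are the leaves; there are $\binom{rk+1}{2}$ choices. This gives a degree sequence with two $1$'s and $rk-1$ entries equal to $2$, for a total of $2rk$ half-edges (points). We partition these $2rk$ points into $2r$ unordered blocks of size $k$. The number of such partitions is
\[
\frac{(2kr)!}{(k!)^{2r}(2r)!}.
\]
Each partition projects to a $k$-uniform multi-hypergraph with the prescribed degrees. Consider a partition whose projection is \emph{good}, meaning that no block contains two points of the same vertex, there are no repeated edges, and conditions (2)--(3) of Definition~\ref{def:ugraph} hold. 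Such a hypergraph arises from exactly $2^{rk-1}$ partitions, one for each way of swapping the two points of a degree-$2$ vertex; this uses that the edges are distinct and simple. Hence
\[
|\mathcal U| \;\ge\; \binom{rk+1}{2}\,2^{-(rk-1)}\cdot\frac{(2kr)!}{(k!)^{2r}(2r)!}\cdot \Pr[\text{good}],
\]
where the probability is over a uniform random partition. The claimed bound then follows once we show that $\Pr[\text{good}]$ is bounded below by an absolute constant for large $r$. In fact the prefactor $\binom{rk+1}{2}\cdot 2$ is large, so even a probability lower bound of order $1/r^2$ would suffice.

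To lower-bound $\Pr[\text{good}]$, we treat each failure mode separately.

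\begin{enumerate}
\item \textbf{Loops} (a block containing both points of one vertex). The expected number of loops is about $(rk)\cdot\frac{k-1}{2rk-1}\approx (k-1)/2$. This is a constant but not small, so we cannot use a union bound here. Instead we use the standard Poisson approximation: the number of loops is asymptotically $\mathrm{Poisson}((k-1)/2)$, so $\Pr[\text{no loops}]\to e^{-(k-1)/2}$.
\item \textbf{Repeated edges} (two blocks on the same vertex set). For $k\ge3$ this requires $k$ degree-$2$ vertices to be perfectly matched across two blocks. Its expectation is $O(r^{2}\cdot k!^2/(rk)^{k})\to0$ since $k\ge 3$.
\item \textbf{Connectivity} (condition (2)). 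A disconnected configuration needs a union of blocks closed under the pairing. Following the usual argument for random regular hypergraphs, the expected number of closed sub-configurations of $j$ blocks, summed over $1\le j\le r$, is $o(1)$ when $k\ge3$. The small cases are handled by direct counting, and the bulk is dominated geometrically.
\item \textbf{2-vertex-connectivity} (condition (3)). Deleting a vertex $u$ splits its two blocks apart. A cut therefore corresponds to a sub-configuration closed except for one pair of points. The same first-moment bound applies, with an extra factor of $O(rk)$ for the choice of $u$ and a factor of $1/(rk)$ saved by the constraint, so this expectation is also $o(1)$.
\end{enumerate}

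Combining these gives $\Pr[\text{good}]\ge e^{-(k-1)/2}-o(1)$. This is a constant depending on $k$. Together with the polynomial prefactor $\Theta(r^2k^2)$, it absorbs the $k$-dependence for $r$ large, so a single absolute $C$ works.

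The main obstacle is step 4, controlling vertex cuts uniformly over all sizes of the split. We first try a first-moment sum over $j$, the number of blocks on the smaller side. We would bound the probability that a fixed set of $j$ blocks is closed up to one pair by roughly $\binom{rk}{jk/2}^{-1}$-type quantities. We then check that the terms $j=1,2$ (the only near-constant ones) are $O(1/r)$ because $k\ge3$ forces each block to have at least $k-2\ge1$ extra pairings inside the small side.

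If the constant probability bound proves delicate, we fall back on the weaker polynomial bound $\Pr[\text{good}]\gtrsim 1/r^2$, which the $\binom{rk+1}{2}$ prefactor tolerates.
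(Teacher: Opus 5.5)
The paper does not prove this lemma; it quotes it from Li. Your argument therefore has to stand on its own.

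\textbf{What works.} Your skeleton is sound. With the leaf pair fixed, every member of $\mathcal U$ arises from exactly $2^{rk-1}$ partitions, so in fact
\[
|\mathcal U|=(rk+1)\,rk\cdot\Pr[\text{good}]\cdot\frac{(2kr)!}{2^{rk}(k!)^{2r}(2r)!}.
\]
It therefore suffices to show $\Pr[\text{good}]\ge c_k>0$ for $r\ge r_0(k)$. An $r_0$ depending on $k$ is unavoidable, as you implicitly assume. Steps~1--3 are fine.

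\textbf{The gap: step~4 when $k=3$.} You subtract \emph{unconditional} failure probabilities from $\Pr[\text{no loops}]\to e^{-1}$. But for $k=3$ a loop is itself a cut. If a block contains both points of a vertex $w$, its third point belongs to some vertex $u$. That single block is then a sub-configuration closed except for the pair at $u$, or fully closed if the third point is a leaf point.
\begin{itemize}
\item This is exactly your $j=1$ term. Its expectation is about $(3r-1)\cdot\frac{2}{6r-1}\to 1$, not $O(1/r)$.
\item The cut event it counts has probability at least $1-e^{-1}-o(1)>e^{-1}$. So the bound $\Pr[\text{good}]\ge e^{-1}-\Pr[\text{cut}]-\cdots$ is vacuous.
\item Your heuristic ``factor $O(rk)$ for the choice of $u$, factor $1/(rk)$ saved'' is also off. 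A near-closed $j$-set needs one fewer point matched internally than a closed one. For fixed $j,k$ its first moment is therefore larger than the closed-set bound by a factor of order $\sqrt r$, so step~3's estimate does not transfer.
\end{itemize}

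\textbf{The repair.} Intersect with the no-loop event $L^c$ before taking the union bound:
\[
\Pr[\text{good}]\ \ge\ \Pr[L^c]-\Pr[L^c\cap R]-\Pr[L^c\cap D]-\Pr[L^c\cap B],
\]
where $R,D,B$ are the repeated-edge, disconnection and cut-vertex events. Then bound $\Pr[L^c\cap B]$ by the expected number of \emph{loop-free} near-closed sub-configurations.
\begin{itemize}
\item For $j=1$, let $\lambda$ be the number of leaf points in the block. Its $k-1-\lambda$ internally matched points would all have to lie in loops. The only surviving case is $k=3$, $\lambda=2$, which has probability $O(1/r)$.
\item For small $j\ge 2$ the terms are $O(1/r)$, and the tail is handled as in step~3.
\end{itemize}
This gives $\Pr[\text{good}]\ge e^{-(k-1)/2}-o(1)$. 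Your fallback $\Pr[\text{good}]\gtrsim r^{-2}$, which you did not argue, is then unnecessary.

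\textbf{A cleaner alternative.} Pass to the dual: blocks are nodes and the degree-$2$ vertices of $U$ are edges. Join the two leaf points into one extra edge. The random partition then becomes the configuration model of a random $k$-regular multigraph $G$ on $2r$ nodes. $U$ is good whenever $G$ is simple and $3$-edge-connected. In that case $G$ minus the extra edge is loopless, has no $k$-fold edge, is connected and is bridgeless, which are exactly your four conditions. Two classical facts finish the argument. $G$ is simple with probability tending to $e^{-(k^2-1)/4}$. Conditioned on simplicity, $G$ is $k$-connected w.h.p.\ for $k\ge 3$. Together these give $\Pr[\text{good}]\ge e^{-(k^2-1)/4}-o(1)$ directly.
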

Using these gadgets, we now construct the hypergraph family below.
\begin{definition}\label{def:family}
For any integers $r,\ell$, let $\mathcal H=\mathcal H(r,k,\ell)$ denote the family of labeled $k$-uniform hypergraphs $H$ for which there exist subhypergraphs
\[
U_1,\dots,U_\ell
\]
satisfying the following properties:
\begin{enumerate}
\item $U_i \in \mathcal U(r,k)$ for every $1\le i\le \ell$;
\item if the two leaves of $U_i$ are denoted by $L(U_i)=\{v_i,v_{i+1}\}$,
then
\[
V(U_i)\cap V(U_{i+1})=\{v_{i+1}\}
\qquad\text{for all }1\le i\le \ell,
\]
where we adopt the cyclic convention $v_{\ell+1}=v_1$ and $U_{\ell+1}=U_1$;
\item
\[
V(H)=\bigcup_{i=1}^{\ell} V(U_i),
\qquad
E(H)=\bigcup_{i=1}^{\ell} E(U_i).
\]
\end{enumerate}
\end{definition}

The above definition immediately implies that every $H\in\mathcal H$ has
\[
|V(H)|=rk\ell,\qquad |E(H)|=2r\ell,
\]
and every vertex has degree $2$, namely
\[
\deg_H(v)=2 \qquad\text{for all } v\in V(H).
\]
The next lemma gives a corresponding lower bound on the size of $\mathcal H$, showing that this family also contains sufficiently many labeled hypergraphs.

\begin{lemma}\cite[Lemma 2.4]{li2025smooth}
There exists an absolute constant $C>0$ such that for $k\ge 3$, 
\[
|\mathcal H|
\ge
\frac{1}{(2\ell)^2}\cdot
\left(C\cdot
\frac{(2kr)!}{2^{rk}\,(k!)^{r} \,(2r)!}
\right)^{\ell},
\]
for sufficiently large $r>r(C)$.
\end{lemma}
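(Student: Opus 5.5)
We will prove the lower bound on $|\mathcal H|$ by an explicit gluing construction that turns an $\ell$-tuple of gadgets from $\mathcal U(r,k)$ into a member of $\mathcal H(r,k,\ell)$. We then bound the overcounting, and finally apply the lower bound on $|\mathcal U|$ from \cite[Lemma 2.2]{li2025smooth}. Set $v:=rk\ell=|V(H)|$. We partition $[v]$ into a cyclic sequence of "joint" vertices $v_1,\dots,v_\ell$ and $\ell$ disjoint blocks $B_1,\dots,B_\ell$, each of size $rk-1$. For each $i$, the gadget $U_i$ will live on the vertex set $B_i\cup\{v_i,v_{i+1}\}$, which has $rk+1$ vertices, and its two leaves will be $v_i$ and $v_{i+1}$. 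Under this arrangement consecutive gadgets share exactly the joint vertex $v_{i+1}$, so property~(2) of Definition~\ref{def:family} holds. Every joint vertex then has degree $1+1=2$, and the union is $2$-regular with $2r\ell$ edges.

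For the counting, we first fix the partition: the positions of the joints and the blocks. For each $i$ we then choose $U_i$ to be any gadget in $\mathcal U(r,k)$, relabeled onto $B_i\cup\{v_i,v_{i+1}\}$ so that its leaves land on $v_i$ and $v_{i+1}$. Because $\mathcal U$ is closed under relabeling, the number of such gadgets with a prescribed leaf pair is $|\mathcal U|/\binom{rk+1}{2}$ up to the ordering of the two leaves. The number of vertex arrangements is
\[
\frac{v!}{((rk-1)!)^{\ell}\cdot 2\ell},
\]
where the factor $2\ell$ accounts for the cyclic rotation and reflection of the chain. Taking the product over $i$, the total number of constructed tuples is $v!\,\ell^{-1}(\ldots)$ times $\prod_i |\mathcal U|\cdot \frac{2}{(rk+1)rk}$.

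Next we must handle overcounting: a single $H$ may arise from several decompositions $(U_1,\dots,U_\ell)$. The key point is that the decomposition is essentially determined by the cut vertices. Condition~(3) of Definition~\ref{def:ugraph} makes each gadget $2$-connected internally, so in $H$ the joint vertices are exactly the vertices whose removal, in pairs, disconnects $H$. The main obstacle is to show that the number of decompositions per $H$ is at most $\mathrm{poly}(\ell)$. Given the set of joints, the cyclic structure is fixed up to $2\ell$ symmetries. The joints themselves are characterized as the vertices $u$ such that $\{u,u'\}$ separates $H$ for some other joint $u'$; by the $2$-connectivity of each gadget, no interior vertex of a gadget has this property when $\ell\ge 3$. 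We would argue this by noting that removing an interior vertex of $U_i$ leaves $U_i$ connected, so the cycle of blocks remains connected after one further removal. Hence each $H$ is counted at most $2\ell$ times, and together with the arrangement count this produces the $(2\ell)^{-2}$ prefactor.

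Finally, we substitute $|\mathcal U|\ge C\frac{(2kr)!}{2^{rk}(k!)^{2r}(2r)!}$ into the count. We then use $v!/((rk-1)!)^\ell\ge \bigl((rk-1)!\bigr)^{0}\cdots$, and more precisely we absorb the ratio $\frac{v!}{((rk-1)!)^\ell}\cdot\bigl(\frac{2}{(rk+1)rk}\bigr)^\ell$ against labeled-count normalizations. The aim is to turn the factor $(k!)^{2r}$ into $(k!)^r$ per gadget, adjusting $C$ so that the polynomial losses in $rk$ are absorbed. This bookkeeping is routine but delicate, and we would double-check it against the labeled-versus-unlabeled convention of \cite{li2025smooth}. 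If the exponent does not reconcile directly, we would fall back on citing the unlabeled count from \cite[Lemma 2.4]{li2025smooth} and multiplying by $v!/|\Aut(H)|$ with $|\Aut(H)|\le (2\ell)\cdot\mathrm{poly}$.
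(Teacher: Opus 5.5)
The paper gives no argument for this lemma; it imports the bound from \cite{li2025smooth}. Your gluing-plus-overcount scheme is the natural route, but the step you defer as ``routine but delicate bookkeeping'' is exactly where it breaks, and no bookkeeping can repair it.

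Passing from $(k!)^{2r}$ to $(k!)^{r}$ means gaining a factor $(k!)^{r}$ per gadget. This is exponential in $r$, so neither $C^{\ell}$ nor polynomial losses in $rk$ can absorb it. The only slack your count has over $|\mathcal U|^{\ell}$ is the arrangement factor $\frac{v!}{((rk-1)!)^{\ell}}=(rk)^{\ell}\binom{v}{rk,\dots,rk}\le (rk)^{\ell}\ell^{rk\ell}$, i.e.\ at most $rk\,\ell^{rk}$ per gadget.

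Read as an upper bound, the same count actually refutes the displayed inequality. Every $H\in\mathcal H$ arises from at least one tuple (joint sequence, blocks of size $rk-1$, gadgets), since $|V(H)|=rk\ell$ forces the blocks to be disjoint. A gadget on a fixed $(rk+1)$-set with prescribed leaves is a simple hypergraph with degree sequence $(2,\dots,2,1,1)$. By the configuration count, there are at most $\frac{(2rk)!}{(2r)!\,(k!)^{2r}\,2^{rk-1}}$ of them, since each arises from exactly $2^{rk-1}$ stub partitions. Hence
\[
|\mathcal H|\;\le\;\Bigl(2rk\,\ell^{rk}\cdot\frac{(2kr)!}{2^{rk}\,(k!)^{2r}\,(2r)!}\Bigr)^{\ell}.
\]
The claimed lower bound would then force $C\,(k!)^{r}\le 8rk\,\ell^{rk}$. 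This fails for large $r$ whenever $k!>\ell^{k}$, e.g.\ $\ell=2$ and any $k\ge 4$. So, with $(k!)^{r}$, the statement is false in part of the range $\ell\ge 2$ that the paper uses later; presumably the intended exponent is $2r$. Your fallback of citing \cite[Lemma 2.4]{li2025smooth} is circular, since that is the statement itself.

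What your construction does prove, once each $H$ is shown to have a unique decomposition up to the $2\ell$ rotations and reflections, is
\[
|\mathcal H|\;\ge\;\frac{1}{2\ell}\binom{v}{rk,\dots,rk}\Bigl(\frac{2}{rk+1}\Bigr)^{\ell}|\mathcal U|^{\ell}\;\ge\;\frac{1}{2\ell}\,|\mathcal U|^{\ell},
\]
using $\binom{v}{rk,\dots,rk}\ge 2^{rk(\ell-1)}$. This is the bound with $(k!)^{2r}$, and it implies the $(k!)^{r}$ form only when $\ell^{k}$ dominates $k!$. You also divide by the same $2\ell$ symmetry twice, once in the arrangement count and once for overcounting. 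That is harmless for a lower bound, but it is the only source of your $(2\ell)^{-2}$.

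The uniqueness step also needs repair. Characterizing joints as vertices forming a separating pair ``with some other joint'' presupposes the joints. Separating pairs alone do not identify them either: condition~(3) only excludes single-vertex cuts, so two interior vertices of one $U_i$ may well separate $H$. A sound argument compares sizes:
\begin{itemize}
\item If $u$ is interior to $U_i$, then for any $w$, $H\setminus\{u,w\}$ keeps everything outside $U_i$ in one component. Any other component therefore has at most $rk-2$ vertices.
\item Removing two joints of any valid decomposition leaves exactly two components, each containing a full block of $rk-1$ vertices.
\end{itemize}
Hence a second decomposition has the same joint set, and then the same gadgets. This argument also covers $\ell=2$, which your sketch excluded.
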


\subsection{Statistical analysis for detection}
As discussed in Section~\ref{subsec:equi}, the three sampling models introduced there are asymptotically equivalent in the regime relevant to our results. Accordingly, throughout the remainder of the paper we work with the Bernoulli sampling model.

Under the planted model $\PP_x$, for each $\alpha\in\binom{[n]}{k}$ we observe
\[
z(\alpha)=b_\alpha x_\alpha \xi_\alpha,
\]
where $b_\alpha\sim\Ber(p)$, and $\xi_\alpha\sim\Rad(\delta)$ independently over $\alpha$, and $x_\alpha:=\prod_{i\in\alpha}x_i$
for an unknown planted vector $x\in\{\pm1\}^n$.

Under the null model $\QQ$, for each $\alpha\in\binom{[n]}{k}$ we instead observe
\[
z(\alpha)=b_\alpha \zeta_\alpha,
\]
where $b_\alpha\sim\Ber(p)$ and
$\zeta_\alpha\sim\Rad(0)$
independently over $\alpha$.
\paragraph{Pattern family and statistic.}
Let $\mathcal H$ be a finite, isomorphism-closed family of labeled $k$-uniform,
$2$-regular hypergraphs. For simplicity, assume that every $H\in\mathcal H$ is
\emph{simple} (i.e., has no repeated edges) and shares the same parameters
$|V(H)|=v$,
$|E(H)|=s$.
For each $H\in\mathcal H$, define the embedding statistic
\[
T_H(z)
\;:=\;
\sum_{\varphi:V(H)\hookrightarrow[n]}
\ \prod_{e\in E(H)} z(\varphi(e)),
\qquad
F_{\mathcal H}(z)
\;:=\;
\sum_{H\in\mathcal H} T_H(z),
\]
where the sum ranges over injections $\varphi:V(H)\hookrightarrow[n]$ and we write
$\varphi(e):=\{\varphi(u):u\in e\}\in\binom{[n]}{k}$ for the induced image of an edge.

Next we analyze the first- and second-moment behavior under these two distributions. Here $x$ is treated as fixed throughout, and the randomness is over $(b_\beta,\xi_\beta)_{\beta\in\binom{[n]}{k}}$ and $(b_\beta,\zeta_\beta)_{\beta\in\binom{[n]}{k}}$.
\paragraph{Expectation under $\QQ$.}
Fix $H\in\mathcal H$ and an injection $\varphi:V(H)\hookrightarrow[n]$.
Since $H$ is simple and $\varphi$ is injective, the $k$-sets
$\{\varphi(e):e\in E(H)\}$ are all distinct. Under $\QQ$, the random variables
$\{z(\alpha)\}_{\alpha\in\mathcal A}$ are independent and satisfy
$\E_{\QQ}[z(\alpha)]=0$. Hence
\[
\E_{\QQ}\!\left[\prod_{e\in E(H)} z(\varphi(e))\right]
=
\prod_{e\in E(H)} \E_{\QQ}[z(\varphi(e))]
=
0.
\]
Summing over embeddings gives
\[
\E_{\QQ}[T_H(z)]=0,
\qquad
\E_{\QQ}[F_{\mathcal H}(z)]=0.
\]

\paragraph{Expectation under $\PP_x$}
Fix $H\in\mathcal H$ and an injection $\varphi:V(H)\hookrightarrow[n]$.
Under $\PP_x$ we have $\E_{\PP_x}[z(\alpha)]=p\delta x_\alpha$ for each $\alpha$, and the family $\{z(\alpha)\}$
is independent across distinct $\alpha$. Therefore
\[
\begin{aligned}
\E_{\PP_x}\!\left[\prod_{e\in E(H)} z(\varphi(e))\right]
&=
\prod_{e\in E(H)} \E_{\PP_x}[z(\varphi(e))]
=
\prod_{e\in E(H)} (p\delta x_{\varphi(e)})\\
&=
(p\delta)^s \cdot \prod_{e\in E(H)} x_{\varphi(e)}.
\end{aligned}
\]
Moreover,
\[
\prod_{e\in E(H)} x_{\varphi(e)}
=
\prod_{e\in E(H)} \prod_{u\in e} x_{\varphi(u)}
=
\prod_{u\in V(H)} x_{\varphi(u)}^{\deg_H(u)}.
\]
Since each $H\in\mathcal H$ is $2$-regular, $\deg_H(u)=2$ for all $u$ and hence
$x_{\varphi(u)}^{\deg_H(u)}=1$. It follows that
\[
\E_{\PP_x}\!\left[\prod_{e\in E(H)} z(\varphi(e))\right]
=
(p\delta)^s.
\]
Summing over the $(n)_v$ injections $\varphi:V(H)\hookrightarrow[n]$ yields
\[
\E_{\PP_x}[T_H(z)]=(n)_v\,(p\delta)^s,
\qquad
\E_{\PP_x}[F_{\mathcal H}(z)]=|\mathcal H|\,(n)_v\,(p\delta)^s.
\]

\paragraph{Variance under $\QQ$}
Since $\E_{\QQ}[F_{\mathcal H}(z)]=0$, we have
\[
\Var_{\QQ}(F_{\mathcal H}(z))=\E_{\QQ}\!\big[F_{\mathcal H}(z)^2\big].
\]
Expanding the square gives
\begin{equation}\label{eq:F-second-expand}
\E_{\QQ}[F_{\mathcal H}(z)^2]
=\sum_{H,H'\in\mathcal H}\ \sum_{\varphi:V(H)\hookrightarrow[n]}\ \sum_{\psi:V(H')\hookrightarrow[n]}
\E_{\QQ}\!\left[\ \prod_{e\in E(H)} z(\varphi(e))\ \prod_{e'\in E(H')} z(\psi(e'))\right].
\end{equation}
Fix $H,H'\in\mathcal H$ and injections $\varphi,\psi$.
Define the induced multiset of $k$-sets
\[
\mathcal M(\varphi,\psi)
:=\{\varphi(e):e\in E(H)\}\ \cup\ \{\psi(e'):e'\in E(H')\},
\]
and let $m_\alpha(\varphi,\psi)$ denote the multiplicity of $\alpha$ in
$\mathcal M(\varphi,\psi)$. By independence across distinct $\alpha$,
\[
\E_{\QQ}\!\left[\ \prod_{e\in E(H)} z(\varphi(e))\ \prod_{e'\in E(H')} z(\psi(e'))\right]
=
\prod_{\alpha\in\mathcal A}\E_{\QQ}\!\big[z(\alpha)^{m_\alpha(\varphi,\psi)}\big].
\]
If there exists $\alpha$ with $m_\alpha(\varphi,\psi)$ odd, then
$\E_{\QQ}[z(\alpha)^{m_\alpha(\varphi,\psi)}]=0$ and the whole product vanishes.
Thus the expectation can be nonzero only if every induced $k$-set appears with even
multiplicity and each term is nonzero. Then each multiplicity
$m_\alpha(\varphi,\psi)\in\{0,1,2\}$, so the even-multiplicity condition is equivalent to
\begin{equation}\label{eq:edge-set-match}
\{\varphi(e):e\in E(H)\}=\{\psi(e'):e'\in E(H')\}
\qquad\text{as (unordered) sets.}
\end{equation}
When \eqref{eq:edge-set-match} holds, each of the $s$ distinct induced edges appears
exactly twice, and hence
\begin{equation}\label{eq:match-contribution}
\E_{\QQ}\!\left[\ \prod_{e\in E(H)} z(\varphi(e))\ \prod_{e'\in E(H')} z(\psi(e'))\right]
=
\prod_{e\in E(H)} \E_{\QQ}[z(\varphi(e))^2]
=
p^s.
\end{equation}
Because $\mathcal H$ is isomorphism-closed on $[v]$, for any fixed $H\in\mathcal H$
and injection $\varphi:V(H)\hookrightarrow[n]$, the number of pairs $(H',\psi)$ with
$H'\in\mathcal H$ and $\psi:V(H')\hookrightarrow[n]$ satisfying
\eqref{eq:edge-set-match} is exactly the number of vertex relabelings of $H$, namely $v!$.
Equivalently, defining
\[
\Gamma_{\mathcal H}(H,\varphi)
:=\#\Big\{(H',\psi):\ H'\in\mathcal H,\ \psi:V(H')\hookrightarrow[n],\
\{\psi(e')\}_{e'\in E(H')}=\{\varphi(e)\}_{e\in E(H)}\Big\},
\]
we have
\begin{equation}\label{eq:Gamma-vfact}
\Gamma_{\mathcal H}(H,\varphi)=v!.
\end{equation}
Plugging \eqref{eq:match-contribution} and \eqref{eq:Gamma-vfact} into
\eqref{eq:F-second-expand} yields
\[
\E_{\QQ}[F_{\mathcal H}(z)^2]
=\sum_{H\in\mathcal H}\ \sum_{\varphi:V(H)\hookrightarrow[n]} \Gamma_{\mathcal H}(H,\varphi)\cdot p^s
=\sum_{H\in\mathcal H} (n)_v\cdot v!\cdot p^s
=|\mathcal H|\,(n)_v\,v!\,p^s.
\]
Therefore,
\[
\Var_{\QQ}(F_{\mathcal H}(z)) = |\mathcal H|\,(n)_v\,v!\,p^s.
\]

\paragraph{Variance under $\PP_x$}
Fix any secret $x\in\{\pm 1\}^n$. We have
\[
\Var_{\PP_x}\!\bigl(F_{\mathcal H}(z)\bigr)
=
\E_{\PP_x}\!\bigl[F_{\mathcal H}(z)^2\bigr]
-
\E_{\PP_x}\!\bigl[F_{\mathcal H}(z)\bigr]^2.
\]
Expanding the second moment gives
\[
\E_{\PP_x}\!\bigl[F_{\mathcal H}(z)^2\bigr]
=
\sum_{H,H'\in\mathcal H}
\ \sum_{\varphi:V(H)\hookrightarrow[n]}
\ \sum_{\psi:V(H')\hookrightarrow[n]}
\E_{\PP_x}\!\left[
\prod_{e\in E(H)} z\!\bigl(\varphi(e)\bigr)
\prod_{e'\in E(H')} z\!\bigl(\psi(e')\bigr)
\right].
\]
Fix $H,H'\in\mathcal H$ and injective maps
$\varphi:V(H)\hookrightarrow[n]$, $\psi:V(H')\hookrightarrow[n]$.
Let
\[
E(\varphi(H)):=\{\varphi(e):e\in E(H)\},
\qquad
E(\psi(H')):=\{\psi(e'):e'\in E(H')\}
\]
denote the corresponding image edge sets, viewed as collections of $k$-subsets of $[n]$. Define
\[
j:=\bigl|E(\varphi(H))\cap E(\psi(H'))\bigr|.
\]
Since $|E(H)|=|E(H')|=s$, it follows that
\[
\bigl|E(\varphi(H))\cup E(\psi(H'))\bigr|=2s-j,
\qquad
\bigl|E(\varphi(H))\triangle E(\psi(H'))\bigr|=2(s-j).
\]
Since $H$ and $H'$ are simple and $\varphi,\psi$ are injective, every induced $k$-edge in
$E(\varphi(H))\cup E(\psi(H'))$
appears with multiplicity at most $2$. Using $b_\alpha^2=b_\alpha$,
$x_\alpha^2=\xi_\alpha^2=1$,
together with independence across distinct $\alpha$, we obtain
\[
\E_{\PP_x}\!\left[
\prod_{e\in E(H)} z\!\bigl(\varphi(e)\bigr)
\prod_{e'\in E(H')} z\!\bigl(\psi(e')\bigr)
\right]
=
p^{\,|E(\varphi(H))\cup E(\psi(H'))|}
\delta^{\,|E(\varphi(H))\triangle E(\psi(H'))|}
\prod_{\alpha\in E(\varphi(H))\triangle E(\psi(H'))} x_\alpha.
\]
In particular, since the symmetric difference $E(\varphi(H))\triangle E(\psi(H'))$
is vertex-even, then
\[
\E_{\PP_x}\!\left[
\prod_{e\in E(H)} z\!\bigl(\varphi(e)\bigr)
\prod_{e'\in E(H')} z\!\bigl(\psi(e')\bigr)
\right]
=
p^{\,2s-j}\delta^{\,2s-2j},
\]
where
\[
j=\bigl|E(\varphi(H))\cap E(\psi(H'))\bigr|.
\]
Accordingly, we may write
\[
\Var_{\PP_x}\!\bigl(F_{\mathcal H}(z)\bigr)
=
\sum_{H,H'\in\mathcal H}\ \sum_{\varphi}\ \sum_{\psi}
\left(
p^{\,2s-j(\varphi,\psi)}\delta^{\,2s-2j(\varphi,\psi)}
-(p\delta)^{2s}
\right),
\]
where
\[
j(\varphi,\psi):=\bigl|E(\varphi(H))\cap E(\psi(H'))\bigr|.
\]
For a pair of injections $\varphi:V(H)\hookrightarrow[n]$ and $\psi:V(H')\hookrightarrow[n]$,
let
\[
i:=|\varphi(V(H))\cap \psi(V(H'))|\in\{0,1,\dots,v\}.
\]
The number of ordered pairs of vertex images with intersection size $i$ equals
\[
N(i,j)
:=
\#\Bigl\{
(H,H',\varphi,\psi):
|\varphi(V(H))\cap \psi(V(H'))|=i,\ 
|E(\varphi(H))\cap E(\psi(H'))|=j
\Bigr\}.
\]
Thus
\begin{equation}\label{eq:VarP-i-j}
\begin{aligned}
\Var_{\PP_x}\!\big(F_{\mathcal H}(z)\big)
&=
\sum_{i=0}^{v} 
\sum_{j=0}^{s}N(i,j)\Big(p^{2s-j}\delta^{2s-2j}-(p\delta)^{2s}\Big)\\
&\leq
\sum_{i=1}^{v} 
\sum_{j=1}^{s} N(i,j)\cdot p^{2s-j}\delta^{2s-2j}.
\end{aligned}
\end{equation}

Since every $H\in\mathcal H$ is connected, any proper subset of $i$ vertices must be incident to at least one hyperedge leaving the subset. This yields the following structural restriction.

\begin{lemma}\label{lem:Nij-zero}
For any pair $(i,j)$ with
\[
2 i\leq jk+1,
\]
we have
\[
N(i,j)=0,
\]
except possibly in the diagonal case $(i,j)=(v,s)$. Moreover,
\[
N(v,s)=|\mathcal H|\,(n)_v\,v!.
\]
\end{lemma}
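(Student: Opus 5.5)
The approach is a degree-counting argument on the pullback of the common vertex set, combined with the connectivity and no-cut-vertex properties of the family $\mathcal H$. Throughout I read the lemma in the range that matters for \eqref{eq:VarP-i-j}, namely $j\ge 1$. The case $i=j=0$ (disjoint images) trivially has $N(0,0)\neq 0$, so it must be excluded; it does not enter the bound in \eqref{eq:VarP-i-j}, which already restricts to $i,j\ge 1$.

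\textbf{Setup.} Fix a quadruple $(H,H',\varphi,\psi)$ with $|\varphi(V(H))\cap\psi(V(H'))|=i$ and $|E(\varphi(H))\cap E(\psi(H'))|=j\ge1$. Pull the common vertex set back through $\varphi$ to a set $A\subseteq V(H)$ with $|A|=i$. Let $S\subseteq E(H)$ be the $j$ edges whose images are shared. Every shared edge lies in both images, so every $e\in S$ satisfies $e\subseteq A$. Define the \emph{deficit}
\[
\Delta:=\sum_{u\in A}\bigl(2-\deg_S(u)\bigr)=2i-jk.
\]
This identity holds because $\deg_H\equiv2$ and $\sum_{u\in A}\deg_S(u)=jk$. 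Each term of $\Delta$ is nonnegative, and it counts the incidences between $A$ and the edges of $E(H)\setminus S$. The hypothesis $2i\le jk+1$ therefore says exactly that $\Delta\le 1$.

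\textbf{Key step: forcing $A=V(H)$.} Suppose $A\neq V(H)$. Since $j\ge1$ we have $|A|\ge k\ge3$, and $A$ is nonempty. Because $H$ is connected, some edge $e$ meets both $A$ and $V(H)\setminus A$. Such an edge is not in $S$, so it contributes at least one unit to $\Delta$. With $\Delta\le1$, this edge is the only non-shared edge touching $A$, and it meets $A$ in exactly one vertex $u$. Consequently every edge meeting $A\setminus\{u\}$ lies in $S$ and hence inside $A$. So deleting $u$ disconnects the nonempty set $A\setminus\{u\}$ from the nonempty set $V(H)\setminus A$.

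I therefore need that $H$ has no cut vertex. This follows from Definition~\ref{def:family}: each gadget $U_i\in\mathcal U(r,k)$ stays connected after deleting any vertex. The gadgets are glued cyclically at their leaves, so deleting a vertex inside one gadget leaves that gadget connected and the cycle of gadgets intact. Deleting a junction vertex $v_{i+1}$ leaves $U_i$ and $U_{i+1}$ individually connected, and they remain joined through the rest of the cycle. I expect this verification to be the main point requiring care, especially the degenerate small-$\ell$ cases of the cyclic gluing. It contradicts the disconnection above, so $A=V(H)$ and $i=v$.

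\textbf{Conclusion.} With $A=V(H)$, $\Delta=2v-jk=sk-jk=(s-j)k$. Since $k\ge3$, the condition $\Delta\le1$ forces $j=s$, which is the diagonal case $(i,j)=(v,s)$. For the diagonal count, $j=s$ means the two image edge sets coincide, which is condition \eqref{eq:edge-set-match}. By \eqref{eq:Gamma-vfact}, for each of the $|\mathcal H|(n)_v$ choices of $(H,\varphi)$ there are exactly $v!$ compatible pairs $(H',\psi)$, and all of these automatically have $i=v$. This gives $N(v,s)=|\mathcal H|\,(n)_v\,v!$.
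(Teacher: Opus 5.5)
Your proof is correct, and it is more complete than the justification the paper gives. The paper has no separate proof of this lemma, only the remark that connectivity of $H$ forces every proper vertex subset to meet an edge leaving it. That remark yields one unit of deficit, i.e.\ $jk\le 2i-1$ for non-diagonal overlaps. This rules out $2i\le jk$ but not the boundary case $2i=jk+1$. Parity closes the gap when $k$ is even, but not when $k$ and $j$ are both odd.

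Your argument supplies exactly the missing unit. A deficit of $1$ would put the unique leaving incidence at a single vertex $u$ separating $A\setminus\{u\}$ from $V(H)\setminus A$. That contradicts the absence of cut vertices, which you correctly derive from property~(3) of Definition~\ref{def:ugraph} and the cyclic gluing. For $\ell=2$ the definition must be read as $U_1,U_2$ sharing both leaves, and your check goes through unchanged. The resulting bound $jk\le 2i-2$ is exactly what is used later via $\lfloor(2i-2)/k\rfloor$ in Lemma~\ref{lem:Nij-sum} and Appendix~\ref{appex:mean-var}. It is also sharp: re-embedding the interior of a single gadget gives $jk=2i-2$.

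The paper's proof of the rooted analogue, Lemma~\ref{lem:Nij-zero-recovery}, tries to gain extra deficit by subtracting leaving edges from both copies $\phi(J)$ and $\psi(J')$. Since the deficit is the same quantity whichever copy it is measured in, those subtractions do not add. Indeed, re-embedding the interior of one gadget of $J$ already gives $jk=2i-4$ there. So your single-copy, cut-vertex argument is the right template, not that appendix proof.

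Finally, your exclusion of $(i,j)=(0,0)$ is correct. The statement as written fails there, but harmlessly, since \eqref{eq:VarP-i-j} only involves $i,j\ge 1$.
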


We also record the following crude counting bound.

\begin{lemma}\label{lem:Nij-sum}
For every $1\le i<v$,
\[
\sum_{j=1}^{\lfloor(2i-2)/k\rfloor} N(i,j)
\le
|\mathcal H|^2\,(n)_v\,(n)_{v-i}\binom{v}{i}.
\]
\end{lemma}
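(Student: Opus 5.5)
The plan is to count the quadruples $(H,H',\varphi,\psi)$ behind $N(i,j)$ in two stages and to use isomorphism-closure of $\mathcal H$ to remove the ordering of the overlap vertices.

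\textbf{Stage 1.} Fix the first copy $(H,\varphi)$. This gives at most $|\mathcal H|\,(n)_v$ choices. I will only use that $\varphi(V(H))$ is a fixed $v$-subset of $[n]$.

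\textbf{Stage 2.} Bound the number of pairs $(H',\psi)$ with $|\varphi(V(H))\cap\psi(V(H'))|=i$ and $1\le j\le\lfloor(2i-2)/k\rfloor$ shared image edges. By Lemma~\ref{lem:Nij-zero}, this range of $j$ already contains every nonzero $N(i,j)$ with $j\ge1$ and $i<v$. So it suffices to bound all pairs $(H',\psi)$ that have overlap $i$ and at least one shared edge.

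Counting naively, one chooses $H'$, then the $i$-set $A\subseteq V(H')$ sent into $\varphi(V(H))$ ($\binom{v}{i}$ ways), then the injection of $A$ into $\varphi(V(H))$ ($(v)_i$ ways), then the remaining vertices ($\le(n)_{v-i}$ ways). This carries an extra factor $(v)_i$.

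To remove it, I use the free action of $S_v$ on pairs $(H',\psi)$ given by $(H',\psi)\mapsto(\pi(H'),\psi\circ\pi^{-1})$. This action preserves the image hypergraph, hence preserves $i$ and $j$, and it stays inside $\mathcal H$ because $\mathcal H$ is isomorphism-closed. I then reparametrize each pair by a canonical representative. Fix the overlap set $S=\varphi(V(H))\cap\psi(V(H'))$, with $\binom{v}{i}$ choices inside $\varphi(V(H))$. Among the representatives, choose one in which $\psi$ maps $[i]$ onto $S$ in increasing order. The identity I would aim to prove is
\[
\#\{(H',\psi)\}\;\le\;\binom{v}{i}\cdot\#\Bigl\{(H'',\psi_{\mathrm{rest}}):\ H''\in\mathcal H,\ \psi_{\mathrm{rest}}:[v]\setminus[i]\hookrightarrow[n]\setminus S\Bigr\}\;\le\;\binom{v}{i}\,|\mathcal H|\,(n)_{v-i}.
\]
Here the relabeling freedom on $[i]$ is absorbed into the choice of $H''$, not counted again through $\psi$.

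Multiplying the two stages gives the claimed bound $|\mathcal H|^2\,(n)_v\,(n)_{v-i}\binom{v}{i}$.

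\textbf{Main obstacle.} The hard step is the injectivity of this canonicalization. After we force $\psi|_{[i]}$ to be increasing onto $S$, distinct original pairs must give distinct pairs $(H'',\psi_{\mathrm{rest}})$. The permutation used is determined by $\psi$: it is the unique $\pi$ that sends $\psi^{-1}(S)$ to $[i]$ in the order induced by $S$, and sends the remaining vertices to $[v]\setminus[i]$ in increasing order of their original labels. Given $(H'',\psi_{\mathrm{rest}})$ and $S$, one has to recover $\pi$. I would try to recover it from the data $\psi^{-1}(S)$, which is encoded in $H''$ through the positions of the shared edges. I would use the hypothesis $j\ge1$ here, since shared edges pin down which labels of $H''$ lie in $S$.

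If exact injectivity fails, the fallback is to accept a bounded multiplicity from this step. That multiplicity must be compensated by the slack between $(n)_{v-i}$ and the true count $(n-v)_{v-i}$ of placements of the non-overlap vertices outside $\varphi(V(H))$. I would check carefully whether this slack is large enough; I am not confident it is.
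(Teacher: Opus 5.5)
Your canonicalization step fails, and a cleverer encoding will not rescue it. The map $(H',\psi)\mapsto(\pi(H'),\psi\circ\pi^{-1})$ you define is exactly $(v)_i$-to-one. To see this, take an output $(H'',\psi'')$, so that $\psi''|_{[i]}$ is increasing onto $S$. Let $\sigma\in S_v$ be any permutation whose restriction to $[v]\setminus\sigma^{-1}([i])$ is order-preserving. Then:
\begin{itemize}
\item the pair $(\sigma^{-1}(H''),\psi''\circ\sigma)$ is admissible, since $\sigma^{-1}(H'')\in\mathcal H$ by isomorphism-closure;
\item it has the same image hypergraph, hence the same $i$ and $j$;
\item its canonical permutation is $\sigma$ itself, because for $d\in\sigma^{-1}([i])$ the $S$-rank of $\psi''(\sigma(d))$ is $\sigma(d)$.
\end{itemize}
There are $\binom{v}{i}\,i!=(v)_i$ such $\sigma$. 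All of them produce the same image hypergraph, so nothing in the output can tell them apart. In particular, neither the positions of the shared edges nor the hypothesis $j\ge 1$ helps. Isomorphism-closure only says that each image hypergraph is hit by exactly $v!$ pairs $(H',\psi)$; it never merges labelings. Your fallback also fails, since $(n)_{v-i}/(n-v)_{v-i}=1+O(v^2/n)$ leaves no slack to absorb a factor of size $(v)_i$.

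In fact the inequality is false as stated, so no argument can close this gap. Take $i=v-1$ and fix $H,\varphi,H'$. Build $\psi$ by choosing:
\begin{itemize}
\item edges $e\in E(H)$ and $e'\in E(H')$, and a bijection $e'\to\varphi(e)$;
\item a vertex $u'\in[v]\setminus e'$, sent to some $w\notin\varphi([v])$;
\item a vertex $u\in[v]\setminus e$ whose image is left uncovered;
\item an arbitrary bijection from the remaining $v-k-1$ vertices of $H'$ onto $\varphi([v]\setminus(e\cup\{u\}))$.
\end{itemize}
Every resulting $\psi$ has overlap $v-1$ and shares the edge $\varphi(e)$. The two edges of $\varphi(H)$ through $\varphi(u)$ are never shared, so $1\le j\le s-2\le\lfloor(2v-4)/k\rfloor$, using $ks=2v$ and $k\ge 3$. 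Each such $\psi$ arises exactly $j\le s$ times. Hence
\[
\sum_{j=1}^{\lfloor(2v-4)/k\rfloor}N(v-1,j)\;\ge\;|\mathcal H|^2\,(n)_v\cdot s\,k!\,(v-k)^2\,(v-k-1)!\,(n-v),
\]
whereas the claimed right-hand side is $|\mathcal H|^2(n)_v\,n\,v$. The ratio is $2(k-1)!\,(v-k)^2\,(v-k-1)!\,(1-v/n)$, which is unbounded for the $v=rk\ell$ in use.

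What is true is the bound your naive count already gives:
\[
\sum_j N(i,j)\le|\mathcal H|^2(n)_v\binom{v}{i}(v)_i(n-v)_{v-i}.
\]
The paper's one-sentence justification (choose $H,H'$, then $\varphi$, then the $i$ shared image vertices, then the other $v-i$ vertices of $\psi$) silently drops exactly this factor $(v)_i$. It never chooses which vertices of $H'$ land on the shared set, or how they are matched. So your diagnosis of the naive count is correct; it is the statement that must carry the extra $(v)_i$, and that factor then has to be propagated into the estimate of Lemma~\ref{lem:Meanvariance}.
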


This lemma is simply about the number of pairs of embeddings whose vertex intersection has size exactly $i$: one first chooses $H,H'\in\mathcal H$, then an embedding of $H$, then the $i$ intersecting vertices, and finally the remaining $v-i$ vertices of the embedding of $H'$. Since this is a straightforward combinatorial count, we omit the proof.

\begin{lemma}[Mean--variance criterion for $F_{\mathcal H}$]\label{lem:Meanvariance}
Fix any constant $\epsilon>0$. Let $\mathcal H$ be the family of hypergraphs from Definition~\ref{def:family}, where $v=rk\ell$, $s=2r\ell$. Suppose $v^v = O(n^{0.3})$. Then consider the threshold test
\begin{equation}\label{eq:Ag}
\mathcal A(z)
\;=\;
\begin{cases}
1, & F_{\mathcal H}(z)\ge \dfrac12\,\E_{\PP_x}[F_{\mathcal H}],\\
0, & F_{\mathcal H}(z)< \dfrac12\,\E_{\PP_x}[F_{\mathcal H}].
\end{cases}
\end{equation}
Then $\mathcal A$ distinguishes $\PP_x$ from $\QQ$ with success probability at least
\[
1-
O\!\left(
\left[
\frac{4\,\ell^{k/2} k^{k/2}}{k!\,2^{k/2}}
\cdot
\frac{n^{k/2}}{m\,\delta^2\,(kr)^{k/2-1}}
\right]^{2r\ell}
+
\frac{1}{n^{0.1}}
\left(
\frac{n^{k/2}}{k!\,m\,\delta^2\,(rk)^{k/2-1}}
\right)^{2r\ell}
\right).
\]
\end{lemma}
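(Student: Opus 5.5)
We apply the mean--variance criterion directly to $T=F_{\mathcal H}$. The threshold in \eqref{eq:Ag} is the midpoint $\tau=\frac12(\E_{\PP_x}F+\E_{\QQ}F)$, because $\E_{\QQ}[F_{\mathcal H}]=0$. Lemma~\ref{lem:mean-variance-separation} therefore bounds the total error by
\[
\frac{4\Var_{\QQ}(F_{\mathcal H})+4\Var_{\PP_x}(F_{\mathcal H})}{\E_{\PP_x}[F_{\mathcal H}]^2},
\qquad
\E_{\PP_x}[F_{\mathcal H}]=|\mathcal H|(n)_v(p\delta)^s .
\]
The null term is exact: $\Var_{\QQ}/\E_{\PP_x}^2 = v!/(|\mathcal H|(n)_v p^s\delta^{2s})$. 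For the planted term we use \eqref{eq:VarP-i-j}. We split off the diagonal $(i,j)=(v,s)$, which by Lemma~\ref{lem:Nij-zero} contributes the same quantity as the null term up to the factor $p^s$. By Lemma~\ref{lem:Nij-zero}, every other nonzero term has $2i\ge jk+2$, i.e.\ $j\le\lfloor (2i-2)/k\rfloor$. Summing over $j$ in this range with Lemma~\ref{lem:Nij-sum}, and bounding $p^{2s-j}\delta^{2s-2j}\le p^{2s}\delta^{2s}(p\delta^2)^{-j}$ at the worst admissible $j$, each $i<v$ contributes at most
\[
\frac{(n)_{v-i}}{(n)_v}\binom vi\,(p\delta^2)^{-\lfloor(2i-2)/k\rfloor}
\;\lesssim\; \binom vi\, n^{-i}\,(p\delta^2)^{-(2i-2)/k}
\]
relative to $\E_{\PP_x}^2$. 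Here we assume $p\delta^2<1$; otherwise the bound is trivial.

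The main term comes from the diagonal and null contributions, $v!/(|\mathcal H|(n)_v(p\delta^2)^s)$. We substitute the lower bound on $|\mathcal H|$ (Lemma~2.4 of \cite{li2025smooth}) together with $p=m/\binom nk\ge m\,k!/n^k$ and $s=2r\ell$, $v=rk\ell$. Stirling then gives
\[
\frac{v!}{(2kr)!^{\ell}}\approx\frac{(rk\ell)^{rk\ell}e^{-rk\ell}}{(2kr)^{2kr\ell}e^{-2kr\ell}},
\]
and $(2r)!^\ell\, 2^{rk\ell}(k!)^{r\ell}$ collects into per-edge factors. After taking the $(2r\ell)$-th root we should get a base of the form
\[
\frac{c\,\ell^{k/2}k^{k/2}}{k!\,2^{k/2}}\cdot\frac{n^{k/2}}{m\delta^2(kr)^{k/2-1}} .
\]
The factor $\ell^{k/2}$ per edge arises from $(rk\ell)^{rk\ell}$ versus $(rk)^{rk\ell}$. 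The $(2\ell)^2$ and $C^\ell$ losses are lower order and are absorbed into the constant $4$ and into $O(\cdot)$ once $r$ is large. This gives the first term of the bound.

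The main obstacle is the off-diagonal sum over $1\le i<v$. Write $\beta:=n^{k/2}/(k!\,m\delta^2(rk)^{k/2-1})$, so that $(p\delta^2)^{-1}\approx\beta\,(rk)^{k/2-1}n^{k/2}$. Then $(p\delta^2)^{-2i/k}\le \beta^{2i/k}(rk)^{i(1-2/k)}n^{i}$, and the factor $n^{i}$ cancels $n^{-i}$. What remains is $\binom vi (rk)^{O(i)}\beta^{2i/k}\cdot(p\delta^2)^{2/k}$, which we bound by $v^v\cdot \beta^{2r\ell}\cdot (p\delta^2)^{2/k}$ uniformly in $i$, using $i\le v=k\cdot(s/2)$. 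We must also check monotonicity in $i$ when $\beta<1$ or $\beta>1$; in the worst case the largest power $\beta^{s}$ dominates. The saved factor $(p\delta^2)^{2/k}$ is polynomially small in $n$ (at least $n^{-0.4}$ in our regime, since $p\le n^{-k/2}$ up to constants). Combined with $v^v=O(n^{0.3})$ and the at most $v$ summands, this leaves $n^{-0.1}\beta^{2r\ell}$, the second term. I would carefully verify that the $-2$ in $\lfloor(2i-2)/k\rfloor$ provides this polynomial saving, and handle the floor when $p\delta^2\ge1$ separately, since in that case the statement is vacuous.
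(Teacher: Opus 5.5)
Your outline is the paper's. You use the midpoint Chebyshev bound of Lemma~\ref{lem:mean-variance-separation}. The null variance plus the diagonal $(i,j)=(v,s)$ gives $2v!/(|\mathcal H|(n)_v p^s\delta^{2s})$, which Li's lower bound on $|\mathcal H|$ and Stirling turn into the first term; your accounting of the $\ell^{k/2}$ factor and of the $(2\ell)^2C^{-\ell}$ loss matches. The off-diagonal terms go through Lemmas~\ref{lem:Nij-zero} and~\ref{lem:Nij-sum}, a factor $n^{-1}$ from the ``$-2$'', and $v^v=O(n^{0.3})$.

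The step that fails is bounding your $i$-th off-diagonal term $\binom{v}{i}(rk)^{i(1-2/k)}\beta^{2i/k}(p\delta^2)^{2/k}$ by $v^v\beta^{2r\ell}(p\delta^2)^{2/k}$ uniformly in $i$. Replacing $\beta^{2i/k}$ by $\beta^{2r\ell}$ needs $\beta\ge 1$. For $\beta<1$ your remark that ``the largest power $\beta^s$ dominates'' is backwards: the small-$i$ terms are then the largest. That is exactly the regime in which the lemma is used, since Theorem~\ref{thm:mean-var} takes $m$ large enough that $\beta\le 2^{k/2}/(4\ell^k k^{k/2})<1$. At that threshold $v^v\beta^{2r\ell}=\bigl((2r/\ell)^k/16\bigr)^{r\ell}$, which for $\ell=2r$ equals $16^{-2r^2}$. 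Your own expression at $i=k$ is at least $r^{k-2}/(16k^2)$ times the common factor $(p\delta^2)^{2/k}$. So the intermediate inequality is false, not merely unproved. Your claim that $(p\delta^2)^{2/k}$ is polynomially small also uses $p\delta^2\lesssim n^{-k/2}$, which is not a hypothesis of the lemma.

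The paper avoids both problems with a different normalization, and by keeping the binomial inside the geometric ratio.
\begin{itemize}
\item Set $a_i=\lfloor(2i-2)/k\rfloor$ and $Y:=1/(n^{k/2}p\delta^2)\le(rk)^{k/2-1}\beta$. Since $ka_i/2\le i-1$, one gets $n^{-i}(p\delta^2)^{-a_i}=n^{-i}n^{ka_i/2}Y^{a_i}\le n^{-1}Y^{a_i}$ for every $i$, with no assumption on $p$.
\item Then $\sum_i\binom{v}{i}Y^{a_i}\le k v^k\sum_{j\le 2(v-1)/k}(v^{k/2}Y)^j$.
\item Whenever $v^{k/2}Y>1$, this series grows geometrically. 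That condition allows $\beta$ down to roughly $v^{-k/2}(rk)^{-(k/2-1)}$ and covers the $\ell=2r$ example above, where $v^{k/2}Y\approx (rk)^{k/2-1}/4$.
\item The top term $(v^{k/2}Y)^{2r\ell}=v^vY^{2r\ell}\le v^v(rk)^{(k-2)r\ell}\beta^{2r\ell}$ therefore dominates, and the $v^{O(v)}$ prefactor is absorbed into $n^{0.9}$.
\end{itemize}
The missing idea is to let $\binom{v}{i}$ drive the growth in $i$, since it is what makes large $i$ dominate even though $\beta<1$, rather than bounding it separately by $v^v$. If even $v^{k/2}Y\le 1$, the off-diagonal sum is only $O(v^{k+1}/n)$ and must be compared with the claimed terms directly; the paper leaves this sub-case implicit as well.
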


The proof is deferred to Appendix~\ref{appex:mean-var}.

\begin{theorem}\label{thm:mean-var}
Assume $r$ is larger than a sufficiently large absolute constant, set $D:=rk$ and let $\ell\ge 2$ satisfy $D\ell<\frac{0.3\log n}{\log\log n}$.
If
\[
m
\ge
\frac{4\,\ell^{k}k^{k/2}}{k!\,2^{k/2}}
\cdot
\frac{n^{k/2}}{\delta^2\,D^{\,k/2-1}},
\]
then the test $\mathcal A$ in \eqref{eq:Ag} distinguishes $\PP_x$ from $\QQ$ with success probability at least
\[
1-O\!\left(\ell^{-kD}\right).
\]
Therefore, $\mathcal A$ weakly distinguishes $\PP_x$ from $\QQ$. Moreover, if at least one of the parameters $\ell$ or $D$ satisfies $\omega(1)$ as $n\to\infty$, then $\mathcal A$ in fact strongly distinguishes $\PP_x$ from $\QQ$.
\end{theorem}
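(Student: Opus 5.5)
We derive the theorem directly from Lemma~\ref{lem:Meanvariance}, substituting $D=rk$ and the stated lower bound on $m$ into the two error terms. The first step is to check the hypothesis $v^v=O(n^{0.3})$. Here $v=rk\ell=D\ell<\frac{0.3\log n}{\log\log n}$. Hence $v\log v< \frac{0.3\log n}{\log\log n}\cdot\log\log n=0.3\log n$, which gives $v^v\le n^{0.3}$. Lemma~\ref{lem:Meanvariance} therefore applies, and its success probability is $1-O(A^{2r\ell}+n^{-0.1}B^{2r\ell})$. Here
\[
A=\frac{4\,\ell^{k/2}k^{k/2}}{k!\,2^{k/2}}\cdot\frac{n^{k/2}}{m\delta^2 D^{k/2-1}},
\qquad
B=\frac{n^{k/2}}{k!\,m\,\delta^2 D^{k/2-1}}.
\]

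Next we plug in the bound on $m$. It gives
\[
\frac{n^{k/2}}{m\delta^2D^{k/2-1}}\le \frac{k!\,2^{k/2}}{4\,\ell^{k}k^{k/2}}.
\]
Consequently $A\le \ell^{k/2}/\ell^{k}=\ell^{-k/2}$. The first error term is then at most
\[
\ell^{-(k/2)\cdot 2r\ell}=\ell^{-kr\ell}=\ell^{-D\ell}\le \ell^{-kD}\quad\text{when }\ell\ge k.
\]
In general, $\ell^{-D\ell}\le \ell^{-2D}$. We will need to reconcile this with the stated rate $\ell^{-kD}$, presumably via the slack in the constants. Similarly,
\[
B\le \frac{2^{k/2}}{4\ell^k k^{k/2}}\le 1,
\]
since $k\ge3$ and $\ell\ge2$. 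The second term is therefore at most $n^{-0.1}$. Because $\ell^{-kD}\ge 2^{-kD}$ with $kD=O(k\log n/\log\log n)$, which is $n^{-o(1)}$ for fixed $k$, the $n^{-0.1}$ term is absorbed into $O(\ell^{-kD})$.

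The conclusions then follow. Since $\ell\ge2$ and $D\ge$ a large constant, $\ell^{-kD}$ is bounded below $1/2$ by a constant margin, so the test weakly distinguishes. If $\ell=\omega(1)$ or $D=\omega(1)$, then $\ell^{-kD}\to0$ (using $\ell\ge 2$), giving strong distinguishing.

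The main obstacle is bookkeeping the exponent: matching $A^{2r\ell}$ to exactly $\ell^{-kD}$ as stated. Our first attempt would be to use the crude bound $A\le\ell^{-k/2}$ together with $2r\ell\ge 2r\cdot 2$, and to check whether the lemma's constants leave enough room to reach $\ell^{-kD}$. Failing that, we would simply note that any bound of the form $\ell^{-\Omega(D)}$ suffices for both the weak and strong distinguishing claims.
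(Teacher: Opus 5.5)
Your route is the paper's: Theorem~\ref{thm:mean-var} is stated as an immediate consequence of Lemma~\ref{lem:Meanvariance}, with no further argument. Your check that $v^v\le n^{0.3}$ and your computation $A\le \ell^{-k/2}$ are both correct.

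The obstacle you flag is a problem with the statement, not with your reasoning. The substitution gives $A^{2r\ell}\le \ell^{-kr\ell}=\ell^{-D\ell}$. This implies the claimed $\ell^{-kD}$ only when $\ell\ge k$; the two coincide at $\ell=k$. For $\ell<k$, the slack in the constants cannot close the difference. Even if you keep the unabsorbed intermediate bound from the appendix proof of Lemma~\ref{lem:Meanvariance}, at the threshold value of $m$ the base improves only to $\ell^{-k/2}/(2k^{3/4})$. The first term is then about $(2k^{3/4})^{-2r\ell}\ell^{-kr\ell}$. For $\ell=2$, $k=5$ this is roughly $2^{-21r}$, against the claimed $2^{-25r}$, and the gap grows with $r$.

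So do not try to reconcile the two. What this argument actually proves is success probability $1-O(\ell^{-D\ell})$, and the stated exponent is presumably a slip for this. That bound gives both qualitative conclusions exactly as you argue: $\ell^{-D\ell}\le 4^{-D}$ for weak distinguishing, and $\ell^{-D\ell}\to 0$ whenever $\ell$ or $D$ diverges.

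You also need one small repair in the second term.
\begin{itemize}
\item The inequality ``$\ell^{-kD}\ge 2^{-kD}$'' is backwards for $\ell\ge 2$.
\item Once the target rate is $\ell^{-D\ell}$, the crude bound $n^{-0.1}B^{2r\ell}\le n^{-0.1}$ is not always $O(\ell^{-D\ell})$. The hypothesis only gives $\ell^{D\ell}<n^{0.3}$, and for example $\ell=(\log n)^{0.9}$ with $D\ell$ near its upper limit gives $\ell^{D\ell}\approx n^{0.27}$.
\item The fix is to keep the factor you already computed: $B\le 2^{k/2}/(4\ell^k k^{k/2})\le \ell^{-k}$. Then the second term is at most $n^{-0.1}\ell^{-2D\ell}$, which the first term dominates.
\end{itemize}
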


\subsection{Hypergraph family for recovery}
Starting from the gadget family $\mathcal U(r,k)$ in Definition~\ref{def:ugraph}, we now define the hypergraph family used for recovery. This is the path analogue of the detection family and also appears in \cite{li2025smooth}.

\begin{definition}\label{def:family-recovery}
For any integers $r,\ell$, let $\mathcal J=\mathcal J(r,k,\ell)$ denote the family of labeled $k$-uniform hypergraphs $J$ such that there exist subhypergraphs
\[
U_1,\dots,U_\ell
\]
with the following properties:
\begin{enumerate}
\item $U_i\in \mathcal U(r,k)$ given in Definition \ref{def:ugraph} for all $1\le i\le \ell$;

\item if $L(U_i)=\{v_i,v_{i+1}\}$, then
\[
V(U_i)\cap V(U_{i+1})=\{v_{i+1}\}
\qquad\text{for all }1\le i\le \ell-1;
\]

\item
\[
V(J)=\bigcup_{i=1}^{\ell}V(U_i),
\qquad
E(J)=\bigcup_{i=1}^{\ell}E(U_i);
\]
\end{enumerate}
\end{definition}

By construction, every $J\in\mathcal J$ has
\[
|V(J)|=rk\ell+1,
\qquad
|E(J)|=2r\ell,
\qquad
L(J)=\{v_1,v_{\ell+1}\}.
\]
Moreover, every internal vertex has degree $2$, while the two endpoint vertices $v_1$ and $v_{\ell+1}$ have degree $1$.

The family $\mathcal J$ is the path analogue of the detection family $\mathcal H$: instead of gluing the gadgets cyclically, we glue them into a chain with two exposed leaves, which is the natural structure for the rooted recovery statistic. 

In \cite{li2025smooth}, the author proved a rough lower bound for the unlabeled version of
$\mathcal J$. Since we work with the labeled family, we record here a corresponding lower bound, analogous to the one for
$\mathcal H$.

\begin{lemma}\cite[Lemma 2.9]{li2025smooth}\label{lem:counting-J}
There exists an absolute constant $C>0$ such that, for every $k\ge 3$,
\[
|\mathcal J|
\;\ge\;
\frac{1}{(2\ell)^2}\cdot
\left(C\cdot
\frac{(2kr)!}{2^{rk}\,(k!)^{r}\,(2r)!}
\right)^{\ell},
\]
for all sufficiently large $r\ge r_0(C)$.
\end{lemma}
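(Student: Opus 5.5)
We aim to lower bound $|\mathcal J|$ by reducing to the lower bound for gadgets, $|\mathcal U|\ge C\frac{(2kr)!}{2^{rk}(k!)^{2r}(2r)!}$ (\cite[Lemma 2.2]{li2025smooth}), and then counting how many labeled chains can be assembled from $\ell$ gadgets. The plan mirrors the argument for $\mathcal H$ (\cite[Lemma 2.4]{li2025smooth}). The only changes are that the gluing is linear rather than cyclic and that the vertex count is $v=rk\ell+1$ instead of $rk\ell$.

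\textbf{Step 1 (construction).} We build an explicit injective-up-to-bounded-multiplicity map into $\mathcal J$.
\begin{itemize}
\item Fix the labeled vertex set $[rk\ell+1]$.
\item Partition it into consecutive blocks $B_1,\dots,B_\ell$, where $B_i$ has $rk+1$ vertices and consecutive blocks share exactly one junction vertex $v_{i+1}$.
\item For each $i$, pick a gadget $U_i\in\mathcal U(r,k)$ and relabel its vertices onto $B_i$, with its two leaves sent to $v_i$ and $v_{i+1}$.
\end{itemize}
The union is $2$-regular at the junctions $v_2,\dots,v_\ell$ and has degree $1$ at $v_1$ and $v_{\ell+1}$, so it lies in $\mathcal J$. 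The number of such choices is at least
\[
\frac{(rk\ell+1)!}{\prod (\text{block labelings})}\cdot\prod_{i=1}^{\ell}\frac{|\mathcal U|}{\text{(overcount per block)}} .
\]
More cleanly, we sum over all labelings of $[v]$: a labeled hypergraph $J$ is determined by an unlabeled chain shape plus a bijection to $[v]$, modulo $\Aut(J)$.

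\textbf{Step 2 (overcounting).} We bound how many tuples $(U_1,\dots,U_\ell)$ and placements produce the same $J$. Two facts control this:
\begin{itemize}
\item Since each $U_i$ is $2$-connected in the sense of Definition~\ref{def:ugraph}(3), the junction vertices are exactly the cut vertices of $J$.
\item The leaves of $J$ are the only degree-$1$ vertices.
\end{itemize}
Hence the decomposition into gadgets is unique up to reversing the chain. The overcount is therefore at most a factor $2$, which is absorbed into the $(2\ell)^{-2}$ slack.

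\textbf{Step 3 (arithmetic).} We combine the counts. The multinomial for distributing $[v]$ into blocks, together with the gadget counts (each already labeled on $rk+1$ vertices), gives roughly $\frac{v!}{((rk-1)!)^{\ell}}\prod\frac{|\mathcal U|}{(rk+1)!}$ times junction factors. Inserting the bound on $|\mathcal U|$ and simplifying should yield
\[
\Bigl(C\frac{(2kr)!}{2^{rk}(k!)^{r}(2r)!}\Bigr)^{\ell}
\]
up to the polynomial factor $(2\ell)^{-2}$. Any per-block losses of size $\mathrm{poly}(rk)$, together with the $(k!)^{r}$ versus $(k!)^{2r}$ discrepancy inherited from the stated form, would be absorbed by adjusting $C$ and taking $r\ge r_0(C)$.

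\textbf{Main obstacle.} The delicate step is Step 2, namely uniqueness of the gadget decomposition. A set of $rk+1$ vertices containing both leaves of a cut-vertex-free piece could in principle be re-split differently. We would settle this via the block--cut tree of $J$. Each $U_i$ is $2$-connected, and adjacent gadgets share exactly one vertex, so the block--cut tree is a path whose blocks are exactly the $U_i$. The rest is bookkeeping. Alternatively, we may simply cite the unlabeled bound of \cite{li2025smooth} and multiply by $v!/|\Aut(J)|$, using $|\Aut(J)|\le 2\prod_i|\Aut(U_i)|$.
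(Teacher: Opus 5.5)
Your construction and the uniqueness argument in Step~2 are sound: the internal junctions $v_2,\dots,v_\ell$ are exactly the cut vertices, so each $J$ arises from exactly two ordered decompositions. Step~3, however, fails, and not for bookkeeping reasons. The mismatch between $(k!)^{2r}$ in the gadget bound and $(k!)^{r}$ in the target is a factor $(k!)^{r\ell}$. This is exponential in $r$, so it cannot be absorbed into an absolute constant $C$, least of all for $r\ge r_0(C)$.

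The only thing that could pay for this factor is the label-distribution factor, and your normalization there is also off. On a fixed labeled block with prescribed leaves there are $|\mathcal U|/\binom{rk+1}{2}$ admissible gadgets, not $|\mathcal U|/(rk+1)!$. Done correctly, your construction together with Step~2 gives exactly
\[
|\mathcal J| \;=\; \frac{(rk\ell+1)!}{2\,((rk-1)!)^{\ell}}\Bigl(\frac{|\mathcal U|}{\binom{rk+1}{2}}\Bigr)^{\ell}.
\]
Since $(rk\ell)!/((rk)!)^{\ell}\le \ell^{rk\ell}$, the gain over $|\mathcal U|^{\ell}$ is at most $\ell^{rk\ell}$ up to polynomial factors. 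That beats $(k!)^{r\ell}$ only when $\ell^{k}>k!$.

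This is not a weakness of your method. Bound the gadgets on each block from above by the configuration-model count $2\,(2rk)!/\bigl((2r)!\,(k!)^{2r}\,2^{rk}\bigr)$ for the gadget degree sequence. This gives
\[
\frac{|\mathcal J|}{\text{(right-hand side of the lemma)}} \;\le\; (2\ell)^2\,(rk\ell+1)\Bigl(\frac{2rk}{C}\Bigr)^{\ell}\Bigl(\frac{\ell^{k}}{k!}\Bigr)^{r\ell}.
\]
So for $k=4$, $\ell=2$, the claimed inequality fails for all large $r$, whatever $C>0$ is. No version of Step~3 can give the bound uniformly in $k\ge 3$ and $\ell$. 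It holds for large $r$ only when $\ell^k>k!$, for instance $k=3$ with any $\ell\ge 2$.

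The paper never does this arithmetic. It splits the glued vertex $v_1=v_{\ell+1}$ of each $H\in\mathcal H$ into two leaves. This is injective if, say, the new leaf gets label $rk\ell+1$, since merging the two leaves recovers $H$. Hence $|\mathcal J|\ge|\mathcal H|$, and the $(k!)^{r}$ form is inherited from the quoted bound for $|\mathcal H|$. That reduction is valid, but it only transfers the quoted estimate. The same upper-bound count applied to $\mathcal H$ shows that estimate carries the same $\ell^k$ versus $k!$ restriction.
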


\begin{proof}
For every $H\in\mathcal H$, disconnecting the identified vertex $v_1=v_{\ell+1}$ produces a
hypergraph $J\in\mathcal J$. Thus each member of $\mathcal H$ naturally gives rise to a member
of $\mathcal J$, and the claimed lower bound follows from the corresponding counting bound for
$\mathcal H$.
\end{proof}
\subsection{Statistical analysis for recovery}

Let $\mathcal J=\mathcal J(r,k,\ell)$ be the recovery family from
Definition~\ref{def:family-recovery}. Every $J\in\mathcal J$ has
\[
|V(J)|=v:=rk\ell+1,
\qquad
|E(J)|=s:=2r\ell,
\]
and exactly two leaves, denoted by $a=a(J)$ and $b=b(J)$.

For each ordered pair $(a,b)\in[n]^2$ with $a\neq b$, define
\[
T_{J,a,b}(z)
:=
\sum_{\substack{\phi:V(J)\hookrightarrow[n]\\ \phi(v_1)=a,\ \phi(v_{\ell+1})=b}}
\ \prod_{e\in E(J)} z(\phi(e)).
\]
Then define the aggregated rooted statistic
\[
F_{\mathcal J,a,b}(z)
:=
\sum_{J\in\mathcal J} T_{J,a,b}(z).
\]

We now compute the statistical properties of this statistic under the planted model.
\paragraph{Expectation}
Fix $J\in\mathcal J$ and an injection $\phi:V(J)\hookrightarrow[n]$ satisfying
$\phi(v_1)=a$ and $\phi(v_{\ell+1})=b$. Since the image hyperedges
$\{\phi(e):e\in E(J)\}$ are distinct, independence gives
\[
\E\!\left[\prod_{e\in E(J)} z(\phi(e))\right]
=
\prod_{e\in E(J)} \E[z(\phi(e))]
=
\prod_{e\in E(J)} p\delta\,x_{\phi(e)}.
\]
Hence
\[
\E\!\left[\prod_{e\in E(J)} z(\phi(e))\right]
=
(p\delta)^s \prod_{e\in E(J)} x_{\phi(e)}.
\]
Now
\[
\prod_{e\in E(J)} x_{\phi(e)}
=
\prod_{u\in V(J)} x_{\phi(u)}^{\deg_J(u)}.
\]
Since every internal vertex of $J$ has degree $2$, while the two leaves
$v_1$ and $v_{\ell+1}$ have degree $1$, all internal signs cancel, and thus
\[
\prod_{e\in E(J)} x_{\phi(e)}
=
x_{\phi(v_1)}x_{\phi(v_{\ell+1})}
=
x_ax_b.
\]
Therefore
\[
\E\!\left[\prod_{e\in E(J)} z(\phi(e))\right]
=
(p\delta)^s\,x_ax_b.
\]

The number of injections $\phi:V(J)\hookrightarrow[n]$ such that
$\phi(v_1)=a$ and $\phi(v_{\ell+1})=b$ is $(n-2)_{v-2}$.
Thus
\[
\E[T_{J,a,b}(z)]
=
(n-2)_{v-2}(p\delta)^s\,x_ax_b.
\]
Summing over $J\in\mathcal J$, we obtain
\begin{equation}\label{eq:E-FJab}
\E[F_{\mathcal J,a,b}(z)]
=
|\mathcal J|(n-2)_{v-2}(p\delta)^s\,x_ax_b.
\end{equation}

\paragraph{Variance under the planted model}

We write
\[
\Var(F_{\mathcal J,a,b}(z))
=
\E[F_{\mathcal J,a,b}(z)^2]
-
\E[F_{\mathcal J,a,b}(z)]^2.
\]
Expanding the square,
\[
\E[F_{\mathcal J,a,b}(z)^2]
=
\sum_{J,J'\in\mathcal J}
\ \sum_{\substack{\phi:V(J)\hookrightarrow[n]\\ \phi(v_1)=a,\ \phi(v_{\ell+1})=b}}
\ \sum_{\substack{\psi:V(J')\hookrightarrow[n]\\ \psi(v_1)=a,\ \psi(v_{\ell+1})=b}}
\E\!\left[
\prod_{e\in E(J)} z(\phi(e))
\prod_{e'\in E(J')} z(\psi(e'))
\right].
\]
For such a pair $(J,\phi)$, $(J',\psi)$, let
\[
i:=|\phi(V(J))\cap \psi(V(J'))|,
\qquad
j:=|E(\phi(J))\cap E(\psi(J'))|.
\]
Since both rooted embeddings send the two leaves to $a,b$, we always have $i\ge 2$.

As in the detection calculation, every hyperedge in the union contributes one factor $p$, and
every hyperedge in the symmetric difference contributes one factor $\delta$. Hence
\[
\E\!\left[
\prod_{e\in E(J)} z(\phi(e))
\prod_{e'\in E(J')} z(\psi(e'))
\right]
=
p^{\,2s-j}\delta^{\,2s-2j}
\prod_{\alpha\in E(\phi(J))\triangle E(\psi(J'))} x_\alpha.
\]
Since each rooted copy contributes the endpoint parity $x_ax_b$, we have
\[
\prod_{\alpha\in E(\phi(J))\triangle E(\psi(J'))} x_\alpha
=
(x_ax_b)^2
=
1.
\]
Therefore
\[
\E\!\left[
\prod_{e\in E(J)} z(\phi(e))
\prod_{e'\in E(J')} z(\psi(e'))
\right]
=
p^{\,2s-j}\delta^{\,2s-2j}.
\]

For $2\le i\le v$ and $0\le j\le s$, let $N_{a,b}(i,j)$ denote the number of ordered tuples
\[
(J,J',\phi,\psi)
\]
such that
\[
\phi(v_1)=\psi(v_1)=a,\qquad \phi(v_{\ell+1})=\psi(v_{\ell+1})=b,
\]
and
\[
|\phi(V(J))\cap\psi(V(J'))|=i,
\qquad
|E(\phi(J))\cap E(\psi(J'))|=j.
\]
Then
\[
\E[F_{\mathcal J,a,b}(z)^2]
=
\sum_{i=2}^{v}\sum_{j=0}^{s}
N_{a,b}(i,j)\,p^{\,2s-j}\delta^{\,2s-2j}.
\]
Since
\[
\E[F_{\mathcal J,a,b}(z)]
=
|\mathcal J|(n-2)_{v-2}(p\delta)^s\,x_ax_b,
\]
we obtain
\[
\Var(F_{\mathcal J,a,b}(z))
=
\sum_{i=2}^{v}\sum_{j=0}^{s}
N_{a,b}(i,j)\Bigl(p^{2s-j}\delta^{2s-2j}-(p\delta)^{2s}\Bigr).
\]
In particular,
\[
\Var(F_{\mathcal J,a,b}(z))
\le
\sum_{i=2}^{v}\sum_{j=1}^{s}
N_{a,b}(i,j)\, p^{\,2s-j}\delta^{\,2s-2j}.
\]

\begin{lemma}\label{lem:Nij-zero-recovery}
For every pair $(i,j)$ with
$$
2i\leq jk+5,
$$
we have
\[
N_{a,b}(i,j)=0,
\]
except for the diagonal case $(i,j)=(v,s)$. Moreover
\[
N_{a,b}(v,s)=|\mathcal J|\,(n-2)_{v-2}\,(v-2)!.
\]
\end{lemma}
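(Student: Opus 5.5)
We will mirror the argument behind Lemma~\ref{lem:Nij-zero}, adapted to the rooted setting. Fix a tuple $(J,J',\phi,\psi)$ with both embeddings rooted at $a,b$. Let $I:=\phi(V(J))\cap\psi(V(J'))$, so $|I|=i\ge 2$. Let $E_\cap:=E(\phi(J))\cap E(\psi(J'))$, so $|E_\cap|=j$. The key object is the sub-hypergraph $K:=\phi^{-1}(E_\cap)$ of $J$, together with the vertex set $W:=\phi^{-1}(I)$.

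\textbf{Step 1: degree counting.} Every edge of $E_\cap$ lies inside $I$, so $kj=\sum_{u\in W}\deg_K(u)$. Split $W$ into three classes. The first class is the leaves $v_1,v_{\ell+1}$, which have $\deg_J=1$. The second class is the internal vertices $u$ with $\deg_K(u)=2$, i.e.\ both $J$-edges at $u$ are shared. The third class is the \emph{boundary} vertices $u\in W$ with $\deg_K(u)\le 1$. Let $\beta$ be the number of boundary vertices, counting a leaf as boundary if its unique edge is not shared. Then $kj\ge 2(i-2)+(\text{leaf contributions})-2\beta+(\text{contributions}\le 1\text{ of boundary vertices})$. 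This gives roughly $kj\ge 2i-4-\beta'$, where $\beta'$ counts the "defects".

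The claim $N_{a,b}(i,j)=0$ for $2i\le jk+5$ away from the diagonal is then equivalent to showing that any non-diagonal overlap forces enough defects that $kj\le 2i-6$. We must therefore show that a non-diagonal overlap creates defects costing at least about $6$ in total (counted in units of degree). Here we need to be careful about the exact bookkeeping.

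\textbf{Step 2: structural input from the gadgets.} We use condition (3) of Definition~\ref{def:ugraph} (2-vertex-connectivity inside each gadget) and the chain structure of $\mathcal J$. Consider the case where the overlap is not everything, i.e.\ $(i,j)\neq(v,s)$. Then some edge of $J$ is not shared, and we look at the boundary between shared and unshared edges.

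\emph{Case A: $K$ is empty.} Then $j=0$ and $2i\le 5$ is automatically allowed, so the statement needs no work there. For $j\ge1$, a pure count suffices whenever $i$ is small.

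\emph{Case B: $K$ is a nonempty proper sub-hypergraph.} Each gadget $U_t$ that is partially shared must, by 2-connectivity, have at least two vertices incident to both shared and unshared edges. Such a vertex has $\deg_K\le 1$ while lying in $W$, and so is a defect. Defects can also occur at vertices in $I$ with $\deg_K=0$.

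Combining the two leaves (each deficient by $1$ relative to degree $2$) with at least two more boundary defects, we aim for the total deficiency $2i-kj\ge 6$. In fact the deficiency is $\ge 2\cdot 1+2\cdot 1+2=6$ once we account that a chain with both ends pinned at $a,b$ but not fully shared must "exit and re-enter". The exit and re-entry each produce a boundary vertex, or two within a gadget by 2-connectivity.

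\textbf{Step 3: the diagonal count.} When $(i,j)=(v,s)$ we have $E(\phi(J))=E(\psi(J'))$, so $\psi^{-1}\circ\phi$ is an isomorphism $J\to J'$ fixing the two leaf labels. By isomorphism-closure, the pairs $(J',\psi)$ correspond bijectively to permutations of $[v]$ fixing the two leaves, giving $(v-2)!$ choices. The analogue of the isomorphism lemma then gives $N_{a,b}(v,s)=|\mathcal J|(n-2)_{v-2}(v-2)!$.

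The main obstacle is Step~2. The task is to get the precise constant $5$, i.e.\ to prove that the deficiency $2i-kj$ is at least $6$ in every non-diagonal configuration. This is most delicate when the overlap is nearly everything except a few edges near the roots. We would first try to handle that case by showing that the unshared part of $J$ is itself a union of pieces whose attachment points are boundary vertices, and then invoking 2-connectivity of each gadget to get at least two boundary vertices per touched piece.
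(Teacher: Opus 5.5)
\textbf{The gap.} Step~2 cannot be completed. The inequality it aims for, $2i-kj\ge 6$ for every non-diagonal overlap, is false, and so is the lemma as stated.

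Take $J'=J$, let $\psi$ agree with $\phi$ on $V(U_1)\cup\{v_{\ell+1}\}$, and send every other vertex to a point outside $\phi(V(J))$. Every edge of $U_2,\dots,U_\ell$ contains a non-leaf vertex of its gadget, which $\psi$ sends outside $\phi(V(J))$. So the shared edges are exactly $\phi(E(U_1))$, giving $i=rk+2$, $j=2r$ and $2i=jk+4\le jk+5$. Since $\ell\ge 2$, $(i,j)\neq(v,s)$, yet $N_{a,b}(rk+2,2r)>0$. Similarly, letting $\psi$ agree with $\phi$ only on the edge at $v_1$ and on $v_{\ell+1}$ gives $(i,j)=(k+1,1)$, which for $k=3$ satisfies $2i=jk+5$.

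Your Case~A is also mistaken. The pair $(i,j)=(2,0)$ satisfies $2i\le jk+5$, so the statement asserts $N_{a,b}(2,0)=0$, which fails for any two rooted copies meeting only in $\{a,b\}$. This case is harmless downstream only because $j=0$ terms cancel in the variance.

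\textbf{Where the bookkeeping breaks.} The unjustified part is the final ``$+2$'' in $2\cdot1+2\cdot1+2$. For a shared vertex $u$,
$\deg_K(u)=\deg_{\phi(J)}(u)-\#\{\text{unshared }\phi\text{-edges at }u\}=\deg_{\psi(J')}(u)-\#\{\text{unshared }\psi\text{-edges at }u\}$.
The two full degrees agree ($1$ at $a,b$ and $2$ elsewhere). Hence $2i-jk=2+\Delta$, where $\Delta$ is the number of incidences between shared vertices and unshared $\phi$-edges. The $\psi$-side gives the same $\Delta$, not an additional one. In the example, the exit at $v_2$ and the re-entry at $b$ each cost one unit, so $2i-jk=4$.

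The paper's proof aims at the same bound $jk\le 2i-6$ and makes exactly this double count. In Case~1 it subtracts one $\phi$-outgoing and one $\psi$-outgoing edge per component, and in Case~2 it subtracts four outgoing edges from $2i-2$.

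\textbf{What is true instead.} Connectivity of $J$, together with the fact that its only cut vertices are $v_2,\dots,v_\ell$ (each separating $v_1$ from $v_{\ell+1}$), rules out $\Delta\le 1$ off the diagonal. So $N_{a,b}(i,j)=0$ for $2i\le jk+3$, and the example above shows this is sharp. In that range, however, the crude count of Lemma~\ref{lem:rooted-overlap-fixed-i} leaves terms of order $n^{-(i-2)}(p\delta^2)^{-j}$ with no spare factor $n^{-1}$. The use of this lemma in Lemma~\ref{lem:Meanvariance-recovery} would therefore also need a finer count of the admissible $(J',\psi)$.
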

The proof is deferred to Appendix \ref{appex:proof-Nij-recovery}
\begin{lemma}\label{lem:rooted-overlap-fixed-i}
For every $2\le i < v$,
\[
\sum_{j=1}^{\lfloor (2i-6)/k\rfloor} N_{a,b}(i,j)
\;\le\;
|\mathcal J|^2\,(n-2)_{v-2}\binom{v-2}{i-2}(n-v)_{v-i}.
\]
\end{lemma}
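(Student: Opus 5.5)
The bound is a crude counting argument. It is the rooted analogue of Lemma~\ref{lem:Nij-sum}: drop the constraint on $j$ and count all ordered tuples $(J,J',\phi,\psi)$ whose vertex images share exactly $i$ vertices. Since every $N_{a,b}(i,j)\ge 0$, we have
\[
\sum_{j=1}^{\lfloor(2i-6)/k\rfloor}N_{a,b}(i,j)\le \#\bigl\{(J,J',\phi,\psi):\ \text{rooted at }(a,b),\ |\phi(V(J))\cap\psi(V(J'))|=i\bigr\},
\]
so it suffices to bound the right-hand side by $|\mathcal J|^2(n-2)_{v-2}\binom{v-2}{i-2}(n-v)_{v-i}$.

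We build each tuple step by step.
\begin{enumerate}
\item Choose $J$ and $J'$ in $\mathcal J$. This gives at most $|\mathcal J|^2$ choices.
\item Choose the rooted injection $\phi$. Its values on the two leaves $v_1,v_{\ell+1}$ are forced to be $a,b$, and the remaining $v-2$ vertices map injectively into $[n]\setminus\{a,b\}$. This gives exactly $(n-2)_{v-2}$ choices.
\item Choose $\psi$. The leaves of $J'$ map to $a,b$, which already lie in $\phi(V(J))$, so they account for two of the $i$ shared vertices. The other $i-2$ shared vertices form a subset $S$ of the $v-2$ non-leaf vertices of $J'$ that $\psi$ sends into $\phi(V(J))\setminus\{a,b\}$. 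Choosing $S$ gives $\binom{v-2}{i-2}$ options.
\item The remaining $v-i$ vertices of $J'$ must map injectively into $[n]\setminus\phi(V(J))$, a set of size $n-v$. This gives $(n-v)_{v-i}$ options.
\end{enumerate}

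One subtlety remains: the values of $\psi$ on $S$ have not been counted. There are up to $(v-2)_{i-2}$ injections from $S$ into $\phi(V(J))\setminus\{a,b\}$, and this factor does not appear in the stated bound. I expect this bookkeeping to be the main obstacle. The printed bound omits this factor, just as Lemma~\ref{lem:Nij-sum} omits it in the unrooted case, so I would handle it the same way.

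My first choice would be to absorb the factor by the symmetry of the family. Because $\mathcal J$ is isomorphism-closed, I would count the pair $(J',\psi)$ only through the labeled image hypergraph $\psi(J')$ together with the labeling. Fixing how the $i-2$ overlap vertices are matched can then be traded against relabeling $J'$ by a permutation of $[v]$ that fixes the leaves. Concretely, I would canonicalize $\psi$ so that the shared vertices are matched to $\phi(V(J))$ in increasing order of label. This costs only the choice of which labels of $J'$ are shared, namely the $\binom{v-2}{i-2}$ factor, while $J'$ ranges over all of $\mathcal J$.

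If this normalization cannot be made rigorous, the fallback is to keep the extra factor $(v-2)_{i-2}\le v^{v}$. Under the standing assumption $v^v=O(n^{0.3})$, this factor is harmless in the downstream mean--variance argument. In that case the lemma would be stated and used with this extra factor included, and the overall argument would be unaffected.
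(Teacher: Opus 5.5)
Your steps 1--4 are the paper's own one-line argument. The paper chooses the $i-2$ shared vertices inside $\phi(V(J))$ rather than inside $V(J')$, but it likewise never counts how the shared vertices of $J'$ are matched to them. So the factor $(v-2)_{i-2}$ you flag is missing from the paper's proof as well.

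The gap is your proposed repair, and it cannot be closed. $N_{a,b}(i,j)$ counts labeled tuples. For a relabeling $\pi$, the map $(J,J',\phi,\psi)\mapsto(J,\pi(J'),\phi,\psi\circ\pi^{-1})$ preserves the image hypergraph, hence $i$ and $j$. It is therefore a permutation of the set being counted: isomorphism-closure moves you between tuples but never identifies two of them. Keeping only ``canonical'' $\psi$ (shared vertices matched in increasing order) counts orbit representatives, not the tuples that $N_{a,b}(i,j)$ counts. Summed over all $j$, the number of rooted tuples with vertex overlap exactly $i$ is exactly $|\mathcal J|^2(n-2)_{v-2}\binom{v-2}{i-2}(v-2)_{i-2}(n-v)_{v-i}$. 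The restriction $1\le j\le\lfloor(2i-6)/k\rfloor$ does not remove enough of these, and in fact the inequality as stated is false.

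Here is a counterexample at $i=v-1$. Given $J,J'\in\mathcal J$ and a rooted $\phi$, fix an edge $f_0$ of $J'$ and an edge $e_0$ of $\phi(J)$, both avoiding the leaves (only two edges meet a leaf, and $s\ge 4$). Define $\psi$ as follows:
\begin{enumerate}
\item map $f_0$ onto $e_0$, in $k!$ ways;
\item send one internal vertex $y\notin f_0$ outside $\phi(V(J))$, in $(v-2-k)(n-v)$ ways;
\item inject the remaining $v-3-k$ internal vertices into $\phi(V(J))\setminus(e_0\cup\{a,b\})$, in $(v-2-k)!$ ways.
\end{enumerate}
Each such tuple has $i=v-1$ and $j\ge 1$. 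Also $j\le s-2$, since the two edges through $\psi(y)$ are not shared, so $jk\le 2(v-1)-2k\le 2i-6$ for $k\ge 3$. Thus every such tuple lies in the summation range.

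The left side is therefore at least $|\mathcal J|^2(n-2)_{v-2}\,k!\,(v-2-k)!\,(v-2-k)(n-v)$. The right side equals $|\mathcal J|^2(n-2)_{v-2}(v-2)(n-v)$. Their ratio $k!\,(v-2-k)!\,(v-2-k)/(v-2)$ is enormous for $v=rk\ell+1$ with $r$ large.

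So what you call the fallback is the only correct version: the bound must carry the extra factor $(v-2)_{i-2}$. The analogous factor $(v)_i$ is likewise missing from Lemma~\ref{lem:Nij-sum}.

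The corrected bound still closes the proof of Lemma~\ref{lem:Meanvariance-recovery}, because $(v-2)_{i-2}\le v^{v}<n^{0.3}$ while the overlap term carries a factor $n^{-1}$. However, the paper's accounting there already bounds three factors $v^{k+1}, v^{v}, v^{2v/k}$ by $n^{0.3}$ each, so it cannot simply absorb a fourth such factor. That bookkeeping must be tightened, for instance using $v^{k+1}=n^{o(1)}$ and $\binom{v-2}{i-2}\le 2^{v}$.
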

\begin{proof}
This is a crude counting bound: first choose $J,J'\in\mathcal J$, then choose the rooted embedding $\phi$ of $J$, which contributes $ |\mathcal J|^2 (n-2)_{v-2} $ choices since the two endpoints are fixed to $a,b$.  
Next, to define $\psi$, choose the other $i-2$ shared vertices among the $v-2$ internal vertices of $\phi(J)$, and then map the remaining $v-i$ vertices outside $\phi(V(J))$, giving at most $ \binom{v-2}{i-2}(n-v)_{v-i} $ possibilities.  
Multiplying these bounds gives the claim, and the restriction $ j\le \lfloor (2i-6)/k\rfloor $ follows from Lemma~\ref{lem:Nij-zero-recovery}.
\end{proof}

Thus the recovery statistic satisfies the same type of mean--variance guarantee as in the detection setting.

\begin{lemma}\label{lem:Meanvariance-recovery}
Fix any constant $\varepsilon>0$. Let $\mathcal J$ be the family of hypergraphs from
Definition~\ref{def:family-recovery}, where
$v=rk\ell+1$, $s=2r\ell$. Suppose $v^v = O(n^{0.3})$, then for each ordered pair $a\neq b$, define
\[
\mathcal A_{a,b}(z)
\;:=\;
\begin{cases}
1, & F_{\mathcal J,a,b}(z)\ge 0,\\[4pt]
-1, & F_{\mathcal J,a,b}(z)<0.
\end{cases}
\]
Then $\mathcal A_{a,b}$ outputs $x_ax_b$ with probability at least
\begin{equation}\label{eq:probrecovery}
1-
O\!\left(
\left[
\frac{4\,\ell^{k/2} k^{k/2}}{k!\,2^{k/2}}
\cdot
\frac{n^{k/2}}{m\,\delta^2\,(kr)^{k/2-1}}
\right]^{2r\ell}
+
\frac{1}{n^{0.1}}
\left(
\frac{n^{k/2}}{k!\,m\,\delta^2\,(rk)^{k/2-1}}
\right)^{2r\ell}
\right).
\end{equation}

Moreover, fix an anchor vertex $1\in[n]$, and define
\[
\widehat x_1:=1,
\qquad
\widehat x_b:=\sgn\!\bigl(F_{\mathcal J,1,b}(z)\bigr)
\quad\text{for all } b\in[n]\setminus\{1\}.
\]
Then, with probability at least
\begin{equation}\label{eq:probrecovery2}
1-
O\!\left(
k\left[
\frac{4\,\ell^{k/2} k^{k/2}}{k!\,2^{k/2}}
\cdot
\frac{n^{k/2}}{m\,\delta^2\,(kr)^{k/2-1}}
\right]^{2r\ell}
+
\frac{k}{n^{0.1}}
\left(
\frac{n^{k/2}}{k!\,m\,\delta^2\,(rk)^{k/2-1}}
\right)^{2r\ell}
\right),
\end{equation}
the estimator $\widehat x$ correctly recovers at least
\[
\left(1-\frac{1}{4k}\right)n
\]
coordinates, up to the global sign.
\end{lemma}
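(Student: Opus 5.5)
The plan is to run the recovery statistic through the same second-moment machinery as Lemma~\ref{lem:Meanvariance}, using Lemmas~\ref{lem:Nij-zero-recovery} and~\ref{lem:rooted-overlap-fixed-i} in place of Lemmas~\ref{lem:Nij-zero} and~\ref{lem:Nij-sum}.

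\textbf{Per-pair bound.} Write $\mu:=|\mathcal J|(n-2)_{v-2}(p\delta)^s$, so that $\E[F_{\mathcal J,a,b}]=\mu x_ax_b$ by \eqref{eq:E-FJab}. The output $\mathcal A_{a,b}$ is wrong only if $F_{\mathcal J,a,b}$ lies on the wrong side of $0$, which forces $|F_{\mathcal J,a,b}-\E F_{\mathcal J,a,b}|\ge\mu$. Chebyshev's inequality therefore bounds the error probability by $\Var(F_{\mathcal J,a,b})/\mu^2$.

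For the variance I will use the upper bound already derived, namely the sum of $N_{a,b}(i,j)p^{2s-j}\delta^{2s-2j}$ over $j\ge1$. I split this sum into two pieces.

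\emph{Diagonal term $(i,j)=(v,s)$.} By Lemma~\ref{lem:Nij-zero-recovery} it equals $|\mathcal J|(n-2)_{v-2}(v-2)!\,p^s$, so its ratio to $\mu^2$ is
\[
\frac{(v-2)!}{|\mathcal J|\,(n-2)_{v-2}\,p^s\delta^{2s}}.
\]
I will insert the lower bound on $|\mathcal J|$ from Lemma~\ref{lem:counting-J}, together with $p=m/\binom nk\approx k!\,m/n^k$, $(n-2)_{v-2}\approx n^{v-2}$ and $v=rk\ell+1$. Stirling's formula applied to $(v-2)!$, $(2kr)!$ and $(2r)!$ should produce exactly the bracketed base raised to the power $s=2r\ell$. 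The prefactor $(2\ell)^2n^{O(1)}$ and the constant $C^{-\ell}$ should be absorbed just as in the proof of Lemma~\ref{lem:Meanvariance}, since the stated bound is designed to mirror it.

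\emph{Off-diagonal terms, $i<v$.} By Lemma~\ref{lem:Nij-zero-recovery} only $j\le\lfloor(2i-6)/k\rfloor$ contributes. Using $p^{2s-j}\delta^{2s-2j}\le p^{2s}\delta^{2s}(p\delta^2)^{-j}$ and Lemma~\ref{lem:rooted-overlap-fixed-i}, the $i$-th slice relative to $\mu^2$ is at most
\[
\frac{\binom{v-2}{i-2}\,(n-v)_{v-i}}{(n-2)_{v-2}}\cdot\max_{1\le j\le(2i-6)/k}(p\delta^2)^{-j}\;\lesssim\;v^{i}\,n^{-(i-2)}\max_j (p\delta^2)^{-j}.
\]
In the regime of interest $p\delta^2\ll 1$, so the maximum is attained at $j=(2i-6)/k$. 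Substituting $(p\delta^2)^{-1}\approx n^{k}/(k!\,m\delta^2)$ turns $n^{-(i-2)}(p\delta^2)^{-j}$ into
\[
n^{-(i-2)}\,n^{2i-6}\,(k!\,m\delta^2)^{-(2i-6)/k}=n^{\,i-4}\,(k!\,m\delta^2)^{-(2i-6)/k}.
\]
This is the quantity $\bigl(n^{k/2}/(k!\,m\delta^2)\bigr)^{(2i-6)/k}$ up to polynomial factors in $n$. Interpolating between the endpoints $i\approx v$ and small $i$, and using $v^v=O(n^{0.3})$ to absorb the $v^i$ and $\binom{v-2}{i-2}$ factors, should yield the second error term with its $n^{-0.1}$ saving. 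The power of $n$ and the $D^{k/2-1}$ factor are bookkeeping, and I expect this bookkeeping to be the main obstacle. I would copy the exponent accounting from Appendix~\ref{appex:mean-var} verbatim, shifting $i$ by $2$ and $j$ by the slack $5$ in place of $1$.

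\textbf{Global statement.} Let $\varepsilon_0$ be the per-pair error bound \eqref{eq:probrecovery}. By linearity of expectation, the expected number of coordinates $b$ with $\widehat x_b\neq x_1x_b$ is at most $n\varepsilon_0$. Markov's inequality then gives
\[
\Pr\bigl[\#\text{errors}>n/(4k)\bigr]\le 4k\,\varepsilon_0,
\]
which is \eqref{eq:probrecovery2}. On the complementary event, $\widehat x$ agrees with $x_1x$ on at least $(1-1/(4k))n$ coordinates, so $\widehat x$ matches $x$ up to the global sign $x_1$.
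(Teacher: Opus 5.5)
Your outline matches the paper's own proof: Chebyshev for each pair, a diagonal term plus slices $i<v$ via Lemmas~\ref{lem:Nij-zero-recovery} and~\ref{lem:rooted-overlap-fixed-i}, then Markov over $b$. But the step where you absorb the $n^{O(1)}$ prefactor of the diagonal term fails. No rearrangement can fix it, because \eqref{eq:probrecovery} is false as written.

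With both leaves pinned there are only $v-2=rk\ell-1$ free vertices. So $(n-2)_{v-2}\approx n^{rk\ell-1}$ has one power of $n$ fewer than the detection analogue $(n)_{rk\ell}$. Meanwhile $s=2r\ell$ and the lower bound on $|\mathcal J|$ from Lemma~\ref{lem:counting-J} are exactly those used for $\mathcal H$. Hence $(v-2)!/(|\mathcal J|(n-2)_{v-2}p^s\delta^{2s})$ is about $n/(rk\ell)$ times the detection diagonal ratio.

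The only slack in the proof of Lemma~\ref{lem:Meanvariance} is the factor $(2k^{3/4})^{2r\ell}$ between the computed and the stated base. This is $O(1)$ for fixed $r,\ell$ and $n^{o(1)}$ whenever $D\ell<0.3\log n/\log\log n$, so it cannot absorb a factor $n$.

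This is a real obstruction, not bookkeeping. The expected number of pairs $(J,\phi)$ with $\phi$ rooted at $(a,b)$ and every edge of $\phi(J)$ observed is $|\mathcal J|(n-2)_{v-2}p^s\approx|\mathcal J|\,n^{-1}(k!\,m/n^{k/2})^{2r\ell}$. Take $r$ (large) and $\ell$ fixed, $\delta$ constant, and $m=Kn^{k/2}/\delta^2$. Then this expectation is $O(1/n)$. So with probability $1-O(1/n)$ we have $F_{\mathcal J,a,b}(z)=0$, hence $\mathcal A_{a,b}=1$, whatever $x_ax_b$ is. Yet the right-hand side of \eqref{eq:probrecovery} is $1-O((c_{k,r,\ell}/K)^{2r\ell})-o(1)$, which exceeds $1/2$ once $K$ is large.

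Any correct per-pair bound must therefore carry an extra factor of order $n/(rk\ell)$ in its first term. The paper's proof has the same hole: it asserts that replacing $n$ by $n-2$ and $i$ by $i-2$ changes the detection computation only by a factor $1+o(1)$.

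Your off-diagonal estimate is also too optimistic. The saving $n^{-(i-2)}(p\delta^2)^{-(2i-6)/k}\approx n^{-1}\bigl(n^{k/2}/(k!\,m\delta^2)\bigr)^{(2i-6)/k}$ rests on the bound $jk\le 2i-6$ from Lemma~\ref{lem:Nij-zero-recovery}, and that bound fails. Take two rooted copies of the same $J$ that agree on $U_1$ and are otherwise disjoint apart from $b$. They have $i=rk+2$ and $j=2r$, so $jk=2i-4$. The lemma's proof goes wrong by charging an outgoing edge of $\phi(J)$ and one of $\psi(J')$ against a single degree budget, whereas $\deg_K\le\min(\deg_{\phi(J)},\deg_{\psi(J')})$ allows only one of them. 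With $jk\le 2i-4$, the $i$-th slice is of order $\bigl(n^{k/2}/(k!\,m\delta^2)\bigr)^{(2i-4)/k}$ times combinatorial factors, with no $n^{-1}$. So the $n^{-0.1}$ term is not obtained this way either.

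The Markov step over $b$ is fine, but it only transfers whatever per-pair bound you actually prove.
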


The proof is deferred to Appendix~\ref{apx:recover-proof}

\subsection{Rounding and clean-up}
For the final clean-up step, we draw an additional independent set of fresh samples of size $m':=n^{1.1}$.
This guarantees that the clean-up stage is independent of the preliminary estimate $\hat x$, and since $m'=n^{1.1}=o(m)$ in our parameter regime, it increases the total sample complexity only by a lower-order term.

We first resolve the global sign ambiguity. When $k$ is even, the planted $k$-XOR model is invariant under the transformation $x\mapsto -x$, so recovering either $x$ or $-x$ is sufficient. When $k$ is odd, we use the fresh samples to determine the correct sign: if the fraction of satisfied equations under $\hat x$ is less than $1/2$, we replace $\hat x$ by $-\hat x$.

Now suppose that, after this sign correction when $k$ is odd, we are given an estimate
$\hat x\in\{\pm1\}^n$ such that
\[
\frac1n\bigl|\{i\in[n]:\hat x_i\neq x_i\}\bigr|\le \eta
\qquad\text{for}\qquad
\eta = \frac{1}{4k}.
\]
We generate the clean-up sample set by drawing $m'=n^{1.1}$ fresh constraints independently and uniformly from $\binom{[n]}{k}$. Equivalently, in the Bernoulli model each $k$-tuple $e\in\binom{[n]}{k}$ is included independently with probability
\[
p':=\frac{m'}{\binom{n}{k}},
\]
and, conditioned on being included, its observed label is
\[
y_e=\xi_e\prod_{j\in e}x_j,
\]
where
\[
\Pr[\xi_e=1]=\frac{1+\delta}{2},
\qquad
\Pr[\xi_e=-1]=\frac{1-\delta}{2}.
\]
For each $i\in[n]$, define
\[
T_i:=\sum_{e\ni i} y_e \prod_{j\in e\setminus\{i\}}\hat x_j,
\qquad
\widetilde x_i:=\sgn(T_i),
\]
with ties broken arbitrarily. The full recovery algorithm is given in Algorithm~\ref{alg:full-recovery}.

\begin{algorithm}[htbp]
\DontPrintSemicolon
\LinesNumbered
\caption{Recovery algorithm with sample splitting and clean-up}\label{alg:full-recovery}
\KwIn{
A collection of observed noisy $k$-XOR samples
$\mathcal E=\{(e,y_e)\}$,
where each $e\in\binom{[n]}{k}$ is observed together with its label $y_e\in\{\pm1\}$.
}
\KwOut{
A recovered assignment $\widetilde x\in\{\pm1\}^n$.
}

Randomly split $\mathcal E$ into two disjoint parts
$\mathcal E_{\mathrm{main}}$ and
$\mathcal E_{\mathrm{clean}}$,
by assigning each observed sample independently to $\mathcal E_{\mathrm{clean}}$ with probability $\rho$, and otherwise to $\mathcal E_{\mathrm{main}}$\;

Using only $\mathcal E_{\mathrm{main}}$, compute a preliminary estimate $\hat x\in\{\pm1\}^n$
via the character-sum procedure from Algorithm~\ref{alg:rec-final} and rounding procedure in Lemma \ref{lem:Meanvariance-recovery}\;

\If{$k$ is odd}{
    Compute the fraction of equations in $\mathcal E_{\mathrm{clean}}$ satisfied by $\hat x$\;
    
    \If{this fraction is less than $1/2$}{
        Replace $\hat x\gets -\hat x$\;
    }
}

\For{$i=1,2,\dots,n$}{
    Set
    $T_i\gets\sum_{\substack{(e,y_e)\in\mathcal E_{\mathrm{clean}}\\ i\in e}}
    y_e\prod_{j\in e\setminus\{i\}}\hat x_j
    $

    Set
    $\widetilde x_i\gets \sgn(T_i)$
    with ties broken arbitrarily\;
}

\Return $\widetilde x$\;
\end{algorithm}
\begin{lemma}\label{lem:cleanup}
Assume that
\[
m'=n^{1.1}
\qquad\text{and}\qquad
\eta \le \frac{1}{4k}.
\]
Then, for all sufficiently large $n$,
\[
\Pr\!\left[\widetilde x_i=x_i\ \text{for all }i\in[n]\right]
\ge 1-n^{-10}.
\]
\end{lemma}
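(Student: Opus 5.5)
Fix a coordinate $i$ and condition on $\hat x$; by sample splitting, the clean-up samples are independent of it. Let $B:=\{j:\hat x_j\neq x_j\}$, so $|B|\le \eta n$. For a fresh constraint $e\ni i$ we have
\[
y_e\prod_{j\in e\setminus\{i\}}\hat x_j
=
x_i\,\xi_e\prod_{j\in e\setminus\{i\}}(x_j\hat x_j)
=
x_i\,\xi_e\,(-1)^{|e\cap B\setminus\{i\}|}.
\]
Hence $T_i x_i$ is a sum over observed $e\ni i$ of independent $\pm1$ variables. The plan has three steps: compute the mean of each term, show the number of terms concentrates, and then apply Hoeffding's inequality and a union bound.

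\textbf{Step 1: the per-term mean.} For each $e\ni i$, the term $x_i T_i$ contributes $\delta$ if $e\setminus\{i\}$ misses $B$. Otherwise it contributes at least $-\delta$. A uniformly random $(k-1)$-subset of $[n]\setminus\{i\}$ hits $B$ with probability at most $(k-1)|B|/(n-1)\le (k-1)\eta n/(n-1)$. Since $\eta\le 1/(4k)$, this is at most $1/4+o(1)$. Summing over all $k$-sets containing $i$ gives
\[
\E\bigl[x_iT_i\mid \hat x\bigr]\;\ge\; p'\binom{n-1}{k-1}\,\delta\,\bigl(1-2\cdot\tfrac14-o(1)\bigr)\;\ge\; \tfrac{\delta}{3}\,p'\binom{n-1}{k-1}.
\]
Here $p'\binom{n-1}{k-1}=km'/n=kn^{0.1}$.

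\textbf{Step 2: concentration of the number of terms.} The number $N_i$ of observed $e\ni i$ is $\Bin\bigl(\binom{n-1}{k-1},p'\bigr)$ with mean $kn^{0.1}$. By Chernoff, $N_i\le 2kn^{0.1}$ except with probability $e^{-\Omega(n^{0.1})}$.

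\textbf{Step 3: Hoeffding and a union bound.} Conditional on the support, the terms are independent $\{\pm1\}$-valued with total mean at least $\delta kn^{0.1}/3$. Hoeffding then gives
\[
\Pr\bigl[x_iT_i\le 0\bigr]\le \exp\!\bigl(-\Omega(\delta^2 k n^{0.1})\bigr)+e^{-\Omega(n^{0.1})}.
\]
To make Step 3 clean, I would not condition on the support for the mean. Instead I would treat each $e\ni i$ as giving the independent variable $b_e\xi_e(\pm1)\in\{0,\pm1\}$ and apply Hoeffding/Bernstein directly to the sum of these $\binom{n-1}{k-1}$ bounded variables. Bernstein's inequality, using variance at most $kn^{0.1}$, gives tail $\exp(-\Omega(\delta^2kn^{0.1}))$. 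A union bound over the $n$ coordinates then yields failure probability at most $n^{-10}$, provided $\delta^2 n^{0.1}\gg \log n$.

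\textbf{Main obstacle.} The delicate point is exactly this last requirement, which Steps 1–3 do not supply on their own. In the intended regime, $m\gtrsim n^{k/2}/\delta^2$ together with $m$ polynomial forces $\delta\ge n^{-c}$, but the statement fixes $m'=n^{1.1}$. I would therefore make explicit the standing assumption that $\delta^2\ge n^{-0.05}$, or more generally $\delta^2 m'/n\ge C\log n$, which is the implicit regime of the paper; otherwise one enlarges $m'$ to $Cn\log n/\delta^2$. For odd $k$, I would also verify that the sign-correction step succeeds. The satisfied fraction under $\pm\hat x$ has bias about $\pm\delta(1-2\eta)^k$ up to the above hitting correction, which stays bounded away from zero relative to $\delta$. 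Concentration over the $n^{1.1}$ fresh equations then picks the right sign with overwhelming probability.
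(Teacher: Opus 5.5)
Your proof is correct and follows the paper's route: the same rewriting of each vote as $x_i\xi_e\prod_{j\in e\setminus\{i\}}(x_j\hat x_j)$, the same bound $(k-1)\eta\le 1/4$ on hitting the bad set, concentration per coordinate, a union bound over $i$, and a concentration check of the global sign for odd $k$. The only real difference is the concentration step, and neither difference is substantive: the paper lower-bounds $N_i\ge kn^{0.1}/2$ by Chernoff and then applies Hoeffding conditionally on $N_i$, while you apply Bernstein directly to the independent $\{0,\pm1\}$ Bernoulli-model terms. Your version avoids the fact that, given $N_i$, the included edges are sampled without replacement, and it makes your Step 2 and the conditional-mean claim in Step 3 unnecessary; the extra hypothesis you flag is exactly the condition $\delta^2 n^{0.1}\ge C\log n$ that the paper's proof itself adds at its Hoeffding step.
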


\begin{proof}
We first resolve the global sign ambiguity when $k$ is odd. In that case, for any assignment
$u\in\{\pm1\}^n$, the expected fraction of fresh clean-up equations satisfied by $u$ is strictly
larger than $1/2$ if $u=x$, and strictly smaller than $1/2$ if $u=-x$. Since the clean-up set
contains $m'=n^{1.1}$ independent fresh samples, a standard Chernoff bound implies that, with
probability at least $1-n^{-20}$, the sign check correctly determines whether to keep $\hat x$ or
replace it by $-\hat x$. Thus, after this step, we may assume that
\[
\frac1n\bigl|\{j\in[n]:\hat x_j\neq x_j\}\bigr|\le \eta.
\]

Fix $i\in[n]$, and let
\[
N_i:=\sum_{e\ni i}\mathbf 1_{\{e\text{ is included in the clean-up set}\}}.
\]
Then
\[
N_i\sim \mathrm{Bin}\!\left(\binom{n-1}{k-1},\,p'\right),
\qquad
\mu_i:=\E[N_i]=p'\binom{n-1}{k-1}.
\]
Since
\[
p'=\frac{m'}{\binom{n}{k}},
\]
we have
\[
\mu_i
=
\frac{m'}{\binom{n}{k}}\binom{n-1}{k-1}
=
\frac{k m'}{n}
=
k n^{0.1}.
\]
In particular, $\mu_i$ is polynomially large in $n$. Therefore, by a standard Chernoff bound,
\[
\Pr\!\left[N_i\le \frac{\mu_i}{2}\right]
\le
\exp\!\left(-\frac{\mu_i}{8}\right)
\le
\exp(-\Omega(n^{0.1}))
\le
n^{-20}
\]
for all sufficiently large $n$. Hence, with probability at least $1-n^{-20}$,
\[
N_i\ge \frac12\,k n^{0.1}.
\]

We next analyze the sign of each summand in $T_i$. For any $e\ni i$,
\[
y_e\prod_{j\in e\setminus\{i\}}\hat x_j
=
\xi_e \prod_{j\in e}x_j \prod_{j\in e\setminus\{i\}}\hat x_j
=
\xi_e\,x_i\prod_{j\in e\setminus\{i\}}(x_j\hat x_j).
\]
Let
\[
B:=\{j\in[n]:\hat x_j\neq x_j\},
\qquad |B|\le \eta n.
\]
For a uniformly random $(k-1)$-subset $e\setminus\{i\}$, the probability that it contains at least
one bad coordinate is at most
\[
(k-1)\eta \le k\eta \le \frac14.
\]
Therefore,
\[
\Pr\!\left[\prod_{j\in e\setminus\{i\}}(x_j\hat x_j)=1\right]
\ge \frac34.
\]
It follows that
\[
\Pr\!\left[y_e\prod_{j\in e\setminus\{i\}}\hat x_j=x_i\right]
\ge
\frac34\cdot\frac{1+\delta}{2}
+
\frac14\cdot\frac{1-\delta}{2}
=
\frac12+\frac{\delta}{4}.
\]
Equivalently, after multiplying by $x_i$, each summand is a $\{\pm1\}$-valued random variable
with expectation at least
\[
2\left(\frac12+\frac{\delta}{4}\right)-1
=
\frac{\delta}{2}.
\]

Now condition on the event
\[
N_i\ge \frac12\,k n^{0.1}.
\]
Then $x_iT_i$ is a sum of $N_i$ independent $\{\pm1\}$-valued random variables, each having
mean at least $\delta/2$. Hence
\[
\E[x_iT_i\mid N_i]\ge \frac{\delta}{2}N_i.
\]
By Hoeffding's inequality,
\[
\Pr\!\left[\widetilde x_i\neq x_i \,\middle|\, N_i\right]
=
\Pr\!\left[x_iT_i\le 0 \,\middle|\, N_i\right]
\le
\exp\!\left(
-\frac{(\delta N_i/2)^2}{2N_i}
\right)
=
\exp\!\left(-\frac{\delta^2 N_i}{8}\right).
\]
Using the lower bound on $N_i$, we obtain
\[
\Pr\!\left[\widetilde x_i\neq x_i \,\middle|\, N_i\ge \frac12\,k n^{0.1}\right]
\le
\exp\!\left(-\frac{\delta^2 k}{16}n^{0.1}\right)
\le
n^{-20}
\]
for all sufficiently large $n$ and $\delta^2 n^{0.1}\ge C\log n$.

Combining the two estimates yields
\[
\Pr[\widetilde x_i\neq x_i]
\le
n^{-20}+n^{-20}
\le
n^{-19}
\]
for all sufficiently large $n$. Finally, taking a union bound over all $i\in[n]$, we conclude that
\[
\Pr\!\left[\widetilde x_i=x_i\ \text{for all }i\in[n]\right]
\ge
1-n^{-18}
\ge
1-n^{-10}.
\]
This completes the proof.
\end{proof}

\paragraph{Obtaining an independent clean-up set by sample splitting.}
Suppose the observed equations are generated first, and we want the clean-up stage to be independent of the statistic used in the first stage. This can be achieved by a standard sample-splitting argument.

Specifically, for each observed equation, independently assign it to the clean-up pool with probability $\rho$, and to the main estimation pool with probability $1-\rho$. Then the two pools are independent; if the original inclusion probability is $p$, the clean-up pool has inclusion probability $p'=\rho p$; and the labels in the clean-up pool retain the same noise rate, namely error probability $\frac{1-\delta}{2}$.
Thus we may reserve a small random fraction of the revealed equations for the clean-up step and simply exclude them from the first-stage computation. If $m$ denotes the original sample size and $\rho m=o(m)$, then this changes the effective sample size in the first stage by only a lower-order term.

\begin{remark} We claim that sample splitting is necessary. Indeed, if one were to reuse the same samples that were already used to construct $\hat x$, then $\hat x$ could be correlated with the noise in those samples, and the coordinate-wise majority analysis would no longer be valid. By contrast, sample splitting removes this dependence and allows for a clean direct concentration argument.
\end{remark}

\paragraph{Putting it together}
We now combine the rooted recovery statistic, the sample-splitting argument, and the clean-up step to obtain the final recovery guarantee.
\begin{theorem}\label{thm:full-recovery}
Assume $r$ is larger than a sufficiently large absolute constant, set
\[
D:=rk,
\]
and let $\ell\ge 2$ satisfy
\[
D\ell<\frac{0.3\log n}{\log\log n}.
\]
If
\[
m
\ge
\frac{4\,\ell^{k} k^{k/2+1}}{k!\,2^{k/2}}
\cdot
\frac{n^{k/2}}{\delta^2\,D^{\,k/2-1}},
\]
then the recovery algorithm described in Algorithm~\ref{alg:full-recovery} succeeds with probability at least
\[
1-O\!\left(\ell^{-kD}\right).
\]
More precisely, when $k$ is odd, the algorithm outputs the planted assignment $x$ exactly; when $k$ is even, it outputs either $x$ or $-x$.
\end{theorem}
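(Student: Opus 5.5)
The proof chains three earlier results: the sample-splitting reduction, the weak-recovery guarantee of Lemma~\ref{lem:Meanvariance-recovery} applied to the main pool, and the clean-up guarantee of Lemma~\ref{lem:cleanup} applied to the clean-up pool. A final union bound combines the failure probabilities.

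\textbf{Step 1: split the samples.} Choose the splitting probability $\rho := n^{1.1}/m$, so that the clean-up pool has expected size $m'=n^{1.1}$. In the Bernoulli model with $p=m/\binom nk$, the two pools are independent Bernoulli models with inclusion probabilities $\rho p$ and $(1-\rho)p$. Both carry the same noise bias $\delta$.

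Since $D\ell=O(\log n/\log\log n)$, the assumed lower bound on $m$ is $n^{k/2-o(1)}$. For $k\ge 3$ this is $\gg n^{1.1}$, so $\rho=o(1)$. Hence the main pool has effective sample size $(1-o(1))m$, which still satisfies the hypothesis up to a $1+o(1)$ factor. The slack is harmless because the required bound carries an extra factor $k$ relative to Theorem~\ref{thm:mean-var}.

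\textbf{Step 2: weak recovery on the main pool.} Check that $v^v=O(n^{0.3})$ for $v=D\ell+1$. This follows from $v\log v\le (0.3+o(1))\log n$ under $D\ell<0.3\log n/\log\log n$.

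Then plug $m$ into the failure bound \eqref{eq:probrecovery2}. Using $(kr)=D$, the bracket in the first term is at most
\[
\frac{4\ell^{k/2}k^{k/2}}{k!\,2^{k/2}}\cdot\frac{n^{k/2}}{m\delta^2D^{k/2-1}}\le \frac{1}{k\,\ell^{k/2}}(1+o(1)).
\]
Raising to the power $2r\ell$ and multiplying by $k$ gives
\[
k\,(k\ell^{k/2})^{-2r\ell}\le \ell^{-kr\ell}\le \ell^{-kD}
\]
up to constants, using $\ell\ge 2$ and $r\ell\ge r$. I would double check the exponent bookkeeping here: $(\ell^{k/2})^{2r\ell}=\ell^{kr\ell}$, and $kr\ell\ge kD/k\cdot\ell$. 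It may be necessary to state only $\ell^{-kr\ell}\le\ell^{-D}$, or to use $\ell\cdot r\ge r$ together with $k\ge 1$; I would simply aim for a bound at least as strong as $\ell^{-kD}$, adjusting as needed.

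The second term is $n^{-0.1}$ times a base of size at most $(\ell^{k}\cdot\mathrm{const})^{-1}$ raised to the power $2r\ell$. It is therefore even smaller and absorbed. So with probability $1-O(\ell^{-kD})$, $\hat x$ agrees with $\pm x$ on at least $(1-\tfrac1{4k})n$ coordinates.

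\textbf{Step 3: clean-up.} Conditioned on $\hat x$, the clean-up pool is fresh and independent. Lemma~\ref{lem:cleanup}, with $\eta=1/(4k)$, then gives exact recovery with probability $1-n^{-10}$:
\begin{itemize}
\item for even $k$ the output is $\pm x$, the sign being inherited from $\hat x$;
\item for odd $k$ the sign check fixes the sign.
\end{itemize}
One technicality: Lemma~\ref{lem:cleanup} is stated for exactly $m'=n^{1.1}$ and assumes $\delta^2n^{0.1}\ge C\log n$. I would note that the pool size concentrates around $n^{1.1}$, and that the Chernoff argument tolerates a constant-factor change. The condition on $\delta$ follows from $m\le\binom nk$ together with the lower bound on $m$, which forces $\delta^{-2}\le n^{k/2+o(1)}\cdot$… If that fails, I would enlarge $\rho$ so that $m'=n^{1.1}\cdot\mathrm{polylog}$ or $m'=\Theta(m)$ as needed. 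I expect this parameter check to be the main delicate point.

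\textbf{Step 4: combine.} A union bound gives total failure probability $O(\ell^{-kD})+n^{-10}$. Since $\ell^{kD}\le n^{O(1)}$ in this regime, and a sharper cleanup exponent can always be chosen, the total is $O(\ell^{-kD})$, which proves the theorem.
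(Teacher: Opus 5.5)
Your plan matches how the paper obtains this theorem: the paper gives no separate proof and presents it as the combination of sample splitting, Lemma~\ref{lem:Meanvariance-recovery} and Lemma~\ref{lem:cleanup}. However, two of your steps fail as written, and the first cannot be repaired to give the stated rate.

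\textbf{The exponent.} Under the hypothesis, the bracket in \eqref{eq:probrecovery2} is at most $(1+o(1))/(k\ell^{k/2})$. The chain therefore gives failure probability $O\bigl(k^{1-2r\ell}\ell^{-kr\ell}\bigr)$, and $kr\ell=D\ell$, not $kD$. Your inequality $\ell^{-kr\ell}\le\ell^{-kD}$ thus reads $\ell^{-D\ell}\le\ell^{-kD}$, which holds only for $\ell\ge k$. The extra factor $k^{-2r\ell}$ does not rescue it in general: for $k=5$, $\ell=2$ you get about $5\cdot 2^{-19.3r}$, while $\ell^{-kD}=2^{-25r}$. So no bookkeeping will extract $\ell^{-kD}$ from Lemma~\ref{lem:Meanvariance-recovery}. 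The rate in the statement appears to be a slip, the same one as in Theorem~\ref{thm:mean-var}, whose hypothesis makes the bracket at most $\ell^{-k/2}$ and hence yields $\ell^{-kr\ell}$. You should state and prove $1-O(\ell^{-D\ell})$ instead. This still gives all the qualitative conclusions, since $\ell^{-D\ell}\le 2^{-2D}$ and it tends to $0$ when $D$ or $\ell$ tends to infinity. Your Step~4 absorption also survives, because $\ell^{D\ell}\le (D\ell)^{D\ell}\le n^{0.3}$ in this regime.

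\textbf{The clean-up condition.} With $m'=n^{1.1}$, the proof of Lemma~\ref{lem:cleanup} needs $\delta^2n^{0.1}\gtrsim\log n$, and nothing in the hypotheses provides it. Combining the lower bound on $m$ with $p\le 1$ only gives $\delta^2\gtrsim n^{-k/2}D^{-(k/2-1)}$, so the derivation you begin cannot succeed. Your fallback is the correct fix and should be the main choice: split with $\rho:=1/\log n$. Then $\mu_i=k\rho m/n$. By the hypothesis on $m$ (with $D^{k/2-1}=n^{o(1)}$), the Hoeffding exponent satisfies $\delta^2k\rho m/(16n)\ge n^{k/2-1-o(1)}\ge n^{1/2-o(1)}$ for $k\ge3$. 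Rerunning the proof of Lemma~\ref{lem:cleanup} directly in the Bernoulli model (no concentration of the pool size is needed) gives failure probability at most $n\exp(-n^{1/2-o(1)})$, and the odd-$k$ sign test is handled the same way. The main pool shrinks only by the factor $1-1/\log n$. This inflates the weak-recovery bound by $(1-1/\log n)^{-2r\ell}=1+o(1)$, since $r\ell=O(\log n/\log\log n)$. With these two changes the argument is complete.
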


In particular, for every sufficiently large constant $D$ divisible by $k$, depending only on $k$, the algorithm achieves weak recovery. Moreover, if either $\ell=\omega(1)$ or $D=\omega(1)$, then the success probability tends to $1$. The algorithm achieves exact recovery with high probability when $k$ is odd, and recovery up to global sign with high probability when $k$ is even.
\section{Hypergraph Color Coding and Statistics Approximation}
In this section, we show how to turn the hypergraph statistics from the previous section into efficient algorithms. The main tool is hypergraph color coding, which allows us to replace the full embedding sums by colorful embedding sums that can be computed via dynamic programming over the bounded-width decomposition of the pattern family.
\subsection{Approximating the detection statistic}
We now describe an efficient approximation to the detection statistic $F_{\mathcal H}(z)$. Variants of the same color-coding idea have also been used in \cite{hopkins2017bayesian,li2025smooth,li2025algorithmic} to obtain algorithms faster than brute-force enumeration. The basic idea is to retain only those embeddings whose vertex images receive distinct colors under a random coloring of the ambient vertex set, which makes the resulting statistic amenable to dynamic programming.

Recall that every pattern $H\in\mathcal H$ satisfies $|V(H)|=v=rk\ell$, $|E(H)|=s=2r\ell$.
Define $\rho:=\frac{v!}{v^v}$.
Let $\tau:[n]\to [v]$
be a random coloring obtained by assigning to each vertex of $[n]$ an independent uniformly random color in $[v]$. For any subset $W\subseteq [n]$, define
\[
\chi_\tau(W):=\mathbf{1}\bigl[\tau|_W \text{ is injective}\bigr].
\]
In particular, whenever $|W|=v$,
\[
\Pr\!\bigl[\chi_\tau(W)=1\bigr]=\rho.
\]
Equivalently, this is the probability that all vertices in $W$ receive distinct colors under $\tau$.

For a labeled pattern $H\in\mathcal H$, define the corresponding colorful embedding sum by
\[
G_H(z,\tau)
:=
\sum_{\phi:V(H)\hookrightarrow [n]}
\chi_\tau\!\bigl(\phi(V(H))\bigr)
\prod_{e\in E(H)} z\!\bigl(\phi(e)\bigr).
\]
Thus $G_H(z,\tau)$ counts only those injective embeddings of $H$ whose images are colorful with respect to $\tau$. See Algorithm~\ref{alg:dp-detec} for the detailed computation.

\begin{algorithm}[htbp]
\LinesNumbered
\DontPrintSemicolon
\caption{Computation of $G_H(z,\tau)$}\label{alg:dp-detec}
\KwIn{
A weighted $k$-uniform host hypergraph $z$ on $[n]$, where each hyperedge $\alpha\in\binom{[n]}{k}$ has weight $z(\alpha)\in\{0,\pm1\}$ (with $z(\alpha)=0$ indicating that $\alpha$ is absent); a coloring $\tau:[n]\to [rk\ell]$; and a pattern $H\in\mathcal H(r,k,\ell)$ together with a fixed decomposition
$H=U_1\cup\cdots\cup U_\ell$
as in Definition~\ref{def:family}.
}
\KwOut{
The colorful contribution $G_H(z,\tau)$.
}

For each block $U_a$, let $L(U_a)=\{u_a,u_{a+1}\}$ denote its two interface vertices, where $u_{\ell+1}=u_1$\;

For every $a\in[\ell]$, every color set $C\subseteq [rk\ell]$ with $|C|=rk+1$, every two distinct colors $c_1,c_2\in C$, and every two distinct vertices $x,y\in[n]$, compute the base table $\Lambda_a(x,y;c_1,c_2;C)$
defined by
\[
\Lambda_a(x,y;c_1,c_2;C)
:=
\sum_{\phi:V(U_a)\hookrightarrow[n]}
\mathbf{1}\!\left[
\begin{array}{l}
\tau(\phi(V(U_a)))=C,\\
\phi(u_a)=x,\ \phi(u_{a+1})=y,\\
\tau(x)=c_1,\ \tau(y)=c_2
\end{array}
\right]
\prod_{e\in E(U_a)} z\!\bigl(\phi(e)\bigr).
\]\;
For every $t=2,3,\dots,\ell$, define the suffix table $Y_t(x,y;c_1,c_2;C)$
to be the total contribution of colorful embeddings of the chain $U_t\cup U_{t+1}\cup\cdots\cup U_\ell$
whose two exposed interface vertices are mapped to $x,y$, with colors $c_1,c_2$, and whose image uses exactly the color set $C$\;

Initialize the dynamic program by setting
\[
Y_\ell(x,y;c_1,c_2;C):=\Lambda_\ell(x,y;c_1,c_2;C)
\]
for all valid parameters\;

For $t=\ell-1,\ell-2,\dots,2$, compute $Y_t$ from $Y_{t+1}$ by gluing $U_t$ to the suffix $U_{t+1}\cup\cdots\cup U_\ell$ along their unique shared interface vertex:
\[
Y_t(x,y;c_1,c_2;C)
=
\sum_{\substack{z\in[n]\\ c\in C}}
\ \sum_{\substack{C_1,C_2\subseteq C\\ C_1\cup C_2=C\\ C_1\cap C_2=\{c\}}}
\Lambda_t(x,z;c_1,c;C_1)\,
Y_{t+1}(z,y;c,c_2;C_2).
\]
The sum ranges over all intermediate vertices $z$, intermediate colors $c$, and all compatible decompositions of the color set $C$ in which the two parts intersect exactly in the shared color $c$\;

Finally, compute $G_H(z,\tau)$ by gluing $U_1$ to the suffix $U_2\cup\cdots\cup U_\ell$ and closing the cycle:
\[
G_H(z,\tau)
=
\sum_{\substack{x,y\in[n]\\ x\neq y}}
\ \sum_{\substack{c_1,c_2\in[rk\ell]\\ c_1\neq c_2}}
\ \sum_{\substack{C_1,C_2\subseteq[rk\ell]\\ C_1\cup C_2=[rk\ell]\\ C_1\cap C_2=\{c_1,c_2\}}}
\Lambda_1(x,y;c_1,c_2;C_1)\,
Y_2(y,x;c_2,c_1;C_2).
\]
Here the shared colors $c_1,c_2$ correspond to the two interface vertices through which $U_1$ is attached to the remainder of the cycle\;

\Return{$G_H(z,\tau)$}\;
\end{algorithm}

\begin{lemma}\label{lem:compute-colorful-H}
For any coloring $\tau:[n]\to [rk\ell]$ and any pattern $H\in\mathcal H$,
the colorful count $G_H(z,\tau)$ can be computed by Algorithm~\ref{alg:dp-detec} in time
\[
n^{rk+2+o(1)}.
\]
\end{lemma}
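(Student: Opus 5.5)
We prove correctness and the running time separately. For correctness, the key observation is that a colorful embedding $\phi$ of $H=U_1\cup\cdots\cup U_\ell$ decomposes uniquely into its restrictions $\phi|_{V(U_a)}$. Each restriction is a colorful embedding of $U_a$. The color sets $C_a:=\tau(\phi(V(U_a)))$ pairwise intersect exactly in the colors of the shared interface vertices and cover $[rk\ell]$ (since $|V(H)|=rk\ell$ and $\phi$ is colorful, the coloring is a bijection onto $[rk\ell]$).

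Conversely, take block embeddings that agree on the shared interface vertices and whose color sets intersect only in the shared interface colors. Then the glued map is injective, because distinct colors force distinct images. It is therefore a colorful embedding of $H$. Moreover $\prod_{e\in E(H)}z(\phi(e))$ factors as $\prod_a\prod_{e\in E(U_a)}z(\phi(e))$, since $E(H)$ is the disjoint union of the $E(U_a)$.

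By backward induction on $t$, this gives that $Y_t(x,y;c_1,c_2;C)$ equals the sum over colorful embeddings of the chain $U_t\cup\cdots\cup U_\ell$ with the stated boundary data. The one point to check in the induction is that the constraint $C_1\cap C_2=\{c\}$ exactly encodes injectivity across the glued pieces. The final closing step likewise uses $C_1\cap C_2=\{c_1,c_2\}$, where $U_1$ meets the rest of the cycle in $\{v_1,v_2\}$. With $\ell\ge 2$ and the ordering convention $Y_2(y,x;\ldots)$, each cyclic embedding is counted exactly once.

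For the running time, the dominant cost is the base tables $\Lambda_a$. For each $a$ we enumerate all injections $\phi:V(U_a)\hookrightarrow[n]$, of which there are at most $n^{rk+1}$. For each one we evaluate the $2r$ edge weights in $O(rk)$ time, read off $x,y,c_1,c_2,C$, and add the contribution to the corresponding table entry. This costs $\ell\cdot n^{rk+1}\cdot\mathrm{poly}(rk)$ in total.

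The dynamic-programming steps are cheaper. Each table has at most $n^2\binom{rk\ell}{\le rk\ell}(rk\ell)^2\le n^2\,2^{rk\ell}(rk\ell)^2$ entries. Each entry is a sum over $z\in[n]$, $c\in C$, and at most $2^{rk\ell}$ splittings, so each step costs $n^3\cdot 4^{rk\ell}\mathrm{poly}(rk\ell)$, and the closing step costs similarly with $n^2$.

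Under the standing assumption $D\ell=rk\ell<0.3\log n/\log\log n$, every factor $2^{O(rk\ell)}$, $\mathrm{poly}(rk\ell)$ and $\ell$ is $n^{o(1)}$. The total is therefore $n^{\max(rk+1,3)+o(1)}\le n^{rk+2+o(1)}$ when $rk\ge 1$. If one prefers the cruder accounting that stores $\Lambda$ for every pair $(x,y)$ and enumerates the inner $rk-1$ vertices per pair, the count is $n^2\cdot n^{rk-1}$, which again fits within $n^{rk+2+o(1)}$.

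The main obstacle is the exact bookkeeping in the correctness induction. We must show that the color-set partition constraints make the glued map injective and produce no double counting, particularly at the cycle-closing step, where two interface colors are shared. This is what we would verify most carefully. The running-time bound is routine once the $2^{O(rk\ell)}=n^{o(1)}$ absorption is noted.
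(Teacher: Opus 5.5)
Your proposal is correct and follows the paper's accounting: the base tables $\Lambda_a$ dominate the cost, each gluing step for $Y_t$ costs $n^{3+o(1)}$, and all color data contributes only $2^{O(rk\ell)}=n^{o(1)}$ under $rk\ell<0.3\log n/\log\log n$. The paper's proof covers only the running time and takes correctness of the recursion for granted, so your unique-decomposition and injectivity check is a welcome addition, as is your one-pass enumeration giving the sharper $n^{\max(rk+1,3)+o(1)}$; neither changes the route.
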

\begin{proof}
Fix $\tau:[n]\to[rk\ell]$ and $H\in\mathcal H$, and write
\[
H=U_1\cup\cdots\cup U_\ell
\]
as in Algorithm~\ref{alg:dp-detec}, where each $U_t\in\mathcal U(r,k)$ has
$|V(U_t)|=rk+1$ and boundary vertices $L(U_t)=\{v_t,v_{t+1}\}$.

By the recursion in Algorithm~\ref{alg:dp-detec}, the computation of
$G_H(z,\tau)$ reduces to filling the tables
\[
\Lambda_t(x,y;c_1,c_2;C)
\qquad\text{and}\qquad
Y_t(x,y;c_1,c_2;C),
\]
where $x,y\in[n]$, $c_1,c_2\in[rk\ell]$, and $C\subseteq[rk\ell]$.
Since $rk\ell=O(\log n/\log\log n)$, the total number of possible color data
$(c_1,c_2,C)$ is $n^{o(1)}$, so the total number of states is $n^{2+o(1)}$.

For each fixed state, $\Lambda_t(x,y;c_1,c_2;C)$ can be computed by brute-force
enumeration of the remaining $rk-1$ vertices of $U_t$, and hence in time
$O(n^{rk+o(1)})$. Therefore all $\Lambda_t$-tables together take time
$O(n^{rk+2+o(1)})$. Next, each $Y_t(x,y;c_1,c_2;C)$ is obtained from the
recursion by summing over an intermediate vertex $z\in[n]$ and over
$n^{o(1)}$ many color splittings, so each entry takes time $n^{1+o(1)}$, and
hence all $Y_t$-tables together take time $O(n^{3+o(1)})$. Finally, the last
summation producing $G_H(z,\tau)$ has $n^{2+o(1)}$ terms.

Combining these bounds, the total running time is
\[
O(n^{rk+2+o(1)})+O(n^{3+o(1)})+O(n^{2+o(1)})
=
O(n^{rk+2+o(1)}).
\]
This proves the claim.
\end{proof}

Conditioned on the observation $z$, each injective embedding survives with probability exactly $\rho$. Therefore
\[
\E\!\left[G_H(z,\tau)\mid z\right]
=
\rho
\sum_{\phi:V(H)\hookrightarrow [n]}
\prod_{e\in E(H)} z\!\bigl(\phi(e)\bigr).
\]
It follows that $G_H(z,\tau)/\rho$ is an unbiased estimator of the full embedding sum
\[
T_H(z)
:=
\sum_{\phi:V(H)\hookrightarrow [n]}
\prod_{e\in E(H)} z\!\bigl(\phi(e)\bigr).
\]

To reduce the additional variance introduced by color coding, we average over several independent colorings. Let $t:=\left\lceil \frac1\rho \right\rceil$,
and let $\tau_1,\dots,\tau_t$ be independent copies of $\tau$. Define
\[
\widetilde T_H(z)
:=
\frac{1}{t\rho}\sum_{a=1}^t G_H(z,\tau_a).
\]
Then $\widetilde T_H(z)$ is still an unbiased estimator of $T_H(z)$, but with reduced variance.

We then define the color-coded approximation to $F_{\mathcal H}(z)$ by replacing each exact embedding sum with its averaged colorful counterpart:
\[
\widetilde F_{\mathcal H}(z)
:=
\sum_{H\in\mathcal H}\widetilde T_H(z)
=
\sum_{H\in\mathcal H}
\left(
\frac{1}{t\rho}\sum_{a=1}^t G_H(z,\tau_a)
\right).
\]

The advantage of this construction is that $\widetilde F_{\mathcal H}(z)$ has the same expectation as the original statistic $F_{\mathcal H}(z)$, while being computable via dynamic programming over colorful partial embeddings. The next lemma shows that $\widetilde F_{\mathcal H}(z)$ is asymptotically equivalent to $F_{\mathcal H}(z)$ in the relevant regime. See Algorithm~\ref{alg:detec-final} for the detailed computation.
\begin{algorithm}[htbp]
\caption{Computation of $\widetilde F_{\mathcal H}(z)$}
\label{alg:detec-final}
\LinesNumbered
\DontPrintSemicolon

\KwIn{A weighted $k$-uniform host hypergraph $z$ on $[n]$, where each hyperedge $\alpha\in\binom{[n]}{k}$ has weight $z(\alpha)\in\{0,\pm1\}$, with $z(\alpha)=0$ indicating that $\alpha$ is absent; integers $r,k,\ell$.}
\KwOut{The approximate statistic $\widetilde F_{\mathcal H}(z)$.}

Set $v\gets rk\ell$, $\rho\gets v!/v^v$, and $t\gets \lceil 1/\rho\rceil$\;

Enumerate the family $\mathcal H=\mathcal H(r,k,\ell)$\;

\ForEach{$H\in\mathcal H$}{
    Fix a decomposition $H=U_1\cup\cdots\cup U_\ell$ such that $U_a\in\mathcal U(r,k)$ for all $a\in[\ell]$\;
}

\For{$a=1$ \KwTo $t$}{
    Sample an independent uniformly random coloring $\tau_a:[n]\to[v]$\;
    \ForEach{$H\in\mathcal H$}{
        Compute $G_H(z,\tau_a)$ using Algorithm~\ref{alg:dp-detec}\;
    }
}

\ForEach{$H\in\mathcal H$}{
    Set $\widetilde T_H(z)\gets \frac{1}{t\rho}\sum_{a=1}^t G_H(z,\tau_a)$\;
}

Set $\widetilde F_{\mathcal H}(z)\gets \sum_{H\in\mathcal H}\widetilde T_H(z)$\;

\Return{$\widetilde F_{\mathcal H}(z)$}\;
\end{algorithm}

\begin{lemma}\label{prop:compute-approx-F}
Under the same parameter regime assumed in the analysis of
$F_{\mathcal H}(z)$, the above algorithm computes
$\widetilde F_{\mathcal H}(z)$ in time
\[
n^{rk+2+o(1)}.
\]
\end{lemma}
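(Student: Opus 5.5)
The running time splits into three pieces: enumerating $\mathcal H$, drawing the $t$ colorings, and making $t\cdot|\mathcal H|$ calls to Algorithm~\ref{alg:dp-detec}. Lemma~\ref{lem:compute-colorful-H} bounds each call by $n^{rk+2+o(1)}$. So it suffices to show that the overhead factors $t$ and $|\mathcal H|$, together with the enumeration cost of $\mathcal H$, are all $n^{o(1)}$. The underlying parameter regime is $v=rk\ell=D\ell<0.3\log n/\log\log n$, with $k$ fixed.

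\textbf{The number of colorings $t$.} Since $\rho=v!/v^v\ge e^{-v}$, we get $t=\lceil 1/\rho\rceil\le e^{v}+1$. Because $v=o(\log n)$, this is $n^{o(1)}$. Sampling one coloring $\tau_a:[n]\to[v]$ costs $O(n\log v)$, so all colorings together cost $n^{1+o(1)}$.

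\textbf{The size and enumeration of $\mathcal H$.} Every $H\in\mathcal H$ is a labeled $k$-uniform hypergraph on $[v]$ with $s=2r\ell\le 2v$ edges. Hence
\[
|\mathcal H|\le \binom{v}{k}^{s}\le v^{ks}\le v^{2kv}=\exp\bigl(2kv\log v\bigr).
\]
Since $v\log v\le v\log\log n<0.3\log n$, this bound is $n^{0.6k}$. That is polynomial in $n$ but not $n^{o(1)}$, and this is the main obstacle.

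I would handle it in one of two ways.

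\emph{Option 1: tighten the regime.} I would note that the regime "assumed in the analysis" uses $v^v=O(n^{0.3})$, which only gives $v\log v\le 0.3\log n$. To make $|\mathcal H|=n^{o(1)}$ one needs $v\log v=o(\log n)$. This holds if $v=o(\log n/\log\log n)$, and I would state that the lemma is read under this slightly stronger condition. In the intended applications the constant $0.3$ (or $0.1$ in the main theorem) is exactly the slack used here.

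\emph{Option 2 (preferred): avoid enumerating labeled patterns.} The statistic depends only on the isomorphism type of each pattern. Writing $F_{\mathcal H}=\sum_{[H]}|\{H'\in\mathcal H: H'\cong H\}|\,T_H$, it suffices to run the dynamic program once per gadget sequence $(U_1,\dots,U_\ell)$ up to relabeling. One can go further: since the dynamic program glues blocks independently, the sums over gadgets can be pushed inside the $\Lambda$-tables, by letting $\Lambda$ aggregate over all $U\in\mathcal U(r,k)$.

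\textbf{Enumerating the gadgets themselves.} Listing $\mathcal U(r,k)$, that is, $k$-uniform hypergraphs on $rk+1$ vertices with $2r$ edges, by brute force costs $\binom{rk+1}{k}^{2r}\cdot\mathrm{poly}(rk)=\exp(O(D\log D))$. This is $n^{o(1)}$ when $D=o(\log n/\log\log n)$. Checking connectivity and the vertex-deletion condition costs $\mathrm{poly}(D)$ per candidate.

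\textbf{Combining the pieces.} The total running time is
\[
n^{o(1)}\cdot n^{o(1)}\cdot n^{rk+2+o(1)}+n^{1+o(1)}=n^{rk+2+o(1)},
\]
which is the claimed bound.
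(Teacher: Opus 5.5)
Your suspicion is justified. The paper's proof simply asserts that $|\mathcal H|$ and $t$ are $n^{o(1)}$ ``by the parameter regime'' and multiplies by the per-pattern bound of Lemma~\ref{lem:compute-colorful-H}. For $|\mathcal H|$ this fails near the edge of the regime: Li's lower bound gives $\log|\mathcal H|\ge 2(1-1/k)\,v\log D-O(v\log k)$, which is $\Theta(\log n)$ when $\ell$ is bounded and $D\ell$ is of order $\log n/\log\log n$.

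However, your proposal does not prove the lemma as stated. Option~1 replaces the hypothesis by the strictly stronger $v=o(\log n/\log\log n)$. Your remark that the constant $0.3$ supplies this slack is wrong: $v\le c\log n/\log\log n$ gives $v\log v\approx c\log n$, hence an overhead $n^{\Theta(c)}$, not $n^{o(1)}$.

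Option~2 analyzes a different algorithm from ``the above algorithm'', and it still does not reach $n^{o(1)}$ overhead. There are at least $|\mathcal H|/v!\ge\exp\bigl((1-2/k)v\log v-O(v\log(k\ell))\bigr)$ isomorphism classes, again $n^{\Omega(1)}$ at the edge of the regime. Your bound $\exp(O(D\log D))$ for listing $\mathcal U(r,k)$ is $n^{o(1)}$ only under the Option~1 hypothesis. Finally, aggregating over gadgets inside $\Lambda$ weights each image in proportion to its number of ordered chain decompositions, whereas $\sum_{H\in\mathcal H}G_H$ gives every image the same weight $v!$. So recovering $\widetilde F_{\mathcal H}$ would also require that number to be constant over $\mathcal H$. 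The factor $n^{o(1)}\cdot n^{o(1)}$ in your final line is therefore unjustified in the stated regime.

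The missing idea is that the per-pattern bound has a full factor of $n$ of slack, which absorbs the polynomial size of $\mathcal H$.
\begin{itemize}
\item \emph{Per-call cost.} Each base table has $n^{2+o(1)}$ states. An entry needs an enumeration over only the $rk-1$ remaining vertices of the block, i.e.\ at most $n^{rk-1}$ maps; the ``$n^{rk+o(1)}$ per state'' in that proof overcounts. The recursion costs $n^{3+o(1)}\le n^{rk+1+o(1)}$. So one call of Algorithm~\ref{alg:dp-detec} costs $n^{rk+1+o(1)}$.
\item \emph{Size of $\mathcal H$.} Since $ks=2v$ exactly, $|\mathcal H|\le\binom{v}{k}^{s}\le v^{ks}=v^{2v}=O(n^{0.6})$ under $v^v=O(n^{0.3})$. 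Your bound $v^{2kv}$, from the loose $s\le 2v$, would be too weak for this.
\item \emph{Colorings.} $t\le e^{v}+1=n^{o(1)}$.
\end{itemize}
Hence the unmodified algorithm runs in $t\,|\mathcal H|\,n^{rk+1+o(1)}\le n^{rk+1.6+o(1)}$. Listing $\mathcal H$ adds at most $v^{2v}\cdot v^{\ell}\cdot\mathrm{poly}(v)=O(n^{0.9+o(1)})$ (all candidate edge sets times all choices of interface vertices). This is within the claimed $n^{rk+2+o(1)}$, with neither the hypothesis nor the algorithm changed.
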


\begin{proof}
The algorithm enumerates all $H\in\mathcal H$, and for each $H$, computes
$G_H(z,\tau_a)$ for $a=1,\dots,t$. By the parameter regime, both
$|\mathcal H|$ and $t$ are $n^{o(1)}$. Moreover, by
Lemma~\ref{lem:compute-colorful-H}, for each fixed $H\in\mathcal H$ and each fixed
coloring $\tau_a$, the quantity $G_H(z,\tau_a)$ can be computed in time
\[
O\!\left(n^{rk+2+o(1)}\right).
\]
Hence the total time for all these computations is
\[
|\mathcal H|\cdot t\cdot n^{rk+2+o(1)}
=
n^{o(1)}\cdot n^{rk+2+o(1)}
=
n^{rk+2+o(1)}.
\]
Since the remaining averaging and normalization steps contribute only an
additional $n^{o(1)}$ overhead, the whole algorithm computes
$\widetilde F_{\mathcal H}(z)$ within time $n^{rk+2+o(1)}$.
\end{proof}
We next show that the color-coded statistic is not only efficiently computable, but also statistically faithful to the original statistic.

\begin{lemma}\label{lem:concentration-detec}
If $\widetilde F_{\mathcal H}(z)$ is defined as above, then under the same assumptions as in
the second-moment analysis,
\[
\frac{\widetilde F_{\mathcal H}(z)-F_{\mathcal H}(z)}
{\mathbb E_{\PP_x}[F_{\mathcal H}(z)]}
\;\xrightarrow[L_2]{}\;0
\]
under both $\PP_x$ and $\mathbb Q$.
Hence $\widetilde F_{\mathcal H}(z)$ has the same asymptotic detection power as
$F_{\mathcal H}(z)$.
\end{lemma}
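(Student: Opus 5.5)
We condition on the observation $z$ and control the color-coding error through its conditional variance. Write $\Delta:=\widetilde F_{\mathcal H}(z)-F_{\mathcal H}(z)$. Since each $\widetilde T_H(z)$ is conditionally unbiased for $T_H(z)$, we have $\E[\Delta\mid z]=0$. Hence, under either $\PP_x$ or $\QQ$,
\[
\E[\Delta^2]=\E_z\bigl[\Var_\tau(\Delta\mid z)\bigr].
\]
The colorings $\tau_1,\dots,\tau_t$ are i.i.d., so
\[
\Var_\tau(\Delta\mid z)=\frac{1}{t\rho^2}\Var_{\tau}\Bigl(\sum_{H}G_H(z,\tau)\Bigm| z\Bigr)\le \frac{1}{t\rho^2}\E_\tau\Bigl[\Bigl(\sum_H G_H(z,\tau)\Bigr)^2\Bigm| z\Bigr].
\]
We expand the square as a double sum over pairs $(H,\varphi),(H',\psi)$ of $\chi_\tau(\varphi(V(H)))\chi_\tau(\psi(V(H')))$ times the product of $z$-values. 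Taking expectation over $\tau$ first bounds the joint indicator probability by $\rho$, since it is at most the probability that one set is colorful. Then we take expectation over $z$. Because $t\rho\ge1$, this yields
\[
\E[\Delta^2]\le \frac{1}{t\rho}\sum_{H,H',\varphi,\psi}\Bigl|\E_z\Bigl[\prod_{e}z(\varphi(e))\prod_{e'}z(\psi(e'))\Bigr]\Bigr|\le \sum_{H,H',\varphi,\psi}\bigl|\E_z[\cdots]\bigr|.
\]

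The key step is that the absolute values of the summands coincide with the terms already handled in the second-moment analysis. Under $\QQ$, the only nonzero terms are the edge-matched pairs, so the bound is exactly $|\mathcal H|(n)_v v!\,p^s=\Var_\QQ(F_{\mathcal H})$. Under $\PP_x$, each term equals $p^{2s-j}\delta^{2s-2j}\ge0$, because the symmetric difference is vertex-even. Therefore the sum is exactly $\E_{\PP_x}[F_{\mathcal H}^2]=\Var_{\PP_x}(F_{\mathcal H})+\E_{\PP_x}[F_{\mathcal H}]^2$.

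This last term is the main obstacle: the crude bound leaves an $\E[F]^2$ term, which is not small relative to $\E[F]^2$. To fix this, we refine the $\tau$-expectation using the vertex overlap $i$. Colorings are independent on disjoint vertex sets, so for $i=0$ the indicators are independent. The covariance
\[
\Pr[\chi\chi']-\rho^2
\]
then vanishes, and the $i=0$ pairs drop out of $\Var_\tau$. We should therefore bound $\Var_\tau(\Delta\mid z)$ directly via covariances rather than second moments.

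For $i\ge1$, we use the bound $\Pr[\chi\chi']\le\rho$. This gives
\[
\E[\Delta^2]\le \frac{1}{t\rho}\sum_{i\ge1}\sum_j N(i,j)\,p^{2s-j}\delta^{2s-2j},
\]
which is exactly the quantity controlled in the proof of Lemma~\ref{lem:Meanvariance} (Appendix~\ref{appex:mean-var}). One subtlety is that the pairs with $i\ge1$ and $j=0$ were dropped in \eqref{eq:VarP-i-j}. We must bound them separately, using Lemma~\ref{lem:Nij-sum}: they number at most
\[
|\mathcal H|^2(n)_v\,v^2(n)_{v-1},
\]
so they contribute at most
\[
O\bigl(v^2/n\bigr)\,\E_{\PP_x}[F]^2=o\bigl(\E_{\PP_x}[F]^2\bigr),
\]
using $v^v=O(n^{0.3})$. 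Under $\QQ$, terms with $j<s$ vanish, so no correction is needed there. Combining these bounds gives
\[
\frac{\E[\Delta^2]}{\E_{\PP_x}[F_{\mathcal H}]^2}\le\frac{\Var_{\PP_x}(F_{\mathcal H})+\Var_\QQ(F_{\mathcal H})}{\E_{\PP_x}[F_{\mathcal H}]^2}+o(1).
\]
Under the parameter assumptions of Lemma~\ref{lem:Meanvariance}, the right-hand side is $o(1)$. This proves the $L_2$ convergence, and the claim about asymptotic detection power then follows from Lemma~\ref{lem:mean-variance-separation} via Chebyshev's inequality.
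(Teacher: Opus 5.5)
Your proposal is correct and follows essentially the same route as the paper. After conditioning on $z$, the coloring factor for each pair of embeddings is a covariance that vanishes for vertex-disjoint images and is at most $1/(t\rho)\le 1$ otherwise; combined with the nonnegativity of the $z$-moments, this reduces $\E[(\widetilde F_{\mathcal H}-F_{\mathcal H})^2]$ to the overlap sums from the second-moment analysis. Your separate $O(v^2/n)\,\E_{\PP_x}[F_{\mathcal H}]^2$ bound for overlapping pairs with $j=0$ (dropped in \eqref{eq:VarP-i-j}) makes explicit a step the paper passes over. That count comes from a direct union bound over one shared vertex, not from Lemma~\ref{lem:Nij-sum}, which only covers $j\ge 1$.
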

The proof is deferred to Appendix~\ref{apx:proof-concentration1}.

Since convergence in $L^2$ implies convergence in probability, for every fixed $\varepsilon>0$,
\[
\PP_x\!\left(
\left|
\widetilde F_{\mathcal H}(z)-F_{\mathcal H}(z)
\right|
>
\varepsilon\,\E_{\PP_x}[F_{\mathcal H}(z)]
\right)\to 0,
\]
and
\[
\QQ\!\left(
\left|
\widetilde F_{\mathcal H}(z)-F_{\mathcal H}(z)
\right|
>
\varepsilon\,\E_{\PP_x}[F_{\mathcal H}(z)]
\right)\to 0.
\]
Hence $\widetilde F_{\mathcal H}(z)$ has the same asymptotic detection power as $F_{\mathcal H}(z)$.

\begin{remark}
The color-coding step can also be derandomized by replacing the random colorings with a balanced family of perfect hash functions from \cite{alon2010balanced}. Concretely, one may use a $\delta$-balanced $(n,v)$-family so that every $v$-subset of $[n]$ is colorful under essentially the same number of colorings, up to a multiplicative $\delta$ factor. Averaging over such a family preserves the approximation guarantees up to negligible losses, while yielding a deterministic implementation of the statistics.
\end{remark}
\subsection{Approximating the recovery statistic}

We now describe an efficient approximation to the rooted recovery statistic
$F_{\mathcal J,a,b}(z)$. As in the detection case, we use hypergraph color coding to
restrict attention to colorful embeddings, which can then be computed by dynamic
programming over the chain decomposition
\[
J = U_1 \cup \cdots \cup U_\ell
\]
of a pattern $J\in\mathcal J(r,k,\ell)$. This is the path analogue of the cycle-based
dynamic program used for detection. See also Li's recovery-side construction, which uses
the same gluing principle for a chain with two exposed leaves. 

Recall that every $J\in\mathcal J$ has
\[
|V(J)|=v:=rk\ell+1,
\qquad
|E(J)|=s:=2r\ell,
\qquad
L(J)=\{v_1,v_{\ell+1}\}.
\]
Define $\rho_{\rm rec}:=\frac{v!}{v^v}$.
Let $\tau:[n]\to [v]$
be a random coloring obtained by assigning to each vertex of $[n]$ an independent
uniform color in $[v]$. For any subset $W\subseteq [n]$, define
\[
\chi_\tau(W):=\mathbf 1\!\bigl[\tau|_W \text{ is injective}\bigr].
\]
Then for every $W\subseteq[n]$ with $|W|=v$,
\[
\Pr\!\bigl[\chi_\tau(W)=1\bigr]=\rho_{\rm rec}.
\]

For each labeled pattern $J\in\mathcal J$ and each ordered pair $(a,b)\in[n]^2$ with
$a\neq b$, define the colorful rooted embedding sum
\[
G_{J,a,b}(z,\tau)
:=
\sum_{\substack{\phi:V(J)\hookrightarrow[n]\\ \phi(v_1)=a,\ \phi(v_{\ell+1})=b}}
\chi_\tau\!\bigl(\phi(V(J))\bigr)
\prod_{e\in E(J)} z\!\bigl(\phi(e)\bigr).
\]
Thus $G_{J,a,b}(z,\tau)$ counts only those injective rooted embeddings of $J$ whose
image is colorful under $\tau$. See Algorithm~\ref{alg:dp-rec} for the detailed computation of $G_{J,a,b}(z,\tau)$.

\begin{algorithm}[htbp]\label{alg:dp-rec}
\LinesNumbered
\DontPrintSemicolon
\caption{Computation of $G_{J,a,b}(z,\tau)$}
\KwIn{
A weighted $k$-uniform host hypergraph $z$ on $[n]$, where each hyperedge
$\alpha\in\binom{[n]}{k}$ has weight $z(\alpha)\in\{0,\pm1\}$; a coloring
$\tau:[n]\to [v]$, where $v=rk\ell+1$; a pattern $J\in\mathcal J(r,k,\ell)$
together with a fixed decomposition
\[
J=U_1\cup\cdots\cup U_\ell
\]
as in Definition~\ref{def:family-recovery}; and a prescribed ordered pair
$(a,b)\in[n]^2$ with $a\neq b$.
}
\KwOut{
The colorful rooted contribution $G_{J,a,b}(z,\tau)$.
}

For each block $U_t$, let $L(U_t)=\{v_t,v_{t+1}\}$ denote its two interface vertices\;

For every $t\in[\ell]$, every color set $C\subseteq [v]$ with $|C|=rk+1$, every two distinct
colors $c_1,c_2\in C$, and every two distinct vertices $x,y\in[n]$, compute the base table
\[
\Lambda_t(x,y;c_1,c_2;C)
:=
\sum_{\phi:V(U_t)\hookrightarrow[n]}
\mathbf 1\!\left[
\begin{array}{l}
\tau(\phi(V(U_t)))=C,\\
\phi(v_t)=x,\ \phi(v_{t+1})=y,\\
\tau(x)=c_1,\ \tau(y)=c_2
\end{array}
\right]
\prod_{e\in E(U_t)} z\!\bigl(\phi(e)\bigr).
\]
\;

For every $t=1,2,\dots,\ell$, define the suffix table $Y_t(x,y;c_1,c_2;C)$ to be the total
contribution of colorful embeddings of the chain
\[
U_t\cup U_{t+1}\cup\cdots\cup U_\ell
\]
whose two exposed interface vertices are mapped to $x,y$, with colors $c_1,c_2$, and whose
image uses exactly the color set $C$\;

Initialize
\[
Y_\ell(x,y;c_1,c_2;C):=\Lambda_\ell(x,y;c_1,c_2;C)
\]
for all valid parameters\;

For $t=\ell-1,\ell-2,\dots,1$, compute $Y_t$ from $Y_{t+1}$ by gluing $U_t$ to the suffix
$U_{t+1}\cup\cdots\cup U_\ell$ along the shared interface vertex $v_{t+1}$:
\[
Y_t(x,y;c_1,c_2;C)
=
\sum_{\substack{z\in[n]\\ c\in C}}
\ \sum_{\substack{C_1,C_2\subseteq C\\ C_1\cup C_2=C\\ C_1\cap C_2=\{c\}}}
\Lambda_t(x,z;c_1,c;C_1)\,
Y_{t+1}(z,y;c,c_2;C_2).
\]
\;

Return
\[
G_{J,a,b}(z,\tau)
=
\sum_{\substack{c_1,c_2\in[v]\\ c_1\neq c_2}}
Y_1(a,b;c_1,c_2;[v]).
\]
\;
\end{algorithm}

Conditioned on $z$, every rooted injective embedding survives with probability exactly
$\rho_{\rm rec}$. Therefore
\[
\E\!\left[G_{J,a,b}(z,\tau)\mid z\right]
=
\rho_{\rm rec}\, T_{J,a,b}(z),
\]
where
\[
T_{J,a,b}(z)
=
\sum_{\substack{\phi:V(J)\hookrightarrow[n]\\ \phi(v_1)=a,\ \phi(v_{\ell+1})=b}}
\prod_{e\in E(J)} z\!\bigl(\phi(e)\bigr).
\]
Hence $G_{J,a,b}(z,\tau)/\rho_{\rm rec}$ is an unbiased estimator of $T_{J,a,b}(z)$.

To reduce the additional variance coming from color coding, let $t:=\left\lceil \frac{1}{\rho_{\rm rec}}\right\rceil$,
and let $\tau_1,\dots,\tau_t$ be independent copies of $\tau$. Define
\[
\widetilde T_{J,a,b}(z)
:=
\frac{1}{t\rho_{\rm rec}}
\sum_{q=1}^t G_{J,a,b}(z,\tau_q).
\]
We then set
\[
\widetilde F_{\mathcal J,a,b}(z)
:=
\sum_{J\in\mathcal J} \widetilde T_{J,a,b}(z).
\]
By construction,
\[
\E\!\left[\widetilde T_{J,a,b}(z)\mid z\right]=T_{J,a,b}(z),
\qquad
\E\!\left[\widetilde F_{\mathcal J,a,b}(z)\mid z\right]=F_{\mathcal J,a,b}(z).
\]
See Algorithm~\ref{alg:dp-rec} for the detailed computation of $G_{J,a,b}(z,\tau)$, and Algorithm~\ref{alg:rec-final} for the computation of $\widetilde F_{\mathcal J,a,b}(z)$.

\begin{algorithm}[htbp]
\caption{Computation of $\widetilde F_{\mathcal J,a,b}(z)$}
\label{alg:rec-final}
\LinesNumbered
\DontPrintSemicolon

\KwIn{A weighted $k$-uniform host hypergraph $z$ on $[n]$, where each hyperedge $\alpha\in\binom{[n]}{k}$ has weight $z(\alpha)\in\{0,\pm1\}$; integers $r,k,\ell$; and an ordered pair $(a,b)\in[n]^2$ with $a\neq b$.}
\KwOut{The approximate rooted statistic $\widetilde F_{\mathcal J,a,b}(z)$.}

Set $v\gets rk\ell+1$, $\rho_{\mathrm{rec}}\gets v!/v^v$, and $t\gets \lceil 1/\rho_{\mathrm{rec}}\rceil$\;

Enumerate the family $\mathcal J=\mathcal J(r,k,\ell)$\;

\ForEach{$J\in\mathcal J$}{
    Fix a decomposition $J=U_1\cup\cdots\cup U_\ell$ such that $U_t\in\mathcal U(r,k)$ for all $t\in[\ell]$\;
}

\For{$q=1$ \KwTo $t$}{
    Sample an independent uniformly random coloring $\tau_q:[n]\to[v]$\;
    \ForEach{$J\in\mathcal J$}{
        Compute $G_{J,a,b}(z,\tau_q)$ using Algorithm~\ref{alg:dp-rec}\;
    }
}

\ForEach{$J\in\mathcal J$}{
    Set $\widetilde T_{J,a,b}(z)\gets \frac{1}{t\rho_{\mathrm{rec}}}\sum_{q=1}^t G_{J,a,b}(z,\tau_q)$\;
}

Set $\widetilde F_{\mathcal J,a,b}(z)\gets \sum_{J\in\mathcal J}\widetilde T_{J,a,b}(z)$\;

\Return{$\widetilde F_{\mathcal J,a,b}(z)$}\;
\end{algorithm}
\begin{lemma}\label{lem:compute-colorful-J}
For any coloring $\tau:[n]\to[rk\ell+1]$, any ordered pair $(a,b)\in[n]^2$ with $a\neq b$,
and any pattern $J\in\mathcal J(r,k,\ell)$, the colorful rooted count $G_{J,a,b}(z,\tau)$ can be
computed by Algorithm~\ref{alg:dp-rec} in time
\[
n^{rk+2+o(1)}.
\]
\end{lemma}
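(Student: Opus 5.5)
The plan mirrors the proof of Lemma~\ref{lem:compute-colorful-H}: bound the number of table entries, then the cost of filling each table, then the cost of the final readout. Fix $\tau:[n]\to[v]$ with $v=rk\ell+1$, the pair $(a,b)$, and $J=U_1\cup\cdots\cup U_\ell$.

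First, correctness. Because the color sets of consecutive blocks are forced to intersect exactly in the shared interface color, every term produced by the recursion for $Y_t$ corresponds to exactly one colorful injective embedding of $U_t\cup\cdots\cup U_\ell$. Indeed, distinct colors force distinct vertices, so the glued map is injective, and conversely each colorful embedding decomposes uniquely by restriction. By induction on $t$ from $\ell$ down to $1$, $Y_1(a,b;c_1,c_2;[v])$ therefore sums $\chi_\tau\prod_e z(\phi(e))$ over rooted embeddings with $\phi(v_1)=a$, $\phi(v_{\ell+1})=b$ and $\tau(a)=c_1,\tau(b)=c_2$. Summing over $(c_1,c_2)$ gives $G_{J,a,b}(z,\tau)$.

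Second, state counting. Since $v=O(\log n/\log\log n)$ in the assumed regime, the number of color data $(c_1,c_2,C)$ is at most $v^2 2^v=n^{o(1)}$. Each table is indexed by $(x,y)\in[n]^2$ together with such color data, so there are $n^{2+o(1)}$ states per table and $\ell\le v$ tables in total.

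Third, cost per table. For each fixed $(x,y,c_1,c_2,C)$, the entry $\Lambda_t$ is obtained by brute-force enumeration of the remaining $rk-1$ vertices of $U_t$. Each candidate map requires checking colors and evaluating a product of $2r$ edge weights, at $n^{o(1)}$ cost. All $\Lambda$ tables together thus cost $n^{rk+1+o(1)}$; the exponent is $2+(rk-1)$, so this is comfortably within the target bound $n^{rk+2+o(1)}$. Each $Y_t$ entry sums over an intermediate vertex in $[n]$, a color $c$, and at most $2^{v}$ splittings $(C_1,C_2)$, so it costs $n^{1+o(1)}$ per entry. All $Y$ tables together therefore cost $n^{3+o(1)}$, which is also within the target for $rk\ge1$.

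Finally, the readout sums $n^{o(1)}$ entries of $Y_1$. Note that one could restrict to $x=a$ in $Y_1$, but this is not needed. Adding the three contributions gives $n^{rk+2+o(1)}$. The only delicate point is bookkeeping: the color-set index in $\Lambda_t$ must have size exactly $rk+1$ and the splittings must be disjoint apart from the shared color, so that injectivity across blocks is enforced purely through colors. This check is where I would be most careful; the remaining steps are routine.
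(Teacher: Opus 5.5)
Your proposal is correct and follows the paper's proof essentially step for step: $n^{2+o(1)}$ states, base tables filled by brute force over the remaining block vertices, suffix tables in $n^{3+o(1)}$ total, plus the readout. Your accounting is slightly sharper ($n^{rk+1+o(1)}$ for the $\Lambda_t$ tables and $n^{o(1)}$ for the readout, versus the paper's looser $n^{rk+2+o(1)}$ and $n^{2+o(1)}$), and your correctness argument via color splittings that share exactly one color is something the paper leaves implicit, but neither changes the route.
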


\begin{proof}
Write
\[
J=U_1\cup\cdots\cup U_\ell
\]
as in Algorithm~\ref{alg:dp-rec}, where $U_1,\dots,U_\ell\in\mathcal U(r,k)$ form a chain whose two leaves are mapped to the prescribed vertices $a,b\in[n]$. As in the proof of Lemma~\ref{lem:compute-colorful-H}, the computation of
$G_{J,a,b}(z,\tau)$ is reduced to filling dynamic-programming tables indexed by two vertices
of $[n]$ together with a color profile. Since $rk\ell=o(\log n/\log\log n)$, the total number
of possible color profiles is $n^{o(1)}$, and hence the total number of states is $n^{2+o(1)}$.

For each fixed state, the corresponding local quantity is obtained by brute-force enumeration of
the remaining $rk-1$ vertices inside one block $U_t$, which takes time $n^{rk+o(1)}$. Thus
all base tables can be filled in time $n^{rk+2+o(1)}$. The remaining suffix tables are computed
from the recursion by summing over one intermediate vertex and over $n^{o(1)}$ many color
splittings, so all such tables together require time $n^{3+o(1)}$. The final summation then
contributes only $n^{2+o(1)}$.

Combining these bounds gives total running time
\[
n^{rk+2+o(1)}+n^{3+o(1)}+n^{2+o(1)}
=
O\!\left(n^{rk+2+o(1)}\right).
\]
This proves the claim.
\end{proof}

\begin{lemma}\label{lem:compute-approx-F-rec}
Under the same parameter regime assumed in the analysis of $F_{\mathcal J,a,b}(z)$, the above
algorithm computes $\widetilde F_{\mathcal J,a,b}(z)$ in time
\[
n^{rk+2+o(1)}.
\]
More precisely, if $r,k,\ell$ are treated as fixed constants, then the running time is
\[
O\!\left(n^{rk+2+o(1)}\right).
\]
\end{lemma}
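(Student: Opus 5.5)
The plan follows the proof of Lemma~\ref{prop:compute-approx-F}, with Lemma~\ref{lem:compute-colorful-J} taking the place of Lemma~\ref{lem:compute-colorful-H}. Algorithm~\ref{alg:rec-final} has four stages: enumerating $\mathcal J=\mathcal J(r,k,\ell)$ and fixing a chain decomposition of each member; drawing $t=\lceil 1/\rho_{\rm rec}\rceil$ colorings; computing $G_{J,a,b}(z,\tau_q)$ for every pair $(J,q)$; and the final averaging and summation. I bound the cost of each stage and add them.

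\textbf{Counting the patterns and colorings.} Both $|\mathcal J|$ and $t$ are $n^{o(1)}$. Since $v=rk\ell+1$ and $D\ell<0.3\log n/\log\log n$, we get
\[
t\le 1+v^v/v!\le 1+e^{v}=n^{o(1)}.
\]
Every $J\in\mathcal J$ is a labeled $k$-uniform hypergraph on $[v]$ with $s=2r\ell$ edges, so
\[
|\mathcal J|\le \binom{v}{k}^{s}\le v^{ks}=\exp\bigl(O(v\log v)\bigr).
\]
This needs $v\log v=o(\log n)$. In the regime $v=O(\log n/\log\log n)$ we have $v\log v=O(\log n)$, which sits right at the borderline. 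If the constant $0.3$ leaves too little room, I would instead use the weaker standing assumption $v^v=O(n^{0.3})$ from Lemma~\ref{lem:Meanvariance-recovery}. That gives an $n^{O(1)}$ overhead rather than $n^{o(1)}$, so I would need to check which form the paper intends. I will follow the paper's convention (as in Lemma~\ref{prop:compute-approx-F}) that these combinatorial factors are $n^{o(1)}$.

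\textbf{Enumeration and decomposition.} Enumerating $\mathcal J$ means listing candidate edge sets on $[v]$ and checking membership in time polynomial in the candidate count. Finding a chain decomposition by brute force over partitions of $E(J)$ into $\ell$ gadgets costs at most $\exp(O(v\log v))$ per pattern. Neither step touches the input $z$, so this stage costs $n^{o(1)}$.

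\textbf{Main stage.} By Lemma~\ref{lem:compute-colorful-J}, each call computing $G_{J,a,b}(z,\tau_q)$ runs in time $n^{rk+2+o(1)}$. With $|\mathcal J|\cdot t=n^{o(1)}$ calls, the total is
\[
|\mathcal J|\cdot t\cdot n^{rk+2+o(1)}=n^{rk+2+o(1)}.
\]
Sampling the colorings costs $O(tn)$, and the averaging and summation cost $O(|\mathcal J|t)$, both lower order.

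\textbf{Fixed constants.} When $r,k,\ell$ are fixed, $|\mathcal J|$, $t$ and the number of color profiles are all $O(1)$. The bound then reads $O(n^{rk+2+o(1)})$, and the $o(1)$ in the exponent is only inherited from Lemma~\ref{lem:compute-colorful-J}.

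The only step requiring care is the $n^{o(1)}$ bound on $|\mathcal J|\cdot t$ and on the decomposition search, which depends on how tightly the parameter regime is stated. The rest is direct bookkeeping on top of Lemma~\ref{lem:compute-colorful-J}.
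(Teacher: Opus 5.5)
Your proposal is correct and is the paper's argument: multiply the per-call cost $n^{rk+2+o(1)}$ from Lemma~\ref{lem:compute-colorful-J} by the $|\mathcal J|\cdot t=n^{o(1)}$ calls, and note that sampling the colorings and the final averaging are lower order. Your explicit bounds $t\le 1+e^{v}$ and $|\mathcal J|\le v^{ks}=\exp(O(v\log v))$ are more careful than the paper, which simply asserts $n^{o(1)}$. The borderline you flag is real: for bounded $\ell$, Lemma~\ref{lem:counting-J} shows $|\mathcal J|$ is genuinely $v^{\Theta(v)}$. So the bound holds when $v=o(\log n/\log\log n)$, the regime invoked in the proof of Lemma~\ref{lem:compute-colorful-J}, but not literally across the full range $D\ell<0.3\log n/\log\log n$.
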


\begin{proof}
The proof is the same as that of Proposition~\ref{prop:compute-approx-F}. Indeed, the algorithm
enumerates all $J\in\mathcal J(r,k,\ell)$, and for each $J$, computes
$G_{J,a,b}(z,\tau_s)$ for all sampled colorings $\tau_s$. Under the parameter regime,
both $|\mathcal J|$ and the number of sampled colorings are $n^{o(1)}$. By
Lemma~\ref{lem:compute-colorful-J}, each quantity $G_{J,a,b}(z,\tau_s)$ can be computed in time
\[
O\!\left(n^{rk+2+o(1)}\right).
\]
Therefore the total running time is
\[
|\mathcal J|\cdot n^{o(1)}\cdot n^{rk+2+o(1)}
=
n^{rk+2+o(1)}.
\]
If $r,k,\ell$ are fixed constants, this is
\[
O\!\left(n^{rk+2+o(1)}\right).
\]
Since the remaining averaging and normalization steps contribute only $n^{o(1)}$ overhead, the
whole algorithm computes $\widetilde F_{\mathcal J,a,b}(z)$ within the claimed time bound.
\end{proof}
\begin{lemma}\label{lem:concentration-rec}
If $\widetilde F_{\mathcal J,a,b}(z)$ is defined as above, then under the same assumptions as in
the second-moment analysis of $F_{\mathcal J,a,b}(z)$,
\[
\frac{\widetilde F_{\mathcal J,a,b}(z)-F_{\mathcal J,a,b}(z)}
{\E_{\PP_x}[F_{\mathcal J,a,b}(z)]}
\;\xrightarrow[L_2]{}\;0
\]
under the planted model. Hence $\widetilde F_{\mathcal J,a,b}(z)$ has the same asymptotic recovery
power as $F_{\mathcal J,a,b}(z)$.
\end{lemma}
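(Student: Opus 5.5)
Since $\E[\widetilde F_{\mathcal J,a,b}\mid z]=F_{\mathcal J,a,b}$, the difference $\widetilde F_{\mathcal J,a,b}-F_{\mathcal J,a,b}$ has conditional mean zero. Its $L^2$ norm is therefore $\E_{\PP_x}[\Var(\widetilde F_{\mathcal J,a,b}\mid z)]$, and the plan is to bound this conditional variance and compare it with $\E_{\PP_x}[F_{\mathcal J,a,b}]^2=\bigl(|\mathcal J|(n-2)_{v-2}(p\delta)^s\bigr)^2$ from \eqref{eq:E-FJab}.

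Write $\widetilde F_{\mathcal J,a,b}=\frac{1}{t\rho_{\rm rec}}\sum_{q=1}^t\sum_{J}G_{J,a,b}(z,\tau_q)$. The colorings are independent, so
\[
\Var(\widetilde F\mid z)=\frac{1}{t\rho_{\rm rec}^2}\Var_\tau\Bigl(\sum_J G_{J,a,b}(z,\tau)\Bigm| z\Bigr)\le \frac{1}{t\rho_{\rm rec}^2}\,\E_\tau\Bigl[\Bigl(\sum_J G_{J,a,b}(z,\tau)\Bigr)^2\Bigm| z\Bigr].
\]
Expanding the square gives a sum over pairs of rooted embeddings $(J,\phi),(J',\psi)$ of
\[
\Pr_\tau\bigl[\chi_\tau(\phi(V(J)))\chi_\tau(\psi(V(J')))=1\bigr]\prod_{e\in E(J)} z(\phi(e))\prod_{e'\in E(J')} z(\psi(e')).
\]
The coloring probability is at most $\rho_{\rm rec}\le 1$, using only the colorfulness of the first image. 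Taking $\E_{\PP_x}$ termwise, each product has expectation $p^{2s-j}\delta^{2s-2j}$ exactly as in the planted variance computation, and this is nonnegative. Hence
\[
\E_{\PP_x}\bigl[(\widetilde F-F)^2\bigr]\le \frac{1}{t\rho_{\rm rec}}\,\E_{\PP_x}\bigl[F_{\mathcal J,a,b}^2\bigr]\le \E_{\PP_x}\bigl[F_{\mathcal J,a,b}^2\bigr],
\]
using $t\rho_{\rm rec}\ge 1$.

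This crude bound is not yet enough: it must be $o(\E[F]^2)$, not $O(\E[F]^2)$. The refinement is to subtract the mean. Write $\chi_\tau=\rho_{\rm rec}+(\chi_\tau-\rho_{\rm rec})$, so that
\[
\Var_\tau\Bigl(\sum_J G\Bigm|z\Bigr)=\sum_{\text{pairs}}\mathrm{Cov}_\tau\bigl(\chi_\tau(\phi(V)),\chi_\tau(\psi(V'))\bigr)\prod z\prod z.
\]
If the two vertex images intersect only in the roots $\{a,b\}$, the two colorfulness events are not independent, since $a$ and $b$ are shared. However, the covariance equals $\rho_{\rm rec}^2\bigl(\tfrac{v-1}{v}\bigr)^{-1}-\rho_{\rm rec}^2=O(\rho_{\rm rec}^2/v)$. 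More generally, $|\mathrm{Cov}|\le\rho_{\rm rec}$ always, and it is small when the overlap is small. Using this, I split the variance bound into two parts. The first is the near-disjoint part $i=2$, $j=0$, where the covariance is $O(\rho_{\rm rec}^2/v)$. Its contribution is at most $\frac{1}{t\rho_{\rm rec}^2}\cdot O(\rho_{\rm rec}^2/v)\cdot N_{a,b}(2,0)(p\delta)^{2s}\le O(1/v)\,\E[F]^2$. This tends to zero when $v\to\infty$; otherwise I would refine the exact covariance. The second is the overlapping part $(i,j)\neq(2,0)$, bounded with covariance at most $\rho_{\rm rec}$. Its contribution is at most $\frac{1}{t\rho_{\rm rec}}\sum_{(i,j)\ne(2,0)}N_{a,b}(i,j)p^{2s-j}\delta^{2s-2j}$. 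Terms with $j=0$ and $i>2$ are bounded via $N_{a,b}(i,0)\lesssim |\mathcal J|^2(n-2)_{v-2}\binom{v-2}{i-2}(n)_{v-i}$, which is smaller than the main term by a factor $v^{2(i-2)}/n^{i-2}=o(1)$ since $v^v=O(n^{0.3})$. Terms with $j\ge1$ are exactly those controlled in the proof of Lemma~\ref{lem:Meanvariance-recovery}: using Lemmas~\ref{lem:Nij-zero-recovery} and \ref{lem:rooted-overlap-fixed-i}, that proof shows this sum is $o(\E[F]^2)$, since the failure probability \eqref{eq:probrecovery} is $o(1)$, or at least small, under the sample assumption. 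One caveat: the diagonal term $N_{a,b}(v,s)p^s$ must be checked against the $\E[F]^2$ ratio in the same regime.

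The main obstacle is making the near-disjoint covariance genuinely small. Because the roots are forced to coincide, colorfulness events are correlated even for pairs with no edge overlap, and these pairs dominate $\E[F^2]$. I will handle it by an exact computation: conditioned on the colors of $a$ and $b$, the two events are independent, giving $\Pr[\text{both}]=\rho_{\rm rec}^2\cdot\frac{v}{v-1}$. So the covariance factor is $\frac{1}{v-1}$, and dividing by $t\rho_{\rm rec}^2\cdot\rho_{\rm rec}^{-2}$ normalization yields an $O(1/v)$ relative error. If $v$ is bounded, I would instead increase $t$ to $\lceil \omega(1)/\rho_{\rm rec}\rceil$, which only costs an $n^{o(1)}$ factor in running time, making every color-coding variance term $o(\E[F]^2)$ directly from the crude bound $\E[(\widetilde F-F)^2]\le\frac{1}{t\rho_{\rm rec}}\E[F^2]$ and $\E[F^2]=O(\E[F]^2)$.
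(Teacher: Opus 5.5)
Your argument is correct, and it takes a more careful route than the paper. The paper's proof just says the detection argument (Lemma~\ref{lem:concentration-detec}) carries over verbatim. That argument rests on two steps: pairs of embeddings with vertex-disjoint images have independent colorfulness indicators and contribute nothing, and all remaining pairs are charged to the second-moment overlap sums. Neither step transfers to the rooted setting. Every pair shares $a$ and $b$, so there are no disjoint pairs. The pairs sharing only the roots carry essentially all of $\E_{\PP_x}[F_{\mathcal J,a,b}^2]$. Their $j=0$ terms cancel in $\Var(F_{\mathcal J,a,b})$ but not in the color-coding error. Your conditional-variance decomposition supplies what the paper omits: the exact covariance $\rho_{\rm rec}^2/(v-1)$ for root-only overlaps, and more generally $\rho_{\rm rec}^2\bigl(v^i/(v)_i-1\bigr)$ for overlap $i$. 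The rest is handled soundly:
\begin{itemize}
\item the $j=0$, $i\ge 3$ terms are of relative size $O(v^2/n)$;
\item the $j\ge 1$ terms are bounded by the $\delta_n$ of Lemma~\ref{lem:Meanvariance-recovery}, which must tend to $0$ for any version of the claim, yours or the paper's.
\end{itemize}

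Two refinements. First, keep the factor $1/t\le\rho_{\rm rec}$ in the root-sharing term. It then has relative size $\rho_{\rm rec}/(v-1)$ rather than $O(1/v)$.

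Second, and more important, this term is not an artifact of crude bounding. All colorfulness covariances $\rho_{\rm rec}^2(v^i/(v)_i-1)$ and all planted pair moments $p^{2s-j}\delta^{2s-2j}$ are nonnegative. Hence
\[
\E_{\PP_x}\bigl[(\widetilde F_{\mathcal J,a,b}-F_{\mathcal J,a,b})^2\bigr]\ \ge\ (1-o(1))\,\frac{\E_{\PP_x}[F_{\mathcal J,a,b}]^2}{t(v-1)}.
\]
So for bounded $v$ with $t=\lceil 1/\rho_{\rm rec}\rceil$, the literal $L_2\to 0$ claim is false, and ``refining the exact covariance'' cannot rescue it. Your fallback of enlarging $t$ by an $\omega(1)$ factor is the right repair. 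Another option is to give $a$ and $b$ fixed distinct colors, which makes root-only overlaps independent. The statement holds only with such a modification, or when $v\to\infty$.
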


\begin{proof}
The proof is entirely analogous to that of the corresponding concentration result for
$\widetilde F_{\mathcal H}(z)$. One applies the same color-coding approximation argument and uses
the same second-moment bounds established for $F_{\mathcal J,a,b}(z)$ under the planted model.
Since the required estimates are identical up to replacing the cycle family $\mathcal H$ by the
rooted path family $\mathcal J$, we omit the details.
\end{proof}

\subsection{Putting it together}

Combining the statistical recovery guarantee from Theorem~\ref{thm:full-recovery} with the color-coding runtime bound in Lemma~\ref{lem:compute-approx-F-rec} and the approximation guarantee in Lemma~\ref{lem:concentration-rec}, we obtain the following efficient recovery theorem. Since the recovery procedure evaluates the rooted statistic for all $b\in[n]$ with a fixed anchor vertex, the overall running time gains an additional factor of $n$.

\begin{theorem}\label{thm:full-recovery-formal}
Fix $k\ge 3$. There exists a sufficiently large absolute constant $C>0$ such that the following holds. Let $D$ be a multiple of $k$, and let $\ell\ge 2$ satisfy
\[
D\ell<\frac{0.3\log n}{\log\log n}.
\]
Then Algorithm~\ref{alg:full-recovery} runs in time
\[
n^{D+3+o(1)}
\]
and, in the noisy $k$-XOR model, recovers the planted assignment up to global sign, and in fact exactly when $k$ is odd, provided that
\[
m
\ge
\frac{4\,\ell^{k} k^{k/2+1}}{k!\,2^{k/2}}
\cdot
\frac{n^{k/2}}{\delta^2\,D^{\,k/2-1}}.
\]
Moreover, the success probability is at least
\[
1-O\!\left(\ell^{-kD}\right).
\]
In particular, the algorithm achieves weak recovery. If in addition $D=\omega(1)$, or if $\ell = \omega(1)$,
then the success probability improves to $1-o(1)$. More precisely, when $k$ is odd, the algorithm outputs the planted assignment $x$ exactly; when $k$ is even, it outputs either $x$ or $-x$.
\end{theorem}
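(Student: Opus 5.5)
The plan is to assemble three ingredients already proved: the statistical guarantee of Theorem~\ref{thm:full-recovery} for the exact rooted statistics $F_{\mathcal J,1,b}$ together with the clean-up Lemma~\ref{lem:cleanup}; the running-time bound of Lemma~\ref{lem:compute-approx-F-rec}; and the approximation guarantee of Lemma~\ref{lem:concentration-rec}. Setting $D=rk$ (so $r=D/k$ is an integer, and large once $C$ is large), the hypothesis $D\ell<0.3\log n/\log\log n$ gives $v=D\ell+1$ with $v^v=n^{0.3+o(1)}$. This is the same regime in which Lemma~\ref{lem:Meanvariance-recovery} and the color-coding lemmas apply, and in which $|\mathcal J|$ and $t=\lceil 1/\rho_{\rm rec}\rceil$ are $n^{o(1)}$.

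\textbf{Running time.} The first stage computes $\widetilde F_{\mathcal J,1,b}(z)$ for each of the $n-1$ choices of $b$ with the anchor fixed at $1$. Each evaluation costs $n^{D+2+o(1)}$ by Lemma~\ref{lem:compute-approx-F-rec}, so the stage costs $n^{D+3+o(1)}$. (One could share the DP tables across $b$, but the crude bound suffices.) Sample splitting, the sign check for odd $k$, and computing all $T_i$ take $O(km'+n)$ time with $m'=n^{1.1}$, which is dominated since $D\ge 3$. Enumerating $\mathcal J$ and fixing the chain decompositions costs only $n^{o(1)}\cdot \mathrm{poly}(v)$, provided the enumeration is organized by choosing gadgets from $\mathcal U(r,k)$, each of which is found by brute force over hypergraphs on $rk+1$ vertices, i.e.\ in $2^{O(v^{k})}$ time. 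I must check that $2^{O(v^k)}=n^{o(1)}$ here. If it is not, I will enumerate $\mathcal U$ more cleverly, since its size is only $(2kr)!^{O(1)}$.

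\textbf{Correctness.} The key step is transferring the sign guarantee from $F_{\mathcal J,1,b}$ to $\widetilde F_{\mathcal J,1,b}$. Write $\mu=\E[F_{\mathcal J,1,b}]$, so $|\mu|=|\mathcal J|(n-2)_{v-2}(p\delta)^s$. The bad event is $\sgn(\widetilde F)\ne x_1x_b$, and it is contained in the union of $\{|F-\mu|\ge|\mu|/2\}$ and $\{|\widetilde F-F|\ge|\mu|/2\}$. The first probability is bounded by the Chebyshev estimate behind \eqref{eq:probrecovery}. For the second, I will use Chebyshev together with the quantitative form of Lemma~\ref{lem:concentration-rec}: the $L^2$ bound $\E[(\widetilde F-F)^2]/\mu^2$. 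Conditionally on $z$ this equals $\frac{1}{t}$ times the color-coding variance, and it should be bounded by a quantity of the same form as in \eqref{eq:probrecovery}, perhaps with an extra $o(1)$ factor.

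The main obstacle is that Lemma~\ref{lem:concentration-rec} is stated only as convergence to $0$. I need an explicit rate to keep the $1-O(\ell^{-kD})$ probability, and also to sum over all $b$ when counting misclassified coordinates. To get the rate, I will rewrite the color-coding error variance as a sum over pairs of rooted embeddings with overlap $i$ of $(\Pr[\text{both colorful}]-\rho^2)$ times the same $N_{a,b}(i,j)p^{2s-j}\delta^{2s-2j}$ weights. Since $\Pr[\text{both colorful}]/\rho^2\le v^{O(v)}=n^{0.3+o(1)}$, while the $1/t\approx\rho$ averaging and the $(n-v)_{v-i}$ loss per shared vertex supply the needed factor, the estimates of Lemmas~\ref{lem:Nij-zero-recovery} and~\ref{lem:rooted-overlap-fixed-i} should reapply verbatim.

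With this rate in hand, Markov's inequality on the number of wrong $b$ gives at least $(1-\frac{1}{4k})n$ correct coordinates up to sign, exactly as in Lemma~\ref{lem:Meanvariance-recovery}. Under the stated bound on $m$ (absorbing the factor $k$ into $k^{k/2+1}$), the bracketed quantities become $O(\ell^{-kD})$, as in Theorem~\ref{thm:full-recovery}. Lemma~\ref{lem:cleanup} then upgrades this to exact recovery, and exact sign for odd $k$, with an additional failure probability of only $n^{-10}$. This uses the fact that the split-off clean-up pool has $\rho m=n^{1.1}=o(m)$ samples and is independent of the first stage, so the first-stage sample size drops only by a lower-order term. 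Finally, $\ell^{-kD}\to0$ whenever $\ell$ or $D$ is $\omega(1)$, which gives the $1-o(1)$ claim.
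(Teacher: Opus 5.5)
Your outline is the paper's own: the paper proves this theorem simply by combining Theorem~\ref{thm:full-recovery}, Lemma~\ref{lem:compute-approx-F-rec} (with an extra factor $n$ for the pairs $(1,b)$) and Lemma~\ref{lem:concentration-rec}. Your bad-event decomposition and the Markov step over $b$ are fine, and you are right that the last lemma is only qualitative. The gap is in your plan for bounding $\E[(\widetilde F-F)^2]/\mu^2$, which fails on the dominant term. In the detection proof, pairs of embeddings with disjoint images contribute exactly $0$. For the rooted statistic, every pair of embeddings shares $a$ and $b$, so nothing vanishes. The pairs sharing only the roots ($i=2$, hence $j=0$) number $|\mathcal J|^2(n-2)_{v-2}(n-v)_{v-2}$, and each has second moment $(p\delta)^{2s}$. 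Together they carry weight $(1-o(1))\mu^2$, with no $(n-v)^{-1}$ loss at all. Lemmas~\ref{lem:Nij-zero-recovery} and~\ref{lem:rooted-overlap-fixed-i} cannot be ``reapplied verbatim'' to these pairs, for two reasons. First, those lemmas are only meaningful for $j\ge1$; read literally at $(i,j)=(2,0)$, Lemma~\ref{lem:Nij-zero-recovery} would say these pairs do not exist. Second, the $j=0$ terms dropped out of $\Var(F)$ only because $p^{2s}\delta^{2s}-(p\delta)^{2s}=0$, and that cancellation does not happen here. With your crude ratio bound $v^{O(v)}$, or even the best generic bound $\frac1t\bigl(\Pr[\text{both colorful}]/\rho^2-1\bigr)\le\frac{1}{t\rho}\le 1$, this term is of order $\mu^2$ and Chebyshev gives nothing.

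What is needed is the exact value. If two $v$-sets share exactly $i$ vertices, then $\Pr[\text{both colorful}]=\rho\,(v-i)!/v^{v-i}$, i.e.\ the ratio to $\rho^2$ is $v^i/(v)_i$. So root-only pairs have excess exactly $\frac{1}{v-1}$ and contribute $\frac{1-o(1)}{t(v-1)}\mu^2=\Theta(v^{-1/2}e^{-v})\mu^2$ when $t=\lceil 1/\rho\rceil$. Pairs with $i\ge3$ are then controlled by $\frac{1}{t\rho}\le1$, together with the existing overlap bounds for $j\ge1$ and the $n^{-(i-2)}$ count for $j=0$. Since all terms are nonnegative, this estimate is sharp. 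It also does not depend on $m$, so it cannot be absorbed into the sample-size condition as you hope. The term tends to $0$ when $D\ell\to\infty$, which suffices for the $1-o(1)$ claim. For constant $D,\ell$, however, it shows that Lemma~\ref{lem:concentration-rec} (proved in the paper as ``entirely analogous'' to detection) is false as stated. In general the bound is not $O(\ell^{-kD})$, e.g.\ for $\ell=2$, $k\ge3$, $D\to\infty$. To reach the stated rate, use more colorings, e.g.\ $t\asymp \ell^{kD}/\rho$; this is still $n^{o(1)}$ for fixed $k$ when $D\ell<0.3\log n/\log\log n$.

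Three further facts you import from the paper also do not hold as stated:
\begin{itemize}
\item $|\mathcal J|$ is not $n^{o(1)}$ near the top of the range. Lemma~\ref{lem:counting-J} gives $\log|\mathcal J|=\Omega(D\ell\log D)$, which is $\Theta(\log n)$ for, e.g., $\ell=2$ and $D\asymp\log n/\log\log n$. So the exponent $D+3+o(1)$ needs a smaller range or a smarter enumeration.
\item The bound on $m$ controls the first term of \eqref{eq:probrecovery} only by $\ell^{-D\ell}k^{-2D\ell/k}$, which is not $O(\ell^{-kD})$ when, e.g., $\ell=2$ and $k\ge5$.
\item Lemma~\ref{lem:cleanup} tacitly needs $\delta^2 n^{0.1}\gtrsim\log n$. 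This is fixed by taking $m'\asymp n^{1.1}\delta^{-2}$, which is still $o(m)$ for $k\ge3$.
\end{itemize}
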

\section{Low-Degree Hardness of Noisy $k$-XOR}
In this section, we investigate the low-degree hardness of the noisy $k$-XOR problem, thereby providing additional evidence that our algorithm is essentially tight also with respect to the sample complexity. 
In our setting, the observation is denoted by $z=(z_\alpha)_{\alpha\in\binom{[n]}{k}}$, where each coordinate takes values in $\{0,\pm 1\}$.
Accordingly, we view a polynomial test as a multilinear polynomial in the coordinate functions $(z_\alpha)$, and we write $\mathbb{R}[z]_{\le D}$
for the space of all real-valued polynomials in these variables of total degree at most $D$.
Since each coordinate satisfies $z_\alpha^3=z_\alpha$,
every polynomial may be reduced to one in which each variable appears with exponent at most $2$.
Thus, in this model, a degree-$D$ polynomial means a sequence $f=f_n\in \mathbb{R}[z]_{\le D_n}$,
indexed by the problem size $n$, where each monomial has total degree at most $D_n$ and degree at most $2$ in each individual variable.
The low-degree heuristic predicts that if all such polynomials fail to distinguish $\PP$ from $\QQ$, then no polynomial-time algorithm should succeed.
Motivated by this principle, we study the degree-$D$ projection of the likelihood ratio and the associated low-degree norm.

Let $L(z):=\frac{\mathrm d\PP}{\mathrm d\QQ}(z)$ denote the likelihood ratio, and let $L_{\le D}$ be its orthogonal projection onto the space of degree-$\le D$ polynomials in $L^2(\QQ)$. Then
\[
\sup_{\deg(f)\le D}
\frac{\E_{\PP}[f]-\E_{\QQ}[f]}{\sqrt{\Var_{\QQ}(f)}}
=
\|L_{\le D}-1\|_{L^2(\QQ)}.
\]
This quantity captures the optimal distinguishing power of degree-$D$ polynomials between $\PP$ and $\QQ$, and is therefore a central object in the low-degree framework. The low-degree heuristic predicts that if $\chi_{\le D}^2(\PP,\QQ)=O(1)$,
then no algorithm running in time $\exp(\widetilde{\Theta}(D))$ can strongly distinguish $\PP$ from $\QQ$.

\paragraph{Low-degree likelihood ratio for noisy $k$-XOR.}
Let $\mathcal M:=\binom{[n]}{k}$, $M:=|\mathcal M|=\binom{n}{k}$.
Under the null model $\QQ$, the coordinates $(z_\alpha)_{\alpha\in\mathcal M}$ are independent with
\[
\QQ(z_\alpha=0)=1-p,
\qquad
\QQ(z_\alpha=\pm1)=\frac p2.
\]
Under the planted model $\PP$, we first sample $x\in\{\pm1\}^n$ uniformly, and then independently for each $\alpha\in\mathcal M$,
\[
z_\alpha=b_\alpha x_\alpha \xi_\alpha,
\qquad
x_\alpha:=\prod_{i\in\alpha}x_i,
\]
where $b_\alpha\sim\mathrm{Ber}(p)$ and $\xi_\alpha\sim\mathrm{Rad}(\delta)$. For fixed $x$, let $\PP_x$ denote the conditional planted law. Then for every $\alpha\in\mathcal M$,
\[
\frac{\mathrm{d}\PP_x}{\mathrm{d}\QQ}(z_\alpha)=1+\delta x_\alpha z_\alpha,
\]
and therefore the full likelihood ratio is
\[
L_x(z)=\prod_{\alpha\in\mathcal M}\bigl(1+\delta x_\alpha z_\alpha\bigr).
\]
To expand $L_x$, define $\phi_\alpha(z):=\frac{z_\alpha}{\sqrt p}$, and $\Phi_S(z):=\prod_{\alpha\in S}\phi_\alpha(z)$  for any $S\subseteq\mathcal M$.
Since under $\QQ$ the variables $z_\alpha$ are independent with $\E_{\QQ}[z_\alpha]=0$ and $\E_{\QQ}[z_\alpha^2]=p$, the family $\{\Phi_S\}_{S\subseteq\mathcal M}$ is an orthonormal basis of $L^2(\QQ)$. Hence
\[
L_x(z)
=
\sum_{S\subseteq\mathcal M}
(\delta\sqrt p)^{|S|}
\Bigl(\prod_{\alpha\in S}x_\alpha\Bigr)\Phi_S(z).
\]
Averaging over $x$, we obtain
\[
L(z):=\E_x[L_x(z)]
=
\sum_{S\subseteq\mathcal M}
(\delta\sqrt p)^{|S|}
\E_x\!\Bigl[\prod_{\alpha\in S}x_\alpha\Bigr]\Phi_S(z).
\]
Now we have $\prod_{\alpha\in S}x_\alpha=\prod_{i=1}^n x_i^{d_S(i)}$,
where $d_S(i)$ is the degree of vertex $i$ in the $k$-uniform hypergraph $S$. Since the coordinates of $x$ are i.i.d. uniform signs,
$\E_x\!\Bigl[\prod_{\alpha\in S}x_\alpha\Bigr]
=
\mathbf 1_{\{S\text{ even}\}}$,
where $S$ is called \emph{even} if every vertex has even degree in $S$. Therefore
\[
L(z)
=
\sum_{\substack{S\subseteq\mathcal M\\ S\text{ even}}}
(\delta\sqrt p)^{|S|}\Phi_S(z),
\]
so that
\[
\widehat L(S)=
(\delta\sqrt p)^{|S|}\mathbf 1_{\{S\text{ even}\}}.
\]
It follows that the degree-$D$ truncation satisfies
\[
L_{\le D}(z)
=
\sum_{\substack{S\subseteq\mathcal M\\ |S|\le D\\ S\text{ even}}}
(\delta\sqrt p)^{|S|}\Phi_S(z),
\]
and hence
\begin{equation}\label{eq:LD-norm-even-hypergraphs}
\|L_{\le D}\|_{L^2(\QQ)}^2
=
\sum_{\substack{S\subseteq\mathcal M\\ |S|\le D\\ S\text{ even}}}
(p\delta^2)^{|S|}
=
\sum_{t=0}^D N_t (p\delta^2)^t,
\end{equation}
where $N_t$ denotes the number of even $k$-uniform hypergraphs on $[n]$ with exactly $t$ edges. Thus the low-degree norm is determined by the weighted count of even $k$-uniform subhypergraphs.

\begin{lemma}[Counting bound for even $k$-uniform hypergraphs]\label{lem:counting-even}
Let $N_t$ denote the number of simple even $k$-uniform hypergraphs on $[n]$ with exactly $t$ edges. Then we have
\[
N_t
\le
\left(
e^{k/2+2}\frac{k^{k/2+1/2}}{k!}\,
n^{k/2}\,
t^{k/2-1}
\right)^t
\qquad\text{for all } 1\le t\le \frac{n}{k}.
\]
\end{lemma}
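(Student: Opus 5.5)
We plan to count even hypergraphs by first fixing the vertex set they span and then bounding the number of ways to place $t$ edges on it. Let $S$ be a simple even $k$-uniform hypergraph with $t$ edges, and let $V(S)$ be its set of non-isolated vertices, with $|V(S)|=w$. Every such vertex has degree at least $2$, so
\[
2w\le \sum_{u}d_S(u)=kt ,
\]
that is, $w\le kt/2$. We therefore write
\[
N_t\le \sum_{w\le kt/2}\binom{n}{w}\,E(w,t),
\]
where $E(w,t)$ is the number of $t$-edge hypergraphs on a fixed $w$-set in which every vertex is covered at least twice.

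For $E(w,t)$ we use a configuration-type count. Each edge is an unordered $k$-set, so the number of ordered $t$-tuples of edges is at most $\bigl(\binom{w}{k}\bigr)^t\le (w^k/k!)^t$. Dividing by $t!$ gives
\[
E(w,t)\le \frac{(w^k/k!)^t}{t!}\le \frac{1}{(k!)^t}\Bigl(\frac{e\,w^k}{t}\Bigr)^t .
\]
Combined with $\binom{n}{w}\le (en/w)^w$, the summand becomes
\[
(en/w)^w\, w^{kt}\,e^t\,(k!)^{-t}\,t^{-t}.
\]
Viewed as a function of $w$, the logarithm is $w\log(en/w)+kt\log w$. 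Its derivative in $w$ is $\log(n/w)+kt/w>0$, so the expression is increasing in $w$ on the whole range $w\le kt/2\le n/2$. Hence the dominant term is $w=kt/2$, which gives
\[
\Bigl(\frac{2en}{kt}\Bigr)^{kt/2}\Bigl(\frac{kt}{2}\Bigr)^{kt}\frac{e^t}{(k!)^t t^t}
=\Bigl(e^{k/2+1}\frac{n^{k/2}(kt/2)^{k/2}}{k!\,t}\Bigr)^t
=\Bigl(e^{k/2+1}\frac{k^{k/2}}{2^{k/2}k!}\,n^{k/2}t^{k/2-1}\Bigr)^t .
\]
This is already of the target shape $n^{k/2}t^{k/2-1}$. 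The number of choices of $w$ is at most $kt/2\le e^{t}\sqrt{k}^{\,t}$ (crudely, $kt\le (e\sqrt k)^t$ for $t\ge1$), and this factor can absorb the sum over $w$. The $2^{-k/2}$ saving leaves further slack, so an extra factor of $e\cdot k^{1/2}$ per edge is affordable. That yields the stated $e^{k/2+2}k^{k/2+1/2}/k!$.

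The main obstacle is the monotonicity-in-$w$ step together with the crude ordered-edge count. We must check that using $\binom{w}{k}\le w^k/k!$ does not lose more than the slack, and we must handle small $w$ (below $k$) and the requirement $t\le n/k$. The latter guarantees $w\le n/2$, so that $\log(n/w)>0$ and the derivative argument is valid. If the derivative argument turns out to be delicate, we will instead bound the sum by $(kt/2)$ times the maximum term, and verify directly that $f(w)=(en/w)^w w^{kt}$ is increasing for $w\le n$. Simplicity of $S$ is used only to justify the division by $t!$, since all $t$ edges are then distinct.
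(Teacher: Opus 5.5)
Your proof is correct, but it takes a different route from the paper's.

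\textbf{The paper's route.} The paper fixes the support size $v$ and an even degree sequence $d_i=2e_i$. It bounds the number of simple hypergraphs with that degree sequence by the configuration count $\frac{(kt)!}{t!\,(k!)^t\prod_i d_i!}$. It then sums over degree sequences using $(2a)!\ge 2^a a!$ and the multinomial identity $\sum 1/\prod_i e_i! = v^m/m!$ with $m=kt/2$. Finally it bounds $\sum_v\binom{n}{v}v^m\le (e^2n)^m$ and applies Stirling.

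\textbf{Your route.} You use evenness only through the support bound $w\le kt/2$. You count all $t$-edge simple hypergraphs on a fixed $w$-set crudely by $\binom{w}{k}^t/t!$. You then maximize over $w$ using monotonicity of $w\log(en/w)+kt\log w$. Its derivative $\log(n/w)+kt/w$ is positive for $w\le n$, and the hypothesis $t\le n/k$ is what keeps the evaluation point $kt/2$ in that range.

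\textbf{What each route buys.} Your argument is more elementary: there is no configuration model and no sum over degree sequences. It also proves the bound for every simple $k$-uniform hypergraph whose non-isolated vertices all have degree at least $2$, not just even ones. It is even slightly sharper. Before absorbing the number of $w$-terms, your per-edge constant is $e^{k/2+1}k^{k/2}/(2^{k/2}k!)$. The paper's chain of estimates reaches $e^{k/2+1}k^{k/2}/k!$ at the corresponding stage. The paper's degree-sequence bookkeeping is the better template if one needs to separate the $2$-regular contribution from higher-degree ones. For the lemma as stated, your route suffices.

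\textbf{Remarks.} The obstacles you flag are not real. The estimate $\binom{w}{k}\le w^k/k!$ is already built into your computation, and terms with $w<k$ simply vanish. One small slip: your parenthetical claim $kt\le(e\sqrt{k})^t$ fails at $t=1$ once $k\ge 8$. This is harmless, since $N_1=0$. In any case the spare factor $2^{kt/2}$ gives $kt/2\le(e\sqrt{k}\,2^{k/2})^t$ for all $t\ge 1$, which is what you actually need.
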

The proof is deferred to Appendix~\ref{post:counting}.

\begin{corollary}
Let
\[
C_k:=e^{k/2+2}\frac{k^{k/2+1/2}}{k!}.
\]
Then
\[
\|L_{\le D}\|_{L^2(\QQ)}^2
\le
1+\sum_{t=1}^D
\bigl(C_k\, n^{k/2}\, t^{k/2-1}\, p\delta^2\bigr)^t.
\]
In particular, if
$C_k\, n^{k/2}\, D^{k/2-1}\, p\delta^2<1$,
then
\[
\|L_{\le D}\|_{L^2(\QQ)}^2=O(1).
\]
Moreover, if $C_k\, n^{k/2}\, D^{k/2-1}\, p\delta^2=o(1)$,
then
\[
\|L_{\le D}\|_{L^2(\QQ)}^2=1+o(1).
\]
\end{corollary}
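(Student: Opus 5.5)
The plan is to substitute Lemma~\ref{lem:counting-even} directly into the expansion \eqref{eq:LD-norm-even-hypergraphs}. The $t=0$ term is $N_0=1$, coming from the empty hypergraph. This gives the constant $1$.

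For $1\le t\le D$, I will bound $N_t(p\delta^2)^t \le (C_k n^{k/2} t^{k/2-1}p\delta^2)^t$. The lemma only applies for $t\le n/k$. Since the paper's regime has $D\le 0.3\log n/\log\log n\ll n/k$, every $t\le D$ lies in the valid range. I will state this as a standing assumption, namely $D\le n/k$. Summing over $t$ then gives the displayed inequality.

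For the "in particular" claims, set $\lambda:=C_k n^{k/2}D^{k/2-1}p\delta^2$. Since $k\ge 3$, we have $t^{k/2-1}\le D^{k/2-1}$ for all $t\le D$. Hence each summand is at most $\lambda^t$, and the sum is at most $\sum_{t\ge1}\lambda^t=\lambda/(1-\lambda)$.

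If $\lambda=o(1)$, this quantity is $o(1)$, which gives $1+o(1)$.

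If only $\lambda<1$, the bound $\lambda/(1-\lambda)$ is finite. To read this as $O(1)$ I interpret the hypothesis as $\lambda\le c$ for a fixed constant $c<1$, so that the geometric tail is uniformly bounded by $c/(1-c)$. Stated literally, $\lambda<1$ with $\lambda\to1$ would not give a uniform constant.

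I expect this uniformity issue to be the only real obstacle; the rest is mechanical.

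A sharper alternative would use the smaller factor $t^{k/2-1}$ instead of $D^{k/2-1}$ for small $t$. This gives the bound $\sum_t \lambda^t (t/D)^{(k/2-1)t}$, which stays bounded even as $\lambda\to1$, because the terms decay for $t\le D/2$ and only about $D$ terms of size at most $1$ remain. However, that remaining contribution is $O(D)$ rather than $O(1)$. So I will go with the constant-margin interpretation.
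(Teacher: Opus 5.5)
Your proposal is correct and follows the paper's proof: plug Lemma~\ref{lem:counting-even} into \eqref{eq:LD-norm-even-hypergraphs} with $N_0=1$, use $t^{k/2-1}\le D^{k/2-1}$, and sum a geometric series. Your remarks about needing $D\le n/k$ and about uniformity as $\lambda\to1$ are points the paper passes over. One correction to your aside: since $k\ge3$ makes $c:=k/2-1>0$, the sharper sum $\sum_{t\le D}\lambda^t(t/D)^{ct}$ with $\lambda\le1$ is $O_k(1)$ uniformly in $D$, not $O(D)$. For $t\le D/2$ the terms are at most $2^{-ct}$, and for $t=D-j$ with $0\le j<D/2$ one has $(t/D)^{ct}\le e^{-cj(D-j)/D}\le e^{-cj/2}$. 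So that route proves the $O(1)$ claim under the literal hypothesis $\lambda<1$, with no constant margin needed.
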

\begin{proof}
By the previous lemma,
\[
N_t\le \bigl(C_k\, n^{k/2}\, t^{k/2-1}\bigr)^t,
\qquad
C_k=e^{k/2+2}\frac{k^{k/2+1/2}}{k!}.
\]
Therefore
\[
\|L_{\le D}\|_{L^2(\QQ)}^2
=
\sum_{t=0}^D N_t(p\delta^2)^t
\le
1+\sum_{t=1}^D
\bigl(C_k\, n^{k/2}\, t^{k/2-1}\, p\delta^2\bigr)^t.
\]
The remaining conclusions follow by using $t^{k/2-1}\le D^{k/2-1}$ for $t\le D$ and summing the resulting geometric series.
\end{proof}
\paragraph{Interpretation.}
Since
\[
p=\frac{m}{\binom{n}{k}},
\qquad
C_k=e^{k/2+2}\frac{k^{k/2+1/2}}{k!},
\]
the preceding corollary implies that the low-degree norm remains bounded whenever
\[
C_k\,\frac{m}{\binom{n}{k}}\,\delta^2\,D^{\,k/2-1}n^{k/2}<1.
\]
Equivalently, whenever
\[
m<
\frac{\binom{n}{k}}{C_k\,n^{k/2}D^{\,k/2-1}\delta^2}
\sim
\frac{n^{k/2}}{e^{k/2+2}k^{k/2+1/2}D^{\,k/2-1}\delta^2},
\]
the degree-$D$ low-degree norm is $O(1)$. In the low-degree framework, this means that degree-$D$ polynomials do not strongly distinguish $\PP$ from $\QQ$ in this regime. Thus the usual $n^{k/2}$ signal-to-noise threshold persists, up to the additional factor $D^{k/2-1}$ and an explicit $k$-dependent constant.

Moreover, if
\[
m\delta^2=o\!\left(
\frac{n^{k/2}}{e^{k/2+2}k^{k/2+1/2}D^{\,k/2-1}}
\right),
\]
then
\[
\|L_{\le D}\|_{L^2(\QQ)}^2=1+o(1).
\]
In particular, degree-$D$ polynomials have asymptotically no distinguishing power in this regime. Consequently, under the low-degree heuristic, one should not expect any algorithm running in time $n^{O(D)}$ to distinguish $\PP$ from $\QQ$ here.

\begin{theorem}\label{thm:lowdegree-bound}
Fix $k\ge 3$, and let $D=D(n)\ge 1$. Let $C_k:=e^{k/2+2}\frac{k^{k/2+1/2}}{k!}$. If
\[
m\delta^2=o\!\left(\frac{n^{k/2}}{C_k\,D^{\,k/2-1}}\right),
\]
then for every sequence of polynomial tests $f=f_n$ of degree at most $D$,
\[
\frac{\E_{\PP}[f]-\E_{\QQ}[f]}{\sqrt{\Var_{\QQ}(f)}}=o(1),
\]
and no sequence of degree-$D$ polynomial tests can distinguish $\PP$ from $\QQ$ with nonvanishing advantage.
If
\[
m\delta^2 = O\left(\frac{n^{k/2}}{C_k\,D^{\,k/2-1}}\right)
\]
and $D$ is a constant, then
\[
\frac{\E_{\PP}[f]-\E_{\QQ}[f]}{\sqrt{\Var_{\QQ}(f)}}=O(1),
\]
Consequently, degree-$D$ polynomial tests do not strongly distinguish $\PP$ from $\QQ$.
\end{theorem}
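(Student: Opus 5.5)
The proof reduces to the identity
\[
\sup_{\deg f\le D}\frac{\E_{\PP}[f]-\E_{\QQ}[f]}{\sqrt{\Var_{\QQ}(f)}}=\|L_{\le D}-1\|_{L^2(\QQ)},
\]
recalled at the start of this section. We combine it with the Fourier expansion \eqref{eq:LD-norm-even-hypergraphs} and the Corollary following Lemma~\ref{lem:counting-even}. By orthonormality of $\{\Phi_S\}$, the constant coefficient of $L$ is $\widehat L(\emptyset)=1$. Hence
\[
\|L_{\le D}-1\|_{L^2(\QQ)}^2=\|L_{\le D}\|_{L^2(\QQ)}^2-1=\sum_{t=1}^D N_t(p\delta^2)^t,
\]
and it suffices to control this tail sum in both regimes.

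First I verify the supremum identity for the reduced polynomials (degree $\le 2$ per variable). The only subtlety is that $\{\Phi_S\}$ spans the multilinear part, while monomials $z_\alpha^2$ are also allowed. I would complete the basis with $\psi_\alpha:=(z_\alpha^2-p)/\sqrt{p(1-p)}$. Then $\E_{\QQ}[L\,\psi_\alpha\cdots]$ involves $\E_{\PP}$ of products containing $\psi_\alpha$. Since $z_\alpha^2=b_\alpha$ has the same law under $\PP$ and $\QQ$, independently of the labels, these coefficients vanish, provided the remaining factors have mean zero under the planted law whenever they are nonconstant. Concretely, I expand $L_x=\prod_\alpha(1+\delta x_\alpha z_\alpha)$ and note it is orthogonal to every $\psi_\alpha$-containing basis element, because $\E_{\QQ}[\psi_\alpha]=\E_{\QQ}[z_\alpha\psi_\alpha]=0$. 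So the projection of $L$ onto degree-$\le D$ polynomials is exactly $L_{\le D}$ as written. The standard Cauchy--Schwarz argument then gives the identity: write $f=\sum_S\widehat f(S)\Phi_S+\cdots$, so that $\E_{\PP}f-\E_{\QQ}f=\langle f-\E_{\QQ}f,\,L_{\le D}-1\rangle_{\QQ}$, with equality at $f=L_{\le D}$.

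Next I plug in the parameters. With $p=m/\binom nk$, the hypothesis $m\delta^2=o(n^{k/2}/(C_kD^{k/2-1}))$ gives
\[
\lambda:=C_k n^{k/2}D^{k/2-1}p\delta^2=o\!\left(n^{k/2}\cdot\frac{1}{\binom nk}\cdot\frac{n^{k/2}}{1}\right)\cdot\frac{1}{n^{k/2}}\cdots
\]
More carefully, $\lambda=C_kD^{k/2-1}m\delta^2\, n^{k/2}/\binom nk$. Here $\binom nk\asymp n^k/k!$, and the factor $k!$ is absorbed into the constant. This point needs care: the Corollary's $C_k$ already contains the $1/k!$, so the stated condition, read with the same constants as the Interpretation paragraph, yields $\lambda=o(1)$. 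Accordingly, I will state the $\Theta$-matching explicitly using $\binom nk\ge (n/k)^k$ for the range $D\le n/k$.

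The two regimes are then handled as follows. If $\lambda=o(1)$, the bound $t^{k/2-1}\le D^{k/2-1}$ gives $\sum_{t\ge1}(C_kn^{k/2}t^{k/2-1}p\delta^2)^t\le\sum_{t\ge1}\lambda^t=O(\lambda)=o(1)$, so the advantage is $o(1)$. Then Chebyshev-type reasoning shows that no thresholded polynomial achieves a nonvanishing advantage; this is the standard low-degree interpretation. If instead $\lambda=O(1)$ and $D$ is constant, the sum has at most $D$ terms, each $O(1)$, so the total is $O(1)$ without needing geometric decay. The main obstacle is the constant bookkeeping between $\binom nk$ and $n^k/k!$ when matching the hypothesis with the Corollary, together with the requirement $D\le n/k$ from Lemma~\ref{lem:counting-even}. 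The latter holds trivially in the relevant regime, and outside it the hypothesis forces $m\delta^2\to0$.
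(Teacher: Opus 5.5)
Your argument follows the paper's route. You use the identity $\sup_f(\E_{\PP}f-\E_{\QQ}f)/\sqrt{\Var_{\QQ}(f)}=\|L_{\le D}-1\|_{L^2(\QQ)}$, the even-hypergraph expansion \eqref{eq:LD-norm-even-hypergraphs}, and Lemma~\ref{lem:counting-even}. You then take either a geometric sum ($\lambda=o(1)$) or a bounded number of bounded terms ($D$ constant, $\lambda=O(1)$); the $k!$ mismatch between the theorem's threshold and $\lambda$ is harmless because $k$ is fixed.

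Your completion of the basis by $\psi_\alpha=(z_\alpha^2-p)/\sqrt{p(1-p)}$ is a real improvement on the paper. The paper calls $\{\Phi_S\}$ an orthonormal basis of $L^2(\QQ)$, which is false since each $z_\alpha$ takes three values. Your computation $\E_{\QQ}[(1+\delta x_\alpha z_\alpha)\psi_\alpha]=0$ is exactly what justifies the paper's formula for $L_{\le D}$ over polynomials with exponents up to $2$.

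One side claim is wrong, however. For $D>n/k$ the hypothesis does not force $m\delta^2\to0$. Near $D=n/k$ the bound $n^{k/2}/(C_kD^{k/2-1})$ is of order $n$, and it stays unbounded for every $D=o(n^{k/(k-2)})$. The repair is short. For $t>n/k$,
\[
N_t\le\binom{M}{t}\le\left(\frac{eM}{t}\right)^t\le\left(\frac{ekn^{k-1}}{k!}\right)^t,
\]
while
\[
C_kn^{k/2}t^{k/2-1}\ge C_kk^{1-k/2}n^{k-1}=\frac{e^{k/2+2}k^{3/2}n^{k-1}}{k!}.
\]
The second quantity dominates the first, so the estimate of Lemma~\ref{lem:counting-even} holds for all $t\ge1$. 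The paper's Corollary tacitly needs this too, since it applies the lemma to every $t\le D$.

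Also drop the claim that Chebyshev-type reasoning rules out thresholded polynomials. Chebyshev only gives sufficient conditions for a threshold test to succeed, never for it to fail. The ``no nonvanishing advantage'' clause should be read, as in the paper, simply as the advantage ratio being $o(1)$.
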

\section{Discussion and Open Questions}

We conclude with several directions for future work.

First, variants of the ideas developed here have already proved useful in several related inference problems, including stochastic block models \cite{hopkins2017bayesian,mao2024testing,chen2026detecting}, tensor PCA \cite{li2025smooth}, spiked models \cite{li2025algorithmic}. This raises the broader question of whether similar hypergraph-based statistics, combined with color coding and bounded-width dynamic programming, can be extended to other average-case problems, such as more general planted $k$-CSPs. Such extensions would likely require new gadget families together with sharper combinatorial and probabilistic estimates.

Second, our observation model takes values in $\{0,\pm1\}$, rather than the more standard fully observed $\{\pm1\}$-valued setting, yet it still exhibits a matching low-degree barrier. This suggests that the low-degree framework should be developed further to more systematically capture inference problems with partial observations, nonbinary alphabets, and more general sampling mechanisms.
\section*{Acknowledgments}

The author would like to thank Xin Li and Zhaienhe Zhou for helpful discussions on LPN problems. The author is also grateful to Tim Kunisky for valuable discussions on the low-degree polynomial framework and color-coding. Many of the ideas in this paper grew out of earlier joint work with Tim.
\bibliographystyle{alpha}
\bibliography{references}

@article{li2025smooth,
  title={A Smooth Computational Transition in Tensor PCA},
  author={Li, Zhangsong},
  journal={arXiv preprint arXiv:2509.09904},
  year={2025}
}

@article{alon1995color,
  title={Color-coding},
  author={Alon, Noga and Yuster, Raphael and Zwick, Uri},
  journal={Journal of the ACM (JACM)},
  volume={42},
  number={4},
  pages={844--856},
  year={1995},
  publisher={ACM New York, NY, USA}
}

@article{alon2010balanced,
  title={Balanced families of perfect hash functions and their applications},
  author={Alon, Noga and Gutner, Shai},
  journal={ACM Transactions on Algorithms (TALG)},
  volume={6},
  number={3},
  pages={1--12},
  year={2010},
  publisher={ACM New York, NY, USA}
}

@article{barak2019nearly,
  title={A nearly tight sum-of-squares lower bound for the planted clique problem},
  author={Barak, Boaz and Hopkins, Samuel and Kelner, Jonathan and Kothari, Pravesh K and Moitra, Ankur and Potechin, Aaron},
  journal={SIAM Journal on Computing},
  volume={48},
  number={2},
  pages={687--735},
  year={2019},
  publisher={SIAM}
}

@article{hopkins2017bayesian,
  title={Bayesian estimation from few samples: community detection and related problems},
  author={Hopkins, Samuel B and Steurer, David},
  journal={arXiv preprint arXiv:1710.00264},
  year={2017}
}

@book{hopkins2018statistical,
  title={Statistical inference and the sum of squares method},
  author={Hopkins, Samuel},
  year={2018},
  publisher={Cornell University}
}

@inproceedings{kunisky2019notes,
  title={Notes on computational hardness of hypothesis testing: Predictions using the low-degree likelihood ratio},
  author={Kunisky, Dmitriy and Wein, Alexander S and Bandeira, Afonso S},
  booktitle={ISAAC Congress (International Society for Analysis, its Applications and Computation)},
  pages={1--50},
  year={2019},
  organization={Springer}
}

@article{basu2025solving,
  title   = {Solving Random Planted {CSPs} below the {$n^{k/2}$} Threshold},
  author  = {Basu, Arpon and Hsieh, Jun-Ting and Lin, Andrew D. and Manohar, Peter},
  journal = {arXiv preprint arXiv:2507.10833},
  year    = {2025},
  doi     = {10.48550/arXiv.2507.10833},
}

@article{feldman2015subsampled,
  title={Subsampled power iteration: a unified algorithm for block models and planted csp's},
  author={Feldman, Vitaly and Perkins, Will and Vempala, Santosh},
  journal={Advances in Neural Information Processing Systems},
  volume={28},
  year={2015}
}

@inproceedings{applebaum2010public,
  title={Public-key cryptography from different assumptions},
  author={Applebaum, Benny and Barak, Boaz and Wigderson, Avi},
  booktitle={Proceedings of the forty-second ACM symposium on Theory of computing},
  pages={171--180},
  year={2010}
}

@inproceedings{bogdanov2019xor,
  title={XOR codes and sparse learning parity with noise},
  author={Bogdanov, Andrej and Sabin, Manuel and Vasudevan, Prashant Nalini},
  booktitle={Proceedings of the Thirtieth Annual ACM-SIAM Symposium on Discrete Algorithms},
  pages={986--1004},
  year={2019},
  organization={SIAM}
}

@inproceedings{applebaum2012pseudorandom,
  title={Pseudorandom generators with long stretch and low locality from random local one-way functions},
  author={Applebaum, Benny},
  booktitle={Proceedings of the forty-fourth annual ACM symposium on Theory of computing},
  pages={805--816},
  year={2012}
}

@inproceedings{guruswami2022algorithms,
  title={Algorithms and certificates for Boolean CSP refutation: smoothed is no harder than random},
  author={Guruswami, Venkatesan and Kothari, Pravesh K and Manohar, Peter},
  booktitle={Proceedings of the 54th Annual ACM SIGACT Symposium on Theory of Computing},
  pages={678--689},
  year={2022}
}

@inproceedings{hsieh2023simple,
  title={A simple and sharper proof of the hypergraph Moore bound},
  author={Hsieh, Jun-Ting and Kothari, Pravesh K and Mohanty, Sidhanth},
  booktitle={Proceedings of the 2023 Annual ACM-SIAM Symposium on Discrete Algorithms (SODA)},
  pages={2324--2344},
  year={2023},
  organization={SIAM}
}

@inproceedings{raghavendra2017strongly,
  title={Strongly refuting random CSPs below the spectral threshold},
  author={Raghavendra, Prasad and Rao, Satish and Schramm, Tselil},
  booktitle={Proceedings of the 49th Annual ACM SIGACT Symposium on Theory of Computing},
  pages={121--131},
  year={2017}
}

@inproceedings{alekhnovich2003more,
  title={More on average case vs approximation complexity},
  author={Alekhnovich, Michael},
  booktitle={44th Annual IEEE Symposium on Foundations of Computer Science, 2003. Proceedings.},
  pages={298--307},
  year={2003},
  organization={IEEE}
}

@inproceedings{applebaum2017secure,
  title={Secure arithmetic computation with constant computational overhead},
  author={Applebaum, Benny and Damg{\aa}rd, Ivan and Ishai, Yuval and Nielsen, Michael and Zichron, Lior},
  booktitle={Annual International Cryptology Conference},
  pages={223--254},
  year={2017},
  organization={Springer}
}

@inproceedings{couteau2021silver,
  title={Silver: Silent VOLE and oblivious transfer from hardness of decoding structured LDPC codes},
  author={Couteau, Geoffroy and Rindal, Peter and Raghuraman, Srinivasan},
  booktitle={Annual International Cryptology Conference},
  pages={502--534},
  year={2021},
  organization={Springer}
}

@inproceedings{raghuraman2023expand,
  title={Expand-convolute codes for pseudorandom correlation generators from LPN},
  author={Raghuraman, Srinivasan and Rindal, Peter and Tanguy, Titouan},
  booktitle={Annual International Cryptology Conference},
  pages={602--632},
  year={2023},
  organization={Springer}
}

@inproceedings{ragavan2025indistinguishability,
  title={Indistinguishability obfuscation from bilinear maps and LPN variants},
  author={Ragavan, Seyoon and Vafa, Neekon and Vaikuntanathan, Vinod},
  booktitle={Theory of Cryptography Conference},
  pages={3--36},
  year={2025},
  organization={Springer}
}

@inproceedings{dao2023multi,
  title={Multi-party homomorphic secret sharing and sublinear MPC from sparse LPN},
  author={Dao, Quang and Ishai, Yuval and Jain, Aayush and Lin, Huijia},
  booktitle={Annual International Cryptology Conference},
  pages={315--348},
  year={2023},
  organization={Springer}
}

@article{applebaum2016cryptographic,
  title={Cryptographic hardness of random local functions: Survey},
  author={Applebaum, Benny},
  journal={Computational complexity},
  volume={25},
  number={3},
  pages={667--722},
  year={2016},
  publisher={Springer}
}

@inproceedings{feige2006witnesses,
  title={Witnesses for non-satisfiability of dense random 3CNF formulas},
  author={Feige, Uriel and Kim, Jeong Han and Ofek, Eran},
  booktitle={2006 47th Annual IEEE Symposium on Foundations of Computer Science (FOCS'06)},
  pages={497--508},
  year={2006},
  organization={IEEE}
}

@inproceedings{allen2015refute,
  title={How to refute a random CSP},
  author={Allen, Sarah R and O'Donnell, Ryan and Witmer, David},
  booktitle={2015 IEEE 56th Annual Symposium on Foundations of Computer Science},
  pages={689--708},
  year={2015},
  organization={IEEE}
}

@inproceedings{chen2025algorithms,
  title={Algorithms for Sparse LPN and LSPN Against Low-noise},
  author={Chen, Xue and Shu, Wenxuan and Zhou, Zhaienhe},
  booktitle={The Thirty Eighth Annual Conference on Learning Theory},
  pages={1091--1093},
  year={2025},
  organization={PMLR}
}

@article{bresler2026average,
  title={Average-Case Reductions for $ k $-XOR and Tensor PCA},
  author={Bresler, Guy and Harbuzova, Alina},
  journal={arXiv preprint arXiv:2601.19016},
  year={2026}
}

@inproceedings{hopkins2017efficient,
  title={Efficient bayesian estimation from few samples: community detection and related problems},
  author={Hopkins, Samuel B and Steurer, David},
  booktitle={2017 IEEE 58th Annual Symposium on Foundations of Computer Science (FOCS)},
  pages={379--390},
  year={2017},
  organization={IEEE}
}

@inproceedings{hopkins2017power,
  title={The power of sum-of-squares for detecting hidden structures},
  author={Hopkins, Samuel B and Kothari, Pravesh K and Potechin, Aaron and Raghavendra, Prasad and Schramm, Tselil and Steurer, David},
  booktitle={2017 IEEE 58th Annual Symposium on Foundations of Computer Science (FOCS)},
  pages={720--731},
  year={2017},
  organization={IEEE}
}

@article{blum2003noise,
  title={Noise-tolerant learning, the parity problem, and the statistical query model},
  author={Blum, Avrim and Kalai, Adam and Wasserman, Hal},
  journal={Journal of the ACM (JACM)},
  volume={50},
  number={4},
  pages={506--519},
  year={2003},
  publisher={ACM New York, NY, USA}
}

@inproceedings{blum1993cryptographic,
  title={Cryptographic primitives based on hard learning problems},
  author={Blum, Avrim and Furst, Merrick and Kearns, Michael and Lipton, Richard J},
  booktitle={Annual international cryptology conference},
  pages={278--291},
  year={1993},
  organization={Springer}
}

@inproceedings{lyubashevsky2005parity,
  title={The parity problem in the presence of noise, decoding random linear codes, and the subset sum problem},
  author={Lyubashevsky, Vadim},
  booktitle={International Workshop on Approximation Algorithms for Combinatorial Optimization},
  pages={378--389},
  year={2005},
  organization={Springer}
}

@article{regev2009lattices,
  title={On lattices, learning with errors, random linear codes, and cryptography},
  author={Regev, Oded},
  journal={Journal of the ACM (JACM)},
  volume={56},
  number={6},
  pages={1--40},
  year={2009},
  publisher={ACM New York, NY, USA}
}

@inproceedings{becker2012decoding,
  title={Decoding random binary linear codes in 2 n/20: How 1+ 1= 0 improves information set decoding},
  author={Becker, Anja and Joux, Antoine and May, Alexander and Meurer, Alexander},
  booktitle={Annual international conference on the theory and applications of cryptographic techniques},
  pages={520--536},
  year={2012},
  organization={Springer}
}

@article{raz2018fast,
  title={Fast learning requires good memory: A time-space lower bound for parity learning},
  author={Raz, Ran},
  journal={Journal of the ACM (JACM)},
  volume={66},
  number={1},
  pages={1--18},
  year={2018},
  publisher={ACM New York, NY, USA}
}

@inproceedings{bshouty2024approximating,
  title={Approximating the Number of Relevant Variables in a Parity Implies Proper Learning},
  author={Bshouty, Nader H and Haddad, George},
  booktitle={Approximation, Randomization, and Combinatorial Optimization. Algorithms and Techniques (APPROX/RANDOM 2024)},
  pages={38--1},
  year={2024},
  organization={Schloss Dagstuhl--Leibniz-Zentrum f{\"u}r Informatik}
}

@article{valiant2015finding,
  title={Finding correlations in subquadratic time, with applications to learning parities and the closest pair problem},
  author={Valiant, Gregory},
  journal={Journal of the ACM (JACM)},
  volume={62},
  number={2},
  pages={1--45},
  year={2015},
  publisher={ACM New York, NY, USA}
}

@inproceedings{golowich2024exploration,
  title={Exploration is harder than prediction: Cryptographically separating reinforcement learning from supervised learning},
  author={Golowich, Noah and Moitra, Ankur and Rohatgi, Dhruv},
  booktitle={2024 IEEE 65th Annual Symposium on Foundations of Computer Science (FOCS)},
  pages={1953--1967},
  year={2024},
  organization={IEEE}
}

@inproceedings{arora2011new,
  title={New algorithms for learning in presence of errors},
  author={Arora, Sanjeev and Ge, Rong},
  booktitle={International Colloquium on Automata, Languages, and Programming},
  pages={403--415},
  year={2011},
  organization={Springer}
}

@article{li2025hardness,
  title={Hardness and Algorithms for Batch LPN under Dependent Noise},
  author={Li, Xin and Mao, Songtao and Zhou, Zhaienhe},
  journal={Cryptology ePrint Archive},
  year={2025}
}

@article{dhawan2025detection,
  title={Detection of Dense Subhypergraphs by Low-Degree Polynomials},
  author={Dhawan, Abhishek and Mao, Cheng and Wein, Alexander S},
  journal={Random Structures \& Algorithms},
  volume={66},
  number={1},
  pages={e21279},
  year={2025},
  publisher={Wiley Online Library}
}

@article{mao2021optimal,
  title={Optimal spectral recovery of a planted vector in a subspace},
  author={Mao, Cheng and Wein, Alexander S},
  journal={arXiv preprint arXiv:2105.15081},
  year={2021}
}

@article{ding2024subexponential,
  title={Subexponential-time algorithms for sparse PCA},
  author={Ding, Yunzi and Kunisky, Dmitriy and Wein, Alexander S and Bandeira, Afonso S},
  journal={Foundations of Computational Mathematics},
  volume={24},
  number={3},
  pages={865--914},
  year={2024},
  publisher={Springer}
}

@article{bandeira2022franz,
  title={The Franz-Parisi criterion and computational trade-offs in high dimensional statistics},
  author={Bandeira, Afonso S and El Alaoui, Ahmed and Hopkins, Samuel and Schramm, Tselil and Wein, Alexander S and Zadik, Ilias},
  journal={Advances in Neural Information Processing Systems},
  volume={35},
  pages={33831--33844},
  year={2022}
}

@article{ding2025low,
  title={Low-degree hardness of detection for correlated Erd{\H{o}}s--R{\'e}nyi graphs},
  author={Ding, Jian and Du, Hang and Li, Zhangsong},
  journal={The Annals of Statistics},
  volume={53},
  number={5},
  pages={1833--1856},
  year={2025},
  publisher={Institute of Mathematical Statistics}
}

@article{chen2026computational,
  title={A computational transition for detecting correlated stochastic block models by low-degree polynomials},
  author={Chen, Guanyi and Ding, Jian and Gong, Shuyang and Li, Zhangsong},
  journal={The Annals of Statistics},
  volume={54},
  number={1},
  pages={226--251},
  year={2026},
  publisher={Institute of Mathematical Statistics}
}

@article{gamarnik2024hardness,
  title={Hardness of random optimization problems for Boolean circuits, low-degree polynomials, and Langevin dynamics},
  author={Gamarnik, David and Jagannath, Aukosh and Wein, Alexander S},
  journal={SIAM Journal on Computing},
  volume={53},
  number={1},
  pages={1--46},
  year={2024},
  publisher={SIAM}
}

@article{wein2022optimal,
  title={Optimal low-degree hardness of maximum independent set},
  author={Wein, Alexander S},
  journal={Mathematical Statistics and Learning},
  volume={4},
  number={3},
  pages={221--251},
  year={2022}
}

@article{wang2016average,
  title={Average-case hardness of RIP certification},
  author={Wang, Tengyao and Berthet, Quentin and Plan, Yaniv},
  journal={Advances in Neural Information Processing Systems},
  volume={29},
  year={2016}
}

@inproceedings{bandeira2019computational,
  title={Computational Hardness of Certifying Bounds on Constrained PCA Problems},
  author={Bandeira, Afonso S and Kunisky, Dmitriy and Wein, Alexander S},
  booktitle={11th Innovations in Theoretical Computer Science Conference (ITCS 2020)},
  volume={151},
  year={2020}
}

@inproceedings{kunisky2024computational,
  title={Computational hardness of detecting graph lifts and certifying lift-monotone properties of random regular graphs},
  author={Kunisky, Dmitriy and Yu, Xifan},
  booktitle={2024 IEEE 65th Annual Symposium on Foundations of Computer Science (FOCS)},
  pages={1621--1633},
  year={2024},
  organization={IEEE}
}

@inproceedings{bresler2022algorithmic,
  title={The algorithmic phase transition of random k-sat for low degree polynomials},
  author={Bresler, Guy and Huang, Brice},
  booktitle={2021 IEEE 62nd annual symposium on foundations of computer science (FOCS)},
  pages={298--309},
  year={2022},
  organization={IEEE}
}

@article{li2025algorithmic,
  title={The Algorithmic Phase Transition in Correlated Spiked Models},
  author={Li, Zhangsong},
  journal={arXiv preprint arXiv:2511.06040},
  year={2025}
}

@inproceedings{bangachev2025near,
  title={Near-optimal time-sparsity trade-offs for solving noisy linear equations},
  author={Bangachev, Kiril and Bresler, Guy and Tiegel, Stefan and Vaikuntanathan, Vinod},
  booktitle={Proceedings of the 57th Annual ACM Symposium on Theory of Computing},
  pages={1910--1920},
  year={2025}
}

@inproceedings{arora1995polynomial,
  title={Polynomial time approximation schemes for dense instances of NP-hard problems},
  author={Arora, Sanjeev and Karger, David and Karpinski, Marek},
  booktitle={Proceedings of the twenty-seventh annual ACM symposium on Theory of computing},
  pages={284--293},
  year={1995}
}

@article{haastad2001some,
  title={Some optimal inapproximability results},
  author={H{\aa}stad, Johan},
  journal={Journal of the ACM (JACM)},
  volume={48},
  number={4},
  pages={798--859},
  year={2001},
  publisher={ACM New York, NY, USA}
}

@inproceedings{fotakis2015sub,
  title={Sub-exponential Approximation Schemes for CSPs: From Dense to Almost Sparse},
  author={Fotakis, Dimitris and Lampis, Michail and Paschos, Vangelis},
  booktitle={33rd Symposium on Theoretical Aspects of Computer Science (STACS 2016)},
  pages={37--1},
  year={2016}
}

@article{impagliazzo2001complexity,
  title={On the complexity of k-SAT},
  author={Impagliazzo, Russell and Paturi, Ramamohan},
  journal={Journal of Computer and System Sciences},
  volume={62},
  number={2},
  pages={367--375},
  year={2001},
  publisher={Elsevier}
}

@article{barthel2002hiding,
  title={Hiding solutions in random satisfiability problems: A statistical mechanics approach},
  author={Barthel, Wolfgang and Hartmann, Alexander K and Leone, Michele and Ricci-Tersenghi, Federico and Weigt, Martin and Zecchina, Riccardo},
  journal={Physical review letters},
  volume={88},
  number={18},
  pages={188701},
  year={2002},
  publisher={APS}
}

@article{jia2007generating,
  title={Generating hard satisfiable formulas by hiding solutions deceptively},
  author={Jia, Haixia and Moore, Cristopher and Strain, Doug},
  journal={Journal of Artificial Intelligence Research},
  volume={28},
  pages={107--118},
  year={2007}
}

@inproceedings{bogdanov2009security,
  title={On the security of goldreich’s one-way function},
  author={Bogdanov, Andrej and Qiao, Youming},
  booktitle={International Workshop on Approximation Algorithms for Combinatorial Optimization},
  pages={392--405},
  year={2009},
  organization={Springer}
}

@article{coja2010efficient,
  title={An efficient sparse regularity concept},
  author={Coja-Oghlan, Amin and Cooper, Colin and Frieze, Alan},
  journal={SIAM Journal on Discrete Mathematics},
  volume={23},
  number={4},
  pages={2000--2034},
  year={2010},
  publisher={SIAM}
}

@inproceedings{abascal2021strongly,
  title={Strongly refuting all semi-random Boolean CSPs},
  author={Abascal, Jackson and Guruswami, Venkatesan and Kothari, Pravesh K},
  booktitle={Proceedings of the 2021 ACM-SIAM Symposium on Discrete Algorithms (SODA)},
  pages={454--472},
  year={2021},
  organization={SIAM}
}

@inproceedings{guruswami2023efficient,
  title={Efficient algorithms for semirandom planted csps at the refutation threshold},
  author={Guruswami, Venkatesan and Hsieh, Jun-Ting and Kothari, Pravesh K and Manohar, Peter},
  booktitle={2023 IEEE 64th Annual Symposium on Foundations of Computer Science (FOCS)},
  pages={307--327},
  year={2023},
  organization={IEEE}
}

@article{alon2008biomolecular,
  title={Biomolecular network motif counting and discovery by color coding},
  author={Alon, Noga and Dao, Phuong and Hajirasouliha, Iman and Hormozdiari, Fereydoun and Sahinalp, S Cenk},
  journal={Bioinformatics},
  volume={24},
  number={13},
  pages={i241--i249},
  year={2008},
  publisher={Oxford University Press}
}

@article{buhai2025quasi,
  title={The quasi-polynomial low-degree conjecture is false},
  author={Buhai, Rares-Darius and Hsieh, Jun-Ting and Jain, Aayush and Kothari, Pravesh K},
  journal={arXiv preprint arXiv:2505.17360},
  year={2025}
}

@inproceedings{holmgren2020counterexamples,
  title={Counterexamples to the Low-Degree Conjecture},
  author={Holmgren, Justin and Wein, Alexander S},
  booktitle={12th Innovations in Theoretical Computer Science Conference (ITCS 2021)},
  pages={75--1},
  year={2021},
  organization={Schloss Dagstuhl--Leibniz-Zentrum f{\"u}r Informatik}
}

@inproceedings{bogdanov2025sample,
  title={Sample Efficient Search to Decision for k LIN},
  author={Bogdanov, Andrej and Rosen, Alon and Tan, Kel Zin},
  booktitle={Annual International Cryptology Conference},
  pages={203--220},
  year={2025},
  organization={Springer}
}

@article{mao2024testing,
  title={Testing network correlation efficiently via counting trees},
  author={Mao, Cheng and Wu, Yihong and Xu, Jiaming and Yu, Sophie H},
  journal={The Annals of Statistics},
  volume={52},
  number={6},
  pages={2483--2505},
  year={2024},
  publisher={Institute of Mathematical Statistics}
}

@inproceedings{chen2026detecting,
  title={Detecting Correlation Efficiently in Very Supercritical Stochastic Block Models: Breaking the Otter’s Threshold Barrier},
  author={Chen, Guanyi and Ding, Jian and Gong, Shuyang and Li, Zhangsong},
  booktitle={Proceedings of the 2026 Annual ACM-SIAM Symposium on Discrete Algorithms (SODA)},
  pages={2743--2759},
  year={2026},
  organization={SIAM}
}
\appendix
\section{Asymptotic Estimates for Combinatorial Quantities}

In this appendix, we record several elementary asymptotic estimates that are used throughout the paper. We state them here for convenience and omit further justification in the main text.

\begin{lemma}[Stirling-type bounds]\label{lem:stirling}
For every integer $n\ge 1$,
\[
\sqrt{n}\left(\frac{n}{e}\right)^n
\;\le\;
n!
\;\le\;
e\sqrt{n}\left(\frac{n}{e}\right)^n.
\]
\end{lemma}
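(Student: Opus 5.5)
The statement to prove is the Stirling-type bound $\sqrt{n}\,(n/e)^n\le n!\le e\sqrt{n}\,(n/e)^n$ for all $n\ge1$. The approach is to study the normalized sequence
\[
a_n:=\frac{n!\,e^n}{n^{n+1/2}}
\]
and show that it is nonincreasing with $a_1=e$ and $\lim a_n=\sqrt{2\pi}$. Since $1\le\sqrt{2\pi}\le a_n\le a_1=e$, both inequalities follow at once. To avoid relying on Stirling's exact constant, one can replace the limit by a cruder lower bound, described in the last paragraph.

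\textbf{Step 1 (monotonicity).} Compute the log-ratio
\[
\log\frac{a_n}{a_{n+1}}=\Bigl(n+\tfrac12\Bigr)\log\Bigl(1+\tfrac1n\Bigr)-1.
\]
Set $x=1/(2n+1)$, so that $1+\tfrac1n=\frac{1+x}{1-x}$ and $n+\tfrac12=\frac1{2x}$. The log-ratio then equals
\[
\frac{1}{2x}\log\frac{1+x}{1-x}-1=\frac{x^2}{3}+\frac{x^4}{5}+\cdots\ \ge 0.
\]
Hence $a_n$ is nonincreasing, and $a_n\le a_1=e$ gives the upper bound.

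\textbf{Step 2 (lower bound).} The same series is dominated by a geometric one:
\[
\log\frac{a_n}{a_{n+1}}\le\frac{x^2}{3(1-x^2)}=\frac{1}{12n(n+1)}=\frac{1}{12}\Bigl(\frac1n-\frac1{n+1}\Bigr).
\]
It follows that $a_n e^{-1/(12n)}$ is nondecreasing, so
\[
a_n\ \ge\ \lim_{N\to\infty} a_N\ \ge\ a_1e^{-1/12}=e^{11/12}>1.
\]
This gives $n!\ge\sqrt n\,(n/e)^n$ without ever identifying the constant $\sqrt{2\pi}$.

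\textbf{Main obstacle.} The only delicate point is the series bookkeeping in Step 2: the telescoping bound has to be set up so that the lower bound comes out directly. Checking small $n$ numerically is unnecessary, because $e^{11/12}\approx 2.5>1$ leaves ample room.
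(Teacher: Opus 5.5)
Your argument is correct, and it takes a different route from the paper, whose entire proof is the remark that the bounds are ``an immediate consequence of Stirling's formula.'' You instead prove both inequalities from scratch by the classical Robbins-type monotonicity argument. The sequence $a_n=n!e^n/n^{n+1/2}$ is nonincreasing, so $a_n\le a_1=e$. The sequence $a_ne^{-1/(12n)}$ is nondecreasing, so $a_n\ge e^{11/12}>1$. Your computations check out, including the identity $x^2/(1-x^2)=1/(4n(n+1))$ for $x=1/(2n+1)$.

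Your route also does something the citation does not. The asymptotic statement $n!\sim\sqrt{2\pi n}\,(n/e)^n$ says nothing about any fixed $n$. Even the sharp non-asymptotic bound $n!\le\sqrt{2\pi n}\,(n/e)^n e^{1/(12n)}$ fails to give the upper bound at $n=1$, since $\sqrt{2\pi}\,e^{1/12}\approx 2.724>e$. Yet at $n=1$ the stated upper bound holds with equality. So the paper's one-liner tacitly needs a separate check at $n=1$. Your argument explains why the constant is exactly $e=a_1$ and covers all $n\ge1$ uniformly. The citation, in exchange, is shorter and identifies the true constant $\sqrt{2\pi}$.

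A small simplification: you do not need the limit. The chain $a_n\ge a_ne^{-1/(12n)}\ge a_1e^{-1/12}$ already gives the lower bound.
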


\begin{proof}
This is an immediate consequence of Stirling's formula.
\end{proof}

\begin{lemma}[Falling factorial asymptotics]\label{lem:falling-factorial-asymp}
If $v=o(\sqrt n)$, then
\[
(n)_v
=
n^v\left(1+O\!\left(\frac{v^2}{n}\right)\right)
=
(1+o(1))\,n^v.
\]
In particular,
\[
(1-o(1))\,n^v \le (n)_v \le n^v.
\]
\end{lemma}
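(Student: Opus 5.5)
The plan is to write $(n)_v=\prod_{i=0}^{v-1}(n-i)=n^v\prod_{i=0}^{v-1}(1-i/n)$ and control the product from both sides. The upper bound $(n)_v\le n^v$ is immediate, since every factor $1-i/n$ lies in $[0,1]$.

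For the lower bound I would use the elementary inequality $\prod_i(1-a_i)\ge 1-\sum_i a_i$, valid for $a_i\in[0,1]$. It follows by induction on the number of factors:
\[
(1-a)(1-S)\ge 1-a-S \quad\text{whenever } a,S\ge 0 .
\]
Applying it with $a_i=i/n$ gives
\[
\prod_{i=0}^{v-1}\Bigl(1-\frac in\Bigr)\ge 1-\sum_{i=0}^{v-1}\frac in = 1-\frac{v(v-1)}{2n}\ge 1-\frac{v^2}{2n}.
\]
Combining the two sides,
\[
n^v\Bigl(1-\frac{v^2}{2n}\Bigr)\le (n)_v\le n^v,
\]
which is exactly $(n)_v=n^v\bigl(1+O(v^2/n)\bigr)$, with the error term in fact nonpositive.

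Finally, the hypothesis $v=o(\sqrt n)$ means $v^2/n=o(1)$. So the relative error is $o(1)$, yielding $(1-o(1))n^v\le (n)_v\le n^v$ and hence $(n)_v=(1+o(1))n^v$.

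There is no real obstacle here. The only point needing care is keeping the $O(\cdot)$ uniform in $v$, and the explicit constant $1/2$ in the lower bound handles this without appealing to logarithm expansions.
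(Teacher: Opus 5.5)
Your proof is correct, but it takes a different route from the paper's proof.

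The paper takes logarithms: it writes $\log\bigl((n)_v/n^v\bigr)=\sum_{t=0}^{v-1}\log(1-t/n)$ and expands each term as $-t/n+O(t^2/n^2)$. This gives $-\frac{v(v-1)}{2n}+O(v^3/n^2)=O(v^2/n)$, which it then exponentiates. You instead apply the Weierstrass product inequality $\prod_i(1-a_i)\ge 1-\sum_i a_i$ directly to the product.

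Your route gives an explicit, non-asymptotic sandwich
\[
1-\frac{v(v-1)}{2n}\;\le\;\frac{(n)_v}{n^v}\;\le\;1,
\]
valid for every $v\le n$. It has no implicit constants, and you never need to check that the $O(x^2)$ remainder in $\log(1-x)$ is uniform. The hypothesis $v=o(\sqrt n)$ enters only at the final step.

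The logarithmic route, in turn, captures the second-order behaviour $(n)_v\approx e^{-v^2/(2n)}n^v$. That stays informative when $v^2/n$ is of constant order or larger, whereas your lower bound $1-\frac{v(v-1)}{2n}$ degrades there and eventually becomes vacuous. The paper also reuses the same logarithmic template for the falling-factorial ratio and hypergeometric estimates in the lemmas that follow.

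For the lemma as stated, your argument is fully sufficient and arguably cleaner.
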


\begin{proof}
We write
\[
\frac{(n)_v}{n^v}
=
\prod_{t=0}^{v-1}\left(1-\frac{t}{n}\right).
\]
Since $v=o(\sqrt n)$, we have $t/n=o(1)$ uniformly for $0\le t\le v-1$. Thus
\[
\log\frac{(n)_v}{n^v}
=
\sum_{t=0}^{v-1}\log\left(1-\frac{t}{n}\right)
=
-\sum_{t=0}^{v-1}\frac{t}{n}
+
O\!\left(\sum_{t=0}^{v-1}\frac{t^2}{n^2}\right).
\]
Now
\[
\sum_{t=0}^{v-1}\frac{t}{n}
=
\frac{v(v-1)}{2n}
=
O\!\left(\frac{v^2}{n}\right),
\]
and
\[
\sum_{t=0}^{v-1}\frac{t^2}{n^2}
=
O\!\left(\frac{v^3}{n^2}\right)
=
o\!\left(\frac{v^2}{n}\right),
\]
since $v=o(\sqrt n)$. Hence
\[
\log\frac{(n)_v}{n^v}
=
O\!\left(\frac{v^2}{n}\right).
\]
Exponentiating gives
\[
\frac{(n)_v}{n^v}
=
1+O\!\left(\frac{v^2}{n}\right),
\]
as claimed. The upper bound $(n)_v\le n^v$ is immediate.
\end{proof}

\begin{lemma}[Binomial coefficient asymptotics]\label{lem:binom-asymp-small-v}
If $v=o(\sqrt n)$, then
\[
\binom{n}{v}
=
\frac{n^v}{v!}\left(1+O\!\left(\frac{v^2}{n}\right)\right)
=
(1+o(1))\frac{n^v}{v!}.
\]
\end{lemma}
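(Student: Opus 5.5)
The plan is to reduce the binomial coefficient to the falling factorial and then invoke Lemma~\ref{lem:falling-factorial-asymp}. By definition,
\[
\binom{n}{v}=\frac{(n)_v}{v!},
\]
valid for every integer $0\le v\le n$. Since $v=o(\sqrt n)$, we certainly have $v\le n$ for large $n$, so this identity is available.

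The second step is to substitute the estimate $(n)_v=n^v\bigl(1+O(v^2/n)\bigr)$ from Lemma~\ref{lem:falling-factorial-asymp}. Dividing both sides by $v!$ then gives
\[
\binom{n}{v}=\frac{n^v}{v!}\Bigl(1+O\bigl(v^2/n\bigr)\Bigr).
\]
Under the standing assumption $v=o(\sqrt n)$ we have $v^2/n=o(1)$, so the error factor is $1+o(1)$. This yields the second form of the statement.

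There is no genuine obstacle here. The only point to check is that the implied constant in the $O(\cdot)$ is uniform, and it is, because it is inherited directly from Lemma~\ref{lem:falling-factorial-asymp}. That lemma's proof bounds $\log\bigl((n)_v/n^v\bigr)$ by $-v(v-1)/(2n)+O(v^3/n^2)$ uniformly in the regime $v=o(\sqrt n)$. If a one-sided bound is wanted, one can add that $\binom{n}{v}\le n^v/v!$ holds trivially, since $(n)_v\le n^v$.
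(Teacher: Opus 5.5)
Your proposal is correct and follows the paper's proof exactly: the paper also writes $\binom{n}{v}=(n)_v/v!$ and invokes Lemma~\ref{lem:falling-factorial-asymp}, with $v^2/n=o(1)$ giving the $(1+o(1))$ form.
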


\begin{proof}
This follows immediately from
\[
\binom{n}{v}=\frac{(n)_v}{v!}
\]
and Lemma~\ref{lem:falling-factorial-asymp}.
\end{proof}

\begin{lemma}[Ratio of nearby falling factorials]\label{lem:falling-factorial-ratio}
Let $v=o(\sqrt n)$ and $0\le i\le v$. Then uniformly over all $0\le i\le v$,
\[
\frac{(n)_{v-i}}{(n)_v}
=
\frac{1}{(n-v+i)_i}
=
n^{-i}\left(1+O\!\left(\frac{vi}{n}\right)\right)
=
(1+o(1))\,n^{-i}.
\]
\end{lemma}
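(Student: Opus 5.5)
We use the identity $(n)_v/(n)_{v-i}=(n-v+i)_i$, which holds because $(n)_v=(n)_{v-i}\cdot (n-v+i)(n-v+i-1)\cdots(n-v+1)$; the second factor is exactly the falling factorial $(n-v+i)_i$. This gives the first equality directly. The remaining task is to estimate $(n-v+i)_i$ relative to $n^i$, uniformly in $0\le i\le v$.

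For this, write
\[
\frac{(n-v+i)_i}{n^i}=\prod_{t=1}^{i}\Bigl(1-\frac{v-i+t-1}{n}\Bigr)\cdot\!\!\!\!\!\!\quad ,
\]
or more simply note that each factor of $(n-v+i)_i$ lies in $[n-v+1,\,n]$. Hence
\[
(n-v)^i\le (n-v+i)_i\le n^i,
\qquad\text{so}\qquad
\Bigl(1-\frac{v}{n}\Bigr)^i\le \frac{(n-v+i)_i}{n^i}\le 1 .
\]
Taking logarithms, $i\log(1-v/n)\ge -2iv/n$ for $v/n\le 1/2$. Since $iv\le v^2=o(n)$, exponentiating gives $(n-v+i)_i/n^i=1-O(vi/n)$. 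Inverting, $1/(n-v+i)_i=n^{-i}(1+O(vi/n))$, where the inversion is legitimate because $vi/n=o(1)$.

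All constants in the $O(\cdot)$ depend only on the bound $v/n\le 1/2$, so the estimate is uniform over $0\le i\le v$. The final form $(1+o(1))n^{-i}$ then follows from $vi/n\le v^2/n=o(1)$. No step is a real obstacle; the only care needed is to keep the error term $O(vi/n)$ explicit, rather than $O(v^2/n)$, so that the bound stays sharp when $i$ is small.
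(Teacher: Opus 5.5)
Your proof is correct and follows essentially the same route as the paper: the identity $(n)_v=(n)_{v-i}\,(n-v+i)_i$, then showing $\log\bigl((n-v+i)_i/n^i\bigr)=-O(vi/n)$, then exponentiating and inverting. Your sandwich $(1-v/n)^i\le (n-v+i)_i/n^i\le 1$, combined with $\log(1-x)\ge -2x$ on $[0,1/2]$, is a tidier explicit-constant version of the paper's factor-by-factor logarithmic expansion. You also spell out the inversion step, which uses $vi/n\le v^2/n=o(1)$; the paper leaves this implicit.
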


\begin{proof}
We have
\[
\frac{(n)_{v-i}}{(n)_v}
=
\frac{1}{(n-v+i)_i}.
\]
Also,
\[
(n-v+i)_i
=
n^i\prod_{t=0}^{i-1}\left(1-\frac{v-i-t}{n}\right).
\]
Since $0\le i\le v=o(\sqrt n)$, the quantity $(v-i+t)/n=o(1)$ uniformly. Hence
\[
\log\frac{(n-v+i)_i}{n^i}
=
\sum_{t=0}^{i-1}\log\left(1-\frac{v-i-t}{n}\right)
=
O\!\left(\frac{vi}{n}\right).
\]
Exponentiating yields
\[
(n-v+i)_i=n^i\left(1+O\!\left(\frac{vi}{n}\right)\right),
\]
and therefore
\[
\frac{(n)_{v-i}}{(n)_v}
=
n^{-i}\left(1+O\!\left(\frac{vi}{n}\right)\right).
\]
Since $vi\le v^2=o(n)$, this is also $(1+o(1))n^{-i}$.
\end{proof}

\begin{lemma}\label{lem:hypergeom-asymp}
Let $v=o(\sqrt n)$. Then, uniformly for all $0\le i\le v$,
\[
\frac{\binom{v}{i}\binom{n-v}{v-i}}{\binom{n}{v}}
=
\frac{(v)_i^2}{i!\,n^i}\left(1+O\!\left(\frac{v^2}{n}\right)\right)
=
(1+o(1))\frac{(v)_i^2}{i!\,n^i}.
\]
\end{lemma}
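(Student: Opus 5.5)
We write the ratio as a product of three factors and reduce each to the falling-factorial estimates already proved:
\[
\frac{\binom{v}{i}\binom{n-v}{v-i}}{\binom{n}{v}}
=\frac{(v)_i}{i!}\cdot\frac{(n-v)_{v-i}}{(v-i)!}\cdot\frac{v!}{(n)_v}
=\frac{(v)_i}{i!}\cdot\frac{v!}{(v-i)!}\cdot\frac{(n-v)_{v-i}}{(n)_v}
=\frac{(v)_i^2}{i!}\cdot\frac{(n-v)_{v-i}}{(n)_v},
\]
using $v!/(v-i)!=(v)_i$. Everything then comes down to showing, uniformly in $0\le i\le v$,
\[
\frac{(n-v)_{v-i}}{(n)_v}=n^{-i}\Bigl(1+O\bigl(v^2/n\bigr)\Bigr).
\]

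For the numerator, $(n-v)_{v-i}=n^{v-i}\prod_{t=0}^{v-i-1}\bigl(1-\tfrac{v+t}{n}\bigr)$. Every factor has $(v+t)/n\le 2v/n=o(1)$, and there are at most $v$ factors. Exactly as in the proof of Lemma~\ref{lem:falling-factorial-asymp}, we take logarithms and use $\log(1-x)=-x+O(x^2)$. This gives $\log\bigl((n-v)_{v-i}/n^{v-i}\bigr)=O(v\cdot v/n)=O(v^2/n)$.

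For the denominator, Lemma~\ref{lem:falling-factorial-asymp} gives directly $(n)_v=n^v(1+O(v^2/n))$.

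Dividing the two estimates yields $n^{-i}\exp(O(v^2/n))$. Since $v^2/n=o(1)$, we have $\exp(O(v^2/n))=1+O(v^2/n)$, which gives the first equality. The constants are uniform in $i$ because every bound used depends only on $v$ and $n$. The second equality, $(1+o(1))(v)_i^2/(i!\,n^i)$, is then immediate from $v=o(\sqrt n)$.

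The only point needing care is the uniformity in $i$. It is also the reason to expand $(n-v)_{v-i}$ directly instead of relying on Lemma~\ref{lem:falling-factorial-ratio}: the shift by $v$ inside the numerator's factors contributes an error of order $v(v-i)/n$, not $vi/n$. That error is still bounded by $v^2/n$, which is exactly the stated error term.
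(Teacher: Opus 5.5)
Your proof is correct and is essentially the paper's argument. Both rewrite the ratio as $\frac{(v)_i^2}{i!}\cdot\frac{(n-v)_{v-i}}{(n)_v}$ and bound the falling factorials by log-expanding products of at most $v$ factors $1-x$ with $x=O(v/n)$; the only difference is bookkeeping: the paper splits $(n)_v=(n)_i\,(n-i)_{v-i}$ and compares $(n-v)_{v-i}$ with $(n-i)_{v-i}$ factor by factor, while you normalize $(n-v)_{v-i}$ and $(n)_v$ separately by powers of $n$.
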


\begin{proof}
We compute
\[
\frac{\binom{v}{i}\binom{n-v}{v-i}}{\binom{n}{v}}
=
\frac{(v)_i}{i!}\cdot
\frac{(n-v)_{v-i}}{(v-i)!}\cdot
\frac{v!}{(n)_v}
=
\frac{(v)_i^2}{i!\,(n)_i}\cdot
\frac{(n-v)_{v-i}}{(n-i)_{v-i}}.
\]
For the second factor,
\[
\frac{(n-v)_{v-i}}{(n-i)_{v-i}}
=
\prod_{t=0}^{v-i-1}
\left(1-\frac{v-i}{n-i-t}\right).
\]
Since $v=o(\sqrt n)$, we have $v=o(n)$, and uniformly for all $0\le t\le v-i-1$,
\[
0\le \frac{v-i}{n-i-t}\le \frac{v}{n-v+1}=o(1).
\]
Therefore,
\[
\log\left(\frac{(n-v)_{v-i}}{(n-i)_{v-i}}\right)
=
\sum_{t=0}^{v-i-1}
\log\left(1-\frac{v-i}{n-i-t}\right)
\]
\[
=
-\sum_{t=0}^{v-i-1}\frac{v-i}{n-i-t}
+
O\!\left(
\sum_{t=0}^{v-i-1}\frac{(v-i)^2}{(n-i-t)^2}
\right).
\]
Now
\[
\sum_{t=0}^{v-i-1}\frac{v-i}{n-i-t}
\le
\frac{(v-i)^2}{n-v+1}
=
O\!\left(\frac{v^2}{n}\right),
\]
and
\[
\sum_{t=0}^{v-i-1}\frac{(v-i)^2}{(n-i-t)^2}
\le
\frac{(v-i)^3}{(n-v+1)^2}
=
O\!\left(\frac{v^3}{n^2}\right)
=
o\!\left(\frac{v^2}{n}\right).
\]
Hence
\[
\log\left(\frac{(n-v)_{v-i}}{(n-i)_{v-i}}\right)
=
O\!\left(\frac{v^2}{n}\right),
\]
and so
\[
\frac{(n-v)_{v-i}}{(n-i)_{v-i}}
=
1+O\!\left(\frac{v^2}{n}\right).
\]
Also, by Lemma~\ref{lem:falling-factorial-asymp},
\[
(n)_i=n^i\left(1+O\!\left(\frac{v^2}{n}\right)\right)
\]
uniformly for $0\le i\le v$. Substituting these estimates into the previous identity yields
\[
\frac{\binom{v}{i}\binom{n-v}{v-i}}{\binom{n}{v}}
=
\frac{(v)_i^2}{i!\,n^i}\left(1+O\!\left(\frac{v^2}{n}\right)\right),
\]
as claimed.
\end{proof}

\begin{lemma}[Elementary product bound]\label{lem:prod-exp-bound}
For any real numbers $x_1,\dots,x_m\in[0,1)$,
\[
\prod_{t=1}^m(1-x_t)
\le
\exp\!\left(-\sum_{t=1}^m x_t\right).
\]
\end{lemma}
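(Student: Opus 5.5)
Since $1-x\le e^{-x}$ for every real $x$, and each $x_t\in[0,1)$ makes the factor $1-x_t$ lie in $(0,1]$, the cleanest route is to apply this scalar inequality factor by factor and multiply.

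First I will establish the scalar bound $1-x\le e^{-x}$. Set $g(x):=e^{-x}-(1-x)$. Then $g(0)=0$ and $g'(x)=1-e^{-x}$, which is nonnegative for $x\ge 0$ and nonpositive for $x\le 0$. So $g$ attains its global minimum $0$ at $x=0$, and hence $g\ge 0$ everywhere. Equivalently, this is convexity of $e^{-x}$: the graph lies above its tangent line at $0$. In particular, for every $x_t\in[0,1)$ we get $0<1-x_t\le e^{-x_t}$.

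Next I will multiply these $m$ inequalities. Multiplying inequalities between nonnegative quantities preserves the direction, so I need every left-hand side nonnegative; here each factor is in fact strictly positive because $x_t<1$. This gives
\[
\prod_{t=1}^m(1-x_t)\le\prod_{t=1}^m e^{-x_t}=\exp\!\Bigl(-\sum_{t=1}^m x_t\Bigr).
\]
Alternatively, one could take logarithms and use $\log(1-x)\le -x$ on $[0,1)$; this is the same inequality.

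The only point needing care is the sign condition that lets us multiply the inequalities. The hypothesis $x_t<1$ supplies exactly this, and beyond that I expect no obstacle.
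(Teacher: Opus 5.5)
Your proof is correct and is essentially the paper's argument: the paper takes logarithms and sums $\log(1-x_t)\le -x_t$ over $t$, while you multiply the equivalent factorwise bounds $0<1-x_t\le e^{-x_t}$, which is the same inequality (as you note yourself). Your check that each factor is nonnegative, so the inequalities can be multiplied, is exactly the condition that makes $\log(1-x_t)$ well defined in the paper's version.
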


\begin{proof}
This follows from the inequality $\log(1-x)\le -x$ for $0\le x<1$.
\end{proof}
\section{Equivalence of Different Sampling Models for Noisy $k$-XOR}\label{appex:models}

The goal of this section is to show that, in the parameter regime relevant to our results, the three sampling models introduced in Section~\ref{subsec:equi} are asymptotically equivalent. We first compare the Bernoulli-$p$ model with the fixed-$m$ model without replacement, and then compare the latter with the standard fixed-$m$ model with replacement. Recall that
\[
\sM:=\binom{[n]}{k},
\qquad
M:=|\sM|=\binom{n}{k}.
\]

\begin{lemma}
Let
\[
N:=|S|=\sum_{\alpha\in\sM} b_\alpha \sim \Bin(M,p),
\]
where $S=\{\alpha\in\sM:b_\alpha=1\}$ is the observed support in the Bernoulli-$p$ model. Then $\E[N]=pM$, and for every $t>0$,
\[
\Pr\bigl(|N-pM|\ge t\bigr)
\le
2\exp\!\left(-\frac{t^2}{2pM+\frac23 t}\right).
\]
In particular, for any fixed $C>0$,
\[
|N-pM|
\le
C\sqrt{pM\log M}
\]
with probability at least $1-2M^{-c_C}$, where $c_C>0$ depends only on $C$.
\end{lemma}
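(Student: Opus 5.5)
The claim is a routine concentration estimate for a binomial, and the plan is to apply Bernstein's inequality directly. First, $N=\sum_{\alpha\in\sM}b_\alpha$ is a sum of $M$ independent $\Ber(p)$ variables, so linearity gives $\E[N]=pM$.

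For the tail bound, write $N-pM=\sum_\alpha(b_\alpha-p)$. The centered summands are independent, mean zero, and satisfy $|b_\alpha-p|\le 1$. Their variances sum to $Mp(1-p)\le pM$. Bernstein's inequality for bounded summands with range bound $1$ gives
\[
\Pr(N-pM\ge t)\le \exp\!\left(-\frac{t^2}{2pM+\tfrac23 t}\right).
\]
Applying the same inequality to $-(b_\alpha-p)$ yields the matching lower-tail bound, and a union bound over the two tails produces the factor $2$.

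For the ``in particular'' part, set $t=C\sqrt{pM\log M}$ and substitute. The exponent becomes
\[
\frac{C^2 pM\log M}{2pM+\tfrac23 C\sqrt{pM\log M}}.
\]
The one point needing care is the denominator, where the linear term $\tfrac23 t$ must not dominate. Whenever $pM\ge \log M$, we have $\sqrt{pM\log M}\le pM$, so the denominator is at most $(2+\tfrac23 C)pM$. The exponent is then at least $c_C\log M$ with $c_C:=C^2/(2+\tfrac23C)>0$, and the probability bound is $2M^{-c_C}$. This condition holds throughout the paper's regime, since $pM=m$ is polynomially large in $n$ while $\log M=O(k\log n)$.

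If one wants no assumption on $pM$ at all, the complementary case $pM<\log M$ is handled separately. There the denominator is at most $(2+\tfrac23C)\sqrt{pM\log M}\cdot\sqrt{\log M/pM}$, which still gives an exponent of order $C^2\log M/(2+\tfrac23 C)$ after simplification. Alternatively, one can simply state the conclusion under $pM\ge\log M$, which is all that is used later. I expect no real obstacle beyond this case distinction on the size of $pM$ relative to $\log M$.
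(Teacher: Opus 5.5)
Your main argument is the same as the paper's, which simply invokes the standard Bernstein--Chernoff bound. The centering, the variance proxy $Mp(1-p)\le pM$, the range bound $1$, and the union bound over the two tails give exactly the stated inequality. Your treatment of the ``in particular'' clause under $pM\ge\log M$, with $c_C=C^2/(2+\tfrac23 C)$, is also correct.

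The paragraph on the complementary case $pM<\log M$, however, is wrong and should be removed.
\begin{itemize}
\item \textbf{The algebra does not give order $\log M$.} Bounding the denominator by $(2+\tfrac23C)\log M$ gives an exponent of at least $C^2pM\log M/\bigl((2+\tfrac23C)\log M\bigr)=C^2pM/(2+\tfrac23C)$. This is of order $pM$, not $\log M$, because the factor $pM$ in the numerator does not cancel. Using $2pM\le 2\sqrt{pM\log M}$ instead only improves this to order $\sqrt{pM\log M}$, which is still $o(\log M)$ when $pM=o(\log M)$.
\item \textbf{No argument can rescue this case, because the claim is false there.} Take $pM=1$, so $N$ is approximately Poisson$(1)$, and let $x=\lceil C\sqrt{\log M}\rceil+2$. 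Then $\Pr\bigl(|N-pM|>C\sqrt{pM\log M}\bigr)\ge\Pr(N=x)\sim e^{-1}/x!=\exp\bigl(-O(\sqrt{\log M}\,\log\log M)\bigr)=M^{-o(1)}$. This exceeds $2M^{-c}$ for every fixed $c>0$.
\end{itemize}

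So the ``in particular'' statement genuinely requires a hypothesis such as $pM\ge c_0\log M$, and then $c_C$ also depends on $c_0$. You should state this hypothesis explicitly rather than present it as an optional simplification. As you note, it holds throughout the paper's regime, where $pM=m$ is polynomial in $n$.
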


This is the standard Bernstein--Chernoff bound for binomial random variables. Therefore, if we choose
\[
p=\frac{m}{M},
\]
then $N$ is sharply concentrated around $m$. In particular,
\[
N=m+O\!\left(\sqrt{m\log M}\right)
\]
with high probability.

\begin{lemma}
Conditioned on the event $\{N=t\}$, the support $S$ in the Bernoulli-$p$ model is a uniformly random $t$-subset of $\sM$. Moreover, conditional on $S$, the observed labels $(z(\alpha))_{\alpha\in S}$ are independent and follow the same planted/noise law as in the fixed-$t$ model without replacement. In particular, conditioned on $\{N=m\}$, the Bernoulli-$p$ model agrees exactly with the fixed-$m$ model without replacement.
\end{lemma}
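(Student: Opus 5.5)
The statement has two parts: conditioned on $\{N=t\}$ the support $S$ is uniform over $t$-subsets, and the labels on $S$ have the correct conditional law. I will prove them separately, using only that $(b_\alpha)_{\alpha\in\sM}$ are i.i.d.\ $\Ber(p)$ and are independent of the noise variables $(\xi_\alpha)_{\alpha\in\sM}$ (respectively $(\zeta_\alpha)$ under $\QQ$).

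For the support, fix any $A\subseteq\sM$ with $|A|=t$. By independence of the $b_\alpha$,
\[
\Pr(S=A)=p^{t}(1-p)^{M-t}.
\]
This depends on $A$ only through $|A|=t$. Since $\{N=t\}$ is the disjoint union of the events $\{S=A\}$ over $|A|=t$, and $\Pr(N=t)=\binom{M}{t}p^t(1-p)^{M-t}$, we get
\[
\Pr(S=A\mid N=t)=\binom{M}{t}^{-1}.
\]
So conditionally on $\{N=t\}$, $S$ is a uniformly random $t$-subset, exactly as the support $S_t$ in Model 2.

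For the labels, the key point is that $\sigma(b)$ is independent of $\sigma(\xi)$. Conditioning on $S=A$ is conditioning on an event in $\sigma(b)$, so it leaves the joint law of $(\xi_\alpha)_{\alpha\in A}$ unchanged: they are still i.i.d.\ $\Rad(\delta)$. On this event $z(\alpha)=x_\alpha\xi_\alpha$ for $\alpha\in A$, and these are independent with exactly the Model 2 planted law given $S_t=A$. The same argument with $\zeta_\alpha\sim\Rad(0)$ handles $\QQ$. Since $\{S=A\}\subseteq\{N=t\}$, conditioning additionally on $\{N=t\}$ changes nothing.

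Combining the two parts, the joint law of $(S,(z(\alpha))_{\alpha\in S})$ given $\{N=t\}$ is
\[
\binom{M}{t}^{-1}\prod_{\alpha\in A}\Pr(\text{label }z(\alpha)),
\]
which is the law of Model 2 with $m=t$. Taking $t=m$ gives the final claim. There is no real obstacle here. The only care needed is in the measure-theoretic step justifying that conditioning on an event in $\sigma(b)$ does not alter the law of the independent family $\xi$.
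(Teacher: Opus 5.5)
Your proposal is correct and follows the same approach as the paper's proof: uniformity of $S$ given $\{N=t\}$ because $(b_\alpha)$ are i.i.d., and conditional independence and correct law of the labels because the noise variables are independent of the indicators. You add the explicit computation $\Pr(S=A)=p^t(1-p)^{M-t}$ and $\Pr(S=A\mid N=t)=\binom{M}{t}^{-1}$, which the paper leaves implicit.
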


\begin{proof}
Since the indicators $(b_\alpha)_{\alpha\in\sM}$ are i.i.d. Bernoulli random variables, conditioning on $\sum_{\alpha\in\sM} b_\alpha=t$ makes the support $S$ uniformly distributed over all $t$-subsets of $\sM$. The labels remain independent conditional on the support because the noise variables $(\xi_\alpha)$ are independent of the indicators $(b_\alpha)$.
\end{proof}

We next compare the fixed-$m$ model with replacement to the fixed-$m$ model without replacement.

\begin{lemma}
Let $E_{\mathrm{coll}}$ denote the event that a repeated constraint is sampled in Model~1, namely
\[
\alpha_s=\alpha_t
\qquad
\text{for some } s\neq t.
\]
Then
\[
\Pr(E_{\mathrm{coll}})
\le
\binom{m}{2}\frac{1}{M}
\le
\frac{m(m-1)}{2M}.
\]
Consequently, if $m=o(M^{1/2})$, then
\[
\Pr(E_{\mathrm{coll}})=o(1).
\]
\end{lemma}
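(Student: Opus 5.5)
The plan is a union bound over unordered pairs of sample indices. In Model~1 the indices $\alpha_1,\dots,\alpha_m$ are i.i.d.\ uniform on $\sM$, so the event $E_{\mathrm{coll}}$ is the union, over the $\binom{m}{2}$ unordered pairs $\{s,t\}$ with $s\neq t$, of the events $\{\alpha_s=\alpha_t\}$. It therefore suffices to compute the probability of a single coincidence and sum.

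For a fixed pair $s\neq t$, condition on $\alpha_s$. By independence, $\alpha_t$ is still uniform on $\sM$, so
\[
\Pr(\alpha_s=\alpha_t)=\sum_{\beta\in\sM}\Pr(\alpha_s=\beta)\Pr(\alpha_t=\beta)=M\cdot\frac{1}{M^2}=\frac1M .
\]
Summing over pairs gives
\[
\Pr(E_{\mathrm{coll}})\le \binom{m}{2}\frac1M=\frac{m(m-1)}{2M},
\]
which is the displayed inequality; the second form is just the identity $\binom m2=m(m-1)/2$. The labels $z_t$ play no role, since the event only concerns the sampled indices.

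For the consequence, if $m=o(M^{1/2})$ then $m^2/M=o(1)$, and hence $\Pr(E_{\mathrm{coll}})\le m^2/(2M)=o(1)$. No step is a genuine obstacle. The only point needing care is that the union is taken over unordered pairs, so the count is $\binom m2$ rather than $m^2$; this matches the stated constant.
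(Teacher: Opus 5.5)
Your proposal is correct: the union bound over the $\binom{m}{2}$ unordered pairs, each colliding with probability exactly $1/M$, is the standard argument. The paper states this lemma without proof, and your argument is the one it implicitly relies on.
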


\begin{lemma}
Conditioned on $E_{\mathrm{coll}}^c$, the unordered support
\[
\{\alpha_1,\dots,\alpha_m\}
\]
in the fixed-$m$ model with replacement is a uniformly random $m$-subset of $\sM$. Moreover, conditional on this support, the observed labels are independent and follow the same planted/noise law as in the fixed-$m$ model without replacement. Hence, conditioned on $E_{\mathrm{coll}}^c$, the fixed-$m$ model with replacement agrees exactly with the fixed-$m$ model without replacement.
\end{lemma}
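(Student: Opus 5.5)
We must prove the last lemma: conditioned on $E_{\mathrm{coll}}^c$, the unordered support of Model~1 is a uniform $m$-subset of $\sM$, and the labels conditioned on the support are independent with the planted/noise law of Model~2.

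The plan is a direct symmetry computation. Write the index sequence $(\alpha_1,\dots,\alpha_m)$ as a uniform element of $\sM^m$. Each sequence has probability $M^{-m}$. The event $E_{\mathrm{coll}}^c$ consists exactly of the injective sequences, of which there are $(M)_m$. Hence, conditioned on $E_{\mathrm{coll}}^c$, the sequence is uniform over injective sequences.

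Next, fix an $m$-subset $A\subseteq\sM$. Exactly $m!$ injective sequences have $\{\alpha_1,\dots,\alpha_m\}=A$, namely the orderings of $A$. So
\[
\Pr\bigl(\{\alpha_t\}=A\mid E_{\mathrm{coll}}^c\bigr)=\frac{m!}{(M)_m}=\binom{M}{m}^{-1}.
\]
This is independent of $A$, which proves uniformity of the support. Moreover, conditioned on the support being $A$, the ordering is a uniform bijection $[m]\to A$.

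For the labels, the noise $(\xi_t)$ is i.i.d.\ $\Rad(\delta)$ under $\PP_x$, or the labels are i.i.d.\ uniform signs under $\QQ$, and in both cases independent of the indices. Condition on the full index sequence, which is injective and equals some ordering $\pi$ of $A$. Relabel by setting $\xi'_\alpha:=\xi_{\pi^{-1}(\alpha)}$ for $\alpha\in A$. Since $\pi$ is a fixed bijection and the $\xi_t$ are i.i.d., the family $(\xi'_\alpha)_{\alpha\in A}$ is i.i.d.\ $\Rad(\delta)$. The observation $z(\alpha)=x_\alpha\xi'_\alpha$ therefore has exactly the Model~2 law given support $A$, and the same holds under $\QQ$. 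This conditional law does not depend on $\pi$, so averaging over the uniform ordering gives the same law conditioned only on the support $A$.

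Finally, combine the two facts. The Model~1 observation on $E_{\mathrm{coll}}^c$, viewed as an unordered labeled set, has the same joint law as Model~2: a uniform support, and conditionally independent labels with the correct law.

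There is no real obstacle. The only point needing care is noting that the observation is treated as an unordered set of labeled constraints, so discarding the ordering loses nothing, and that the independence of $(\xi_t)$ from $(\alpha_t)$ is what makes the relabeling valid.
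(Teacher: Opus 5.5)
Your proposal is correct and follows the same route as the paper: conditioning on $E_{\mathrm{coll}}^c$ makes the ordered index tuple uniform over injective tuples, so the unordered support is uniform. You make this explicit with the count $m!/(M)_m=\binom{M}{m}^{-1}$, and your relabeling $\xi'_\alpha:=\xi_{\pi^{-1}(\alpha)}$ spells out the paper's one-line remark that the labels stay independent with the correct law because the noise is independent of the indices.
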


\begin{proof}
Conditional on all sampled constraints being distinct, the ordered tuple $(\alpha_1,\dots,\alpha_m)$ is uniformly distributed over all ordered $m$-tuples of distinct elements of $\sM$. Therefore the induced unordered support is uniformly distributed over all $m$-subsets of $\sM$. The labels remain independent because the noise variables are independent across samples.
\end{proof}

We can now state the main equivalence result.

\begin{proposition}[Equivalence of the sampling models]\label{prop:model-equivalence}
Set
\[
p=\frac{m}{M}=\frac{m}{\binom{n}{k}}.
\]
Then the fixed-$m$ model with replacement and the Bernoulli-$p$ model can be coupled so that, under either the planted or null law, the following holds with probability at least
\[
1-\Pr(E_{\mathrm{coll}})-\Pr(|N-m|>r):
\]
after passing through the fixed-$m$ model without replacement, the two observations differ by at most $r$ constraint insertions/deletions.
\end{proposition}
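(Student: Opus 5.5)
We construct the coupling in two stages, each passing through Model~2 (fixed-$m$ without replacement). The two stage couplings are then glued along their common Model~2 sample. The labels are handled separately from the supports: in every model, conditional on the support, the labels are independent with the same per-constraint law ($x_\alpha\xi_\alpha$ under $\PP_x$, a uniform sign under $\QQ$). So it suffices to couple the supports and then share the labels on common constraints.

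\emph{Stage 1 (Model~1 $\leftrightarrow$ Model~2).} Draw the ordered tuple $(\alpha_1,\dots,\alpha_m)$ of Model~1. On $E_{\mathrm{coll}}^c$, let the Model~2 support $S_m$ be its unordered set, and give each $\alpha_t$ the Model~2 label equal to its Model~1 label $z_t$. On $E_{\mathrm{coll}}$, resample $S_m$ and its labels from the conditional law needed to make the marginal correct. Concretely, let $S_m$ be the first $m$ distinct values of an i.i.d.\ continuation of the sequence, with fresh labels for the new constraints. This yields a uniform $m$-subset with the correct label law, because the rule ``first $m$ distinct elements of an i.i.d.\ uniform sequence'' always produces a uniform $m$-subset. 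By the lemma identifying Model~1 conditioned on $E_{\mathrm{coll}}^c$ with Model~2, the two observations coincide (as unordered labeled sets) outside an event of probability $\Pr(E_{\mathrm{coll}})$.

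\emph{Stage 2 (Model~2 $\leftrightarrow$ Model~3).} Let $\sigma$ be a uniformly random ordering of $\sM$, and draw $N\sim\Bin(M,p)$ independently. Take the Model~3 support $S$ to be the first $N$ elements of $\sigma$, and $S_m$ to be the first $m$ elements. By the lemma on Model~3 conditioned on $\{N=t\}$, $S$ has exactly the Bernoulli-$p$ law. Clearly $S_m$ is a uniform $m$-subset. Next, draw independent labels for all of $\sM$ from the appropriate law (planted or null) and restrict them to $S$ and $S_m$; both marginals are then correct. The supports are nested, so $|S\triangle S_m|=|N-m|$, and the labels on the common part agree. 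Hence on $\{|N-m|\le r\}$ the observations differ by at most $r$ insertions or deletions.

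\emph{Gluing.} Since both stages produce the same Model~2 law, the gluing lemma combines them on a single probability space. We first generate the Model~2 observation. We then sample the Model~1 observation from its conditional law given the Model~2 observation under the Stage~1 coupling, and the Model~3 observation likewise under Stage~2. A union bound over the two bad events gives success probability at least $1-\Pr(E_{\mathrm{coll}})-\Pr(|N-m|>r)$. The concentration lemma for $N$ and the collision lemma then make both error terms $o(1)$ when $m=o(M^{1/2})$ and $r=C\sqrt{m\log M}$.

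The only delicate step is Stage~1 on $E_{\mathrm{coll}}$. There one must check that the resampling rule really gives $S_m$ its exact Model~2 marginal, rather than a distribution merely close to it. If this is inconvenient, the fallback is to argue through maximal coupling. The law of Model~1 conditioned on $E_{\mathrm{coll}}^c$ equals that of Model~2, so the total variation distance is at most $\Pr(E_{\mathrm{coll}})$, and a maximal coupling achieves exactly that failure probability.
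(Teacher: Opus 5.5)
Your proof is correct and follows the same route as the paper. Both pass through the fixed-$m$ without-replacement model, identify Model~1 with Model~2 off $E_{\mathrm{coll}}$, and relate Model~3 to Model~2 by nested supports; your random-ordering construction is exactly the paper's conversion (``discard $N-m$ observed constraints uniformly at random, or append $m-N$ uniform new constraints''). Your version is more explicit than the paper's, which never specifies the coupling on $E_{\mathrm{coll}}$ or the gluing step. You supply both correctly: the first-$m$-distinct-values rule gives an exactly uniform $m$-subset with correctly distributed labels.
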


\begin{proof}
Introduce the fixed-$m$ model without replacement, denoted by $\PP^{(2)}_{m,\delta}$ in the planted case and $\QQ^{(2)}_m$ in the null case.

By Lemma~..., the fixed-$m$ model with replacement agrees exactly with the fixed-$m$ without-replacement model on the event $E_{\mathrm{coll}}^c$.

Next consider the Bernoulli-$p$ model. On the event $|N-m|\le r$, we convert it to a model with exactly $m$ observed constraints as follows. If $N\ge m$, discard $N-m$ observed constraints uniformly at random. If $N<m$, append $m-N$ additional constraints chosen uniformly from the remaining set $\sM\setminus S$, together with independent labels drawn from the same planted (respectively null) law. The resulting observation has exactly $m$ constraints and agrees in law with the fixed-$m$ model without replacement.

Combining the two reductions yields the claimed coupling statement.
\end{proof}

Suppose $m=o(M^{1/2})$, equivalently $m=o(n^{k/2})$. Then the collision probability in the fixed-$m$ model with replacement is $o(1)$. Also, if $p=m/M$, then the number $N$ of observed constraints in the Bernoulli-$p$ model is concentrated around $m$, with fluctuations of order $O(\sqrt{m\log M})$. Therefore, in this regime, the discrepancy between the three sampling models is asymptotically negligible for the purposes of distinguishing and recovery.

More precisely, if an algorithm succeeds with probability $1-\varepsilon$ under one model, then under either of the other two models its success probability is at least
\[
1-\varepsilon-\Pr(E_{\mathrm{coll}})-\Pr(|N-m|>r),
\]
after applying the above truncation/padding reduction with parameter $r$.
Suppose $m=o(M^{1/2})$, equivalently $m=o(n^{k/2})$. Then the collision probability in the fixed-$m$ model with replacement is $o(1)$. Also, if $p=m/M$, then the number $N$ of observed constraints in the Bernoulli-$p$ model is concentrated around $m$, with fluctuations of order $O(\sqrt{m\log M})$. Therefore the two models differ only by an asymptotically negligible perturbation, and any distinguishing or recovery procedure for one model immediately yields a corresponding procedure for the other with essentially the same success probability.

More precisely, if an algorithm succeeds with probability $1-\varepsilon$ under one model, then under the other model its success probability is at least
\[
1-\varepsilon-\Pr(E_{\mathrm{coll}})-\Pr(|N-m|>r),
\]
for a suitable truncation parameter $r$.
\section{Postponed Proofs}

\subsection{Proof of Lemma \ref{lem:Meanvariance}}\label{appex:mean-var}
\begin{proof}
By Lemma~\ref{lem:mean-variance-separation}, together with Equation~\eqref{eq:VarP-i-j} and Lemma~\ref{lem:Nij-zero}, it follows that $\mathcal A$ distinguishes $\PP_x$ from $\QQ$ with success probability at least
$$
1
\;-\;
O\!\left(
\frac{2v!}{ |\mathcal H|\,(n)_v\,p^s \delta^{2s}}
\;+\;
\sum_{i=k}^{v-1}\sum_{j=1}^{\lfloor(2i-2)/k\rfloor}
\frac{N(i,j)\,p^{2s-j}\delta^{2s-2j}}{|\mathcal H|^2\,(n)_v^2\,p^{2s} \delta^{2s}}
\right).
$$
This follows directly from the above mean and variance computations together with
Lemma~\ref{lem:mean-variance-separation}.
Since $p = \frac{m}{\binom{n}{k}}$.
$$
|\mathcal H|\geq \frac{1}{(2\ell)^2}\cdot
\left(C\cdot
\frac{(2kr)!}{2^{rk}\,(k!)^{r} (2r)!}
\right)^{\ell} $$

\[
\frac{v!}{|\mathcal H|\,(n)_v\,p^s\,\delta^{2s}}
\le
\frac{(2\ell)^2}{C^\ell\,\delta^{2s}}
\cdot
\frac{v!}{(n)_v}
\cdot
\left(\frac{2^{rk}(k!)^r(2r)!}{(2kr)!}\right)^\ell
\cdot
\left(\frac{\binom{n}{k}}{m}\right)^s.
\]
Substituting
\[
v=rk\ell,\qquad s=2r\ell,
\]
we obtain
\[
\frac{v!}{|\mathcal H|\,(n)_v\,p^s\,\delta^{2s}}
\le
\frac{(2\ell)^2}{C^\ell\,\delta^{4r\ell}}
\cdot
\frac{(rk\ell)!}{(n)_{rk\ell}}
\cdot
\left(
\left(\frac{2^{rk}(k!)^r(2r)!}{(2kr)!}\right)^{1/(2r)}
\cdot
\frac{\binom{n}{k}}{m}
\right)^{2r\ell}.
\]
Since $rk\ell\ll \sqrt n$, we have
\[
(n)_{rk\ell}=(1-o(1))\,n^{rk\ell}.
\]
Also,
\[
(rk\ell)!
< e \sqrt{ rk\ell}\left(\frac{rk\ell}{e}\right)^{rk\ell}.
\]
Therefore
\[
\frac{(rk\ell)!}{(n)_{rk\ell}}
\leq 
e
\sqrt{rk\ell}
\left(\frac{rk\ell}{en}\right)^{rk\ell}.
\]
Using
\[
k!\le e\sqrt{k}\left(\frac{k}{e}\right)^k,\qquad
(2r)!\le e\sqrt{2r}\left(\frac{2r}{e}\right)^{2r},\qquad
(2kr)!\ge \sqrt{2kr}\left(\frac{2kr}{e}\right)^{2kr},
\]
we obtain
\begin{align*}
\frac{2^{rk}(k!)^r(2r)!}{(2kr)!}
&\le
\frac{
2^{rk}\bigl(e\sqrt{k}(k/e)^k\bigr)^r\cdot e\sqrt{2r}(2r/e)^{2r}
}{
\sqrt{2kr}(2kr/e)^{2kr}
}
\\
&=
\frac{e^{r+1}k^{r/2}\,k^{kr}e^{-kr}\,(2r)^{2r}e^{-2r}\,2^{rk}}
{\sqrt{k}\,(2kr)^{2kr}e^{-2kr}}
\\
&=
\frac{e^{r(k-1)+1}}{\sqrt{k}}\,
\frac{2^{rk}(2r)^{2r}}{2^{2kr}k^{kr}r^{2kr}}\,
k^{r/2}
\\
&=
\frac{e^{r(k-1)+1}}{\sqrt{k}}\,
2^{\,2r-rk}\,
k^{\frac r2-kr}\,
r^{\,2r-2kr}
\\
&\le
\left(
e^{k}2^{\,2-k}k^{-(k-\frac12)}r^{-2(k-1)}
\right)^r.
\end{align*}
together with
\[
\binom{n}{k}\le \frac{n^k}{k!},
\]
we obtain
\[
\begin{aligned}
\frac{v!}{|\mathcal H|\,(n)_v\,p^s\,\delta^{2s}}
&\le
\frac{(2\ell)^2}{C^\ell\,\delta^{4r\ell}}
\cdot
e\sqrt{rk\ell}
\left(\frac{rk\ell}{en}\right)^{rk\ell}
\cdot
\left(
\left(
\frac{e^{k}}{2^{k-2}\,r^{2k-2}\,k^{\,k-\frac12}}
\right)^{1/2}
\frac{n^k}{k!\,m}
\right)^{2r\ell}
\\
&=
\frac{(2\ell)^2 e\sqrt{rk\ell}}{C^\ell}
\left[
\left(\frac{rk\ell}{en}\right)^{k/2}
\cdot
\frac{e^{k/2}}{2^{k/2-1}\,r^{k-1}\,k^{\,k/2-\frac14}}
\cdot
\frac{n^k}{k!\,m\,\delta^2}
\right]^{2r\ell}
\\
&=
\frac{(2\ell)^2 e\sqrt{rk\ell}}{C^\ell}
\left[
\frac{r^{k/2}k^{k/2}\ell^{k/2}}{e^{k/2}n^{k/2}}
\cdot
\frac{e^{k/2}}{2^{k/2-1}\,r^{k-1}\,k^{\,k/2-\frac14}}
\cdot
\frac{n^k}{k!\,m\,\delta^2}
\right]^{2r\ell}
\\
&=
\frac{(2\ell)^2 e\sqrt{rk\ell}}{C^\ell}
\left[
\frac{k^{1/4}\,\ell^{k/2}}{2^{k/2-1}\,r^{k/2-1}}
\cdot
\frac{n^{k/2}}{k!\,m\,\delta^2}
\right]^{2r\ell}
\\
&=
\frac{(2\ell)^2 e\sqrt{rk\ell}}{C^\ell}
\left[
\frac{k^{1/4}\,\ell^{k/2}}{k!\,2^{k/2-1}}
\cdot
\frac{n^{k/2}}{m\,\delta^2\,r^{k/2-1}}
\right]^{2r\ell}
\\
&\le
\left[
\frac{4\,\ell^{k/2} k^{k/2}}{k!\,2^{k/2}}
\cdot
\frac{n^{k/2}}{m\,\delta^2\,(kr)^{k/2-1}}
\right]^{2r\ell},
\end{aligned}
\]
when $r$ is larger than a some constant.

We next turn to the second term. By Lemma~\ref{lem:Nij-sum},
\[
\sum_{i=k}^{v-1}\sum_{j=1}^{\lfloor(2i-2)/k\rfloor}
\frac{N(i,j)\,p^{2s-j}\delta^{2s-2j}}{|\mathcal H|^2\,(n)_v^2\,p^{2s} \delta^{2s}}\leq
\sum_{i=k}^{v-1}
n^{-i}\binom{v}{i}\,
p^{-\lfloor(2i-2)/k\rfloor}
\delta^{-2\lfloor(2i-2)/k\rfloor}
\]
Let
\[
a_i:=\left\lfloor\frac{2i-2}{k}\right\rfloor,
\qquad
Y:=\frac{1}{n^{k/2}p\delta^2}.
\]
Then
\[
\sum_{i=k}^{v-1}\sum_{j=1}^{\lfloor(2i-2)/k\rfloor}
\frac{N(i,j)\,p^{2s-j}\delta^{2s-2j}}{|\mathcal H|^2\,(n)_v^2\,p^{2s} \delta^{2s}}
\le
\frac{1}{n}\sum_{i=k}^{v-1}\binom{v}{i}Y^{a_i}
\]
As above,
\[
\sum_{i=k}^{v-1}\binom{v}{i}Y^{a_i}
\le
k\,v^k
\sum_{j=0}^{\lfloor 2(v-1)/k\rfloor}
\bigl(v^{k/2}Y\bigr)^j.
\]
If $v^{k/2}Y>1$, then the summand grows geometrically in $i$, so the sum is dominated by its last term. Hence
\[
\begin{aligned}
\sum_{i=k}^{v-1}\binom{v}{i}Y^{a_i}
&\le
k\,v^k
\left(\left\lfloor \frac{2(v-1)}{k}\right\rfloor+1\right)
\bigl(v^{k/2}Y\bigr)^{\lfloor 2(v-1)/k\rfloor}\\
&\le
2v^{k+1}\bigl(v^{k/2}Y\bigr)^{2v/k}\\
&=
2v^{k+1}v^v
\left(\frac{\binom{n}{k}}{n^{k/2}m\delta^2}\right)^{2v/k}\\
&\le
2v^{k+1}v^v v^{2v/k}
\left(\frac{n^{k/2}}{(rk)^{k/2-1}\, k!\,m\delta^2}\right)^{2r\ell}.
\end{aligned}
\]
Now assume
\[
v\le \frac{0.3\log n}{\log\log n}.
\]
Then each of the factors $v^{k+1}$, $v^v$, and $v^{2v/k}$ is at most $n^{0.3}$ for all sufficiently large $n$, and therefore
\[
2v^{k+1}v^v v^{2v/k}\le n^{0.9}.
\]
It follows that
\[
\frac1n\sum_{i=k}^{v-1}\binom{v}{i}Y^{a_i}
\le
\frac{1}{n^{0.1}}
\left(\frac{n^{k/2}}{(rk)^{k/2-1}\,k! m\delta^2}\right)^{2r\ell}.
\]
\end{proof}
\subsection{Proof of Lemma \ref{lem:Nij-zero-recovery}}\label{appex:proof-Nij-recovery}
\begin{proof}
Fix a tuple $(J,J',\phi,\psi)$ counted by $N_{a,b}(i,j)$, and write
\[
W:=\phi(V(J))\cap \psi(V(J')),\qquad |W|=i,
\]
and
\[
K:=E(\phi(J))\cap E(\psi(J')), \qquad |K|=j.
\]
Thus every edge in $K$ is a $k$-subset of $W$, and hence the total number of
edge--vertex incidences in the overlap hypergraph $(W,K)$ is exactly
\[
\sum_{u\in W}\deg_{K}(u)=jk.
\]
We show that, unless $(i,j)=(v,s)$, one has
\[
jk\le 2i-6.
\]
This immediately implies that $N_{a,b}(i,j)=0$ whenever $2i\le jk+5$.

We divide into two cases.

\medskip
\noindent
\textbf{Case 1: some interface vertex $v_t$ is not contained in $W$.}

Since $\phi(v_1)=\psi(v_1)=a$ and $\phi(v_{\ell+1})=\psi(v_{\ell+1})=b$, the overlap
set $W$ contains both endpoints $a,b$. If some interface vertex $v_t$ is missing from
$W$, then the overlap hypergraph $(W,K)$ cannot connect $a$ to $b$; indeed, any path in
$J$ from $v_1$ to $v_{\ell+1}$ must pass through the successive interface vertices
$v_2,\dots,v_\ell$. Hence $(W,K)$ has at least two connected components, one containing
$a$ and one containing $b$.

Let $W_1,\dots,W_m$ be the connected components of $(W,K)$, where $m\ge 2$, and let
$i_r:=|W_r|$. For each component $W_r$, because it is a proper overlap component, at least
one vertex of $W_r$ is incident in $\phi(J)$ to an edge leaving $W_r$, and at least one
(possibly different) vertex of $W_r$ is incident in $\psi(J')$ to an edge leaving $W_r$.
These two outgoing edges are distinct, since one belongs to $\phi(J)\setminus K$ and the other
belongs to $\psi(J')\setminus K$.

Now every vertex has degree at most $2$ in each of $\phi(J)$ and $\psi(J')$, while the
endpoints $a,b$ have degree $1$. Therefore, if $W_r$ contains neither $a$ nor $b$, then
the total degree available inside $W_r$ is at most $2i_r$, and the two outgoing edges force
\[
\sum_{u\in W_r}\deg_K(u)\le 2i_r-2.
\]
If $W_r$ contains exactly one of $a,b$, then its total degree budget is at most $2i_r-1$,
and again two outgoing edges are lost, so
\[
\sum_{u\in W_r}\deg_K(u)\le 2i_r-3.
\]
Since there are at least two components containing $a$ and $b$, summing over all components
gives
\[
jk=\sum_{r=1}^m \sum_{u\in W_r}\deg_K(u)
\le
\sum_{r=1}^m (2i_r-2)-2
=
2i-2m-2
\le
2i-6.
\]

\medskip
\noindent
\textbf{Case 2: all interface vertices $v_1,\dots,v_{\ell+1}$ are contained in $W$.}

Since we are excluding the diagonal case $i=v$, not all vertices are shared. Hence there exists
some block $U_t$ such that
\[
V(U_t)\setminus \{v_t,v_{t+1}\}\not\subseteq W.
\]
In other words, the overlap inside $U_t$ is a proper vertex-subhypergraph containing the two
leaves $v_t,v_{t+1}$, but missing at least one internal vertex.

Now use property~(3) in Definition~\ref{def:ugraph}: deleting any single vertex from $U_t$
leaves a connected hypergraph. This implies that in $U_t$, no proper subhypergraph containing
both leaves can be separated from the rest by removing only one edge. In particular, the overlap
inside $U_t$ must have at least two edges of $\phi(J)$ leaving it, and likewise at least two
edges of $\psi(J')$ leaving it. These four outgoing edges are all distinct, because the two
families of outgoing edges belong respectively to $\phi(J)\setminus K$ and $\psi(J')\setminus K$.

Since all interface vertices lie in $W$, the only vertices of degree $1$ in the overlap are
the two endpoints $a,b$. Therefore the total degree budget of vertices in $W$ inside the full
patterns is at most
\[
2(i-2)+1+1 = 2i-2.
\]
Removing the four outgoing edges just discussed, we obtain
\[
jk=\sum_{u\in W}\deg_K(u)\le 2i-2-4=2i-6.
\]

Combining the two cases, we conclude that every non-diagonal overlap satisfies
\[
jk\le 2i-6.
\]
Therefore, if $2i\le jk+5$, then no such non-diagonal tuple can exist, and hence
\[
N_{a,b}(i,j)=0,
\]
as claimed.

It remains to treat the diagonal case $(i,j)=(v,s)$. In this case the two rooted embeddings
have the same image hypergraph. Thus, after choosing $J\in\mathcal J$ and a rooted embedding
$\phi$ with
\[
\phi(v_1)=a,\qquad \phi(v_{\ell+1})=b,
\]
there are at most $(v-2)!$ choices for $(J',\psi)$, corresponding to arbitrary relabelings
of the $v-2$ internal vertices while keeping the two distinguished endpoints fixed. Since the
number of rooted embeddings $\phi$ is $(n-2)_{v-2}$, we get
\[
N_{a,b}(v,s)=|\mathcal J|\,(n-2)_{v-2}\,(v-2)!.
\]
This completes the proof.
\end{proof}
\subsection{Proof of Lemma \ref{lem:Meanvariance-recovery}}
\label{apx:recover-proof}
\begin{proof}
By Lemma~\ref{lem:Nij-zero-recovery} and Lemma~\ref{lem:rooted-overlap-fixed-i}, we have
\[
\begin{aligned}
\Var(F_{\mathcal J,a,b}(z))
&\le
\sum_{i=2}^{v}\sum_{j=1}^{s}
N_{a,b}(i,j)\,p^{2s-j}\delta^{2s-2j}\\
&\le
\sum_{i=2}^{v}\sum_{j=1}^{\lfloor(2i-6)/k\rfloor}
N_{a,b}(i,j)\,p^{2s-j}\delta^{2s-2j}
+
|\mathcal J|\,(n-2)_{v-2}\,(v-2)!\,p^s .
\end{aligned}
\]
Therefore,
\[
\begin{aligned}
\frac{\Var(F_{\mathcal J,a,b}(z))}
{\E[F_{\mathcal J,a,b}(z)]^2}
&\le
\frac{
\sum_{i=2}^{v}\sum_{j=1}^{\lfloor(2i-6)/k\rfloor}
N_{a,b}(i,j)\,p^{2s-j}\delta^{2s-2j}
+
|\mathcal J|\,(n-2)_{v-2}\,(v-2)!\,p^s
}{
|\mathcal J|^2 (n-2)_{v-2}^2 (p\delta)^{2s}
}\\
&\le
\sum_{i=2}^{v}
\frac{
\binom{v-2}{i-2}(n-v)_{v-i}\,
p^{-\lfloor(2i-6)/k\rfloor}\delta^{-2\lfloor(2i-6)/k\rfloor}
}{
(n-2)_{v-2}
}
+
\frac{(v-2)!}{|\mathcal J|(n-2)_{v-2}\,p^s\delta^{2s}}
\\
&\le
\sum_{i=2}^{v}
\binom{v-2}{i-2}(n-v)^{-i+2}\,
p^{-\lfloor(2i-6)/k\rfloor}\delta^{-2\lfloor(2i-6)/k\rfloor}
+
\frac{(v-2)!}{|\mathcal J|(n-2)_{v-2}\,p^s\delta^{2s}}.
\end{aligned}
\]
Using the lower bound on $|\mathcal J|$ from Lemma~\ref{lem:counting-J}, the remaining estimates are entirely analogous to the $\Var/\E^2$ calculation in Section~\ref{appex:mean-var}. The only differences are that $n$ is replaced by $n-2$, and the overlap parameter $i$ is replaced by $i-2$. Since $v\ll \sqrt n$, these modifications affect the final bound only by a multiplicative $1+o(1)$ factor. Hence, if we define
\[
\delta_n
:=
\left[
\frac{4\,\ell^{k/2} k^{k/2}}{k!\,2^{k/2}}
\cdot
\frac{n^{k/2}}{m\,\delta^2\,(kr)^{k/2-1}}
\right]^{2r\ell}
+
\frac{1}{n^{0.1}}
\left(
\frac{n^{k/2}}{k!\,m\,\delta^2\,(rk)^{k/2-1}}
\right)^{2r\ell},
\]
then for every $a\neq b$,
\[
\frac{\Var(F_{\mathcal J,a,b}(z))}
{\E[F_{\mathcal J,a,b}(z)]^2}
\le \delta_n.
\]

Now fix $a\neq b$. By \eqref{eq:E-FJab},
\[
\E[F_{\mathcal J,a,b}(z)]
=
|\mathcal J|(n-2)_{v-2}(p\delta)^s\,x_ax_b,
\]
so the sign of the mean agrees with $x_ax_b$. Therefore, by Chebyshev's inequality,
\[
\Pr\!\left[\mathcal A_{a,b}(G)\neq x_ax_b\right]
\le
\frac{\Var(F_{\mathcal J,a,b}(z))}
{\E[F_{\mathcal J,a,b}(z)]^2}
\le \delta_n.
\]
This proves \eqref{eq:probrecovery}.

Next recall that
\[
\widehat x_1:=1,
\qquad
\widehat x_b:=\sgn\!\bigl(F_{\mathcal J,1,b}(z)\bigr)
\quad\text{for all } b\neq 1.
\]
Since $F_{\mathcal J,1,b}(z)$ is designed to estimate the parity $x_1x_b$, the estimator
$\widehat x$ provides an approximate recovery of the planted assignment, up to the unavoidable
global sign ambiguity. Moreover, for every $b\neq 1$,
\[
\Pr\!\left[\widehat x_b\neq x_1x_b\right]\le \delta_n.
\]

Let
\[
M:=\left|\left\{b\in[n]:\widehat x_b\neq x_1x_b\right\}\right|
\]
be the number of errors. Since $\widehat x_1=1=x_1x_1$, only the coordinates $b\neq 1$ contribute to $M$, and hence
\[
\E[M]
=
\sum_{b=2}^n
\Pr\!\left[\widehat x_b\neq x_1x_b \,\middle|\, x\right]
\le
(n-1)\delta_n
\le
n\delta_n.
\]
Applying Markov's inequality, for any $t>0$,
\[
\Pr\!\left[M>tn\delta_n\right]
\le
\frac{\E[M]}{tn\delta_n}
\le
\frac1t.
\]
Now choose
\[
t:=\frac{1}{4k\delta_n}.
\]
Then
\[
tn\delta_n=\frac{n}{4k},
\]
and therefore
\[
\Pr\!\left[M>\frac{n}{4k}\right]
\le
4k\delta_n.
\]
Equivalently, with probability at least
\[
1-4k\delta_n,
\]
the estimator $\widehat x$ correctly recovers at least
\[
\left(1-\frac{1}{4k}\right)n
\]
coordinates, up to the global sign. Substituting the definition of $\delta_n$ yields
\eqref{eq:probrecovery2}.
\end{proof}
\subsection{Proof of Lemma \ref{lem:concentration-detec}}\label{apx:proof-concentration1}
\begin{proof}
Recall that
\[
F_{\mathcal H}(z)=\sum_{H\in\mathcal H} T_H(z),
\qquad
\widetilde F_{\mathcal H}(z)=\sum_{H\in\mathcal H}\widetilde T_H(z),
\]
where
\[
T_H(z)
=
\sum_{\phi:V(H)\hookrightarrow[n]}
\prod_{e\in E(H)} z(\phi(e)),
\]
and
\[
\widetilde T_H(z)
=
\frac{1}{t\rho}\sum_{a=1}^t
\sum_{\phi:V(H)\hookrightarrow[n]}
\chi_{\tau_a}(\phi(V(H)))
\prod_{e\in E(H)} z(\phi(e)).
\]
Hence
\[
\widetilde F_{\mathcal H}(z)-F_{\mathcal H}(z)
=
\sum_{H\in\mathcal H}
\sum_{\phi:V(H)\hookrightarrow[n]}
\prod_{e\in E(H)} z(\phi(e))
\left(
\frac1t\sum_{a=1}^t \frac{1}{\rho}\chi_{\tau_a}(\phi(V(H)))-1
\right).
\]
Therefore
\begin{align}
\bigl(\widetilde F_{\mathcal H}(z)-F_{\mathcal H}(z)\bigr)^2
&=
\sum_{H,K\in\mathcal H}
\sum_{\substack{\phi:V(H)\hookrightarrow[n]\\ \psi:V(K)\hookrightarrow[n]}}
\prod_{e\in E(H)} z(\phi(e))
\prod_{f\in E(K)} z(\psi(f))
\nonumber\\
&\qquad\qquad\cdot
\left(
\frac1t\sum_{a=1}^t \frac{1}{\rho}\chi_{\tau_a}(\phi(V(H)))-1
\right)
\left(
\frac1t\sum_{a=1}^t \frac{1}{\rho}\chi_{\tau_a}(\psi(V(K)))-1
\right).
\label{eq:color-L2-expand}
\end{align}

Condition on the observation $z$. Since $\tau_1,\dots,\tau_t$ are independent of $z$, the expectation over the random colorings factorizes from the expectation over $z$. Moreover, if
\[
\phi(V(H))\cap \psi(V(K))=\varnothing,
\]
then the random variables
\[
\chi_{\tau_a}(\phi(V(H)))
\qquad\text{and}\qquad
\chi_{\tau_a}(\psi(V(K)))
\]
depend on disjoint sets of vertex colors, and therefore are independent. Since
\[
\E_{\tau}\!\left[\frac1\rho \chi_\tau(W)-1\right]=0
\qquad\text{for every }W\subseteq[n]\text{ with }|W|=v,
\]
we obtain
\[
\E_{\tau_1,\dots,\tau_t}\!\left[
\left(
\frac1t\sum_{a=1}^t \frac{1}{\rho}\chi_{\tau_a}(\phi(V(H)))-1
\right)
\left(
\frac1t\sum_{a=1}^t \frac{1}{\rho}\chi_{\tau_a}(\psi(V(K)))-1
\right)
\right]
=0
\]
whenever $\phi(V(H))\cap \psi(V(K))=\varnothing$.

On the other hand, in general,
\begin{align*}
&\E_{\tau_1,\dots,\tau_t}\!\left[
\left(
\frac1t\sum_{a=1}^t \frac{1}{\rho}\chi_{\tau_a}(\phi(V(H)))-1
\right)
\left(
\frac1t\sum_{a=1}^t \frac{1}{\rho}\chi_{\tau_a}(\psi(V(K)))-1
\right)
\right]
\\
&\qquad\le
\E_{\tau_1,\dots,\tau_t}\!\left[
\frac{1}{t^2}\sum_{a=1}^t \frac{1}{\rho^2}
\chi_{\tau_a}(\phi(V(H)))\chi_{\tau_a}(\psi(V(K)))
\right]
\le
\frac{1}{t\rho}.
\end{align*}
Since $t=\lceil 1/\rho\rceil$, we have
\[
\frac{1}{t\rho}\le 1.
\]
Thus, taking expectation of \eqref{eq:color-L2-expand} under either $\QQ$ or $\PP_x$, and using that only overlapping image pairs contribute, we obtain
\begin{align}
\E_{\QQ}\!\left[\bigl(\widetilde F_{\mathcal H}(z)-F_{\mathcal H}(z)\bigr)^2\right]
&\le
\sum_{H,K\in\mathcal H}
\sum_{\substack{\phi:V(H)\hookrightarrow[n],\ \psi:V(K)\hookrightarrow[n]\\
\phi(V(H))\cap\psi(V(K))\neq\varnothing}}
\E_{\QQ}\!\left[
\prod_{e\in E(H)} z(\phi(e))
\prod_{f\in E(K)} z(\psi(f))
\right],
\label{eq:color-Q-bound}
\\
\E_{\PP_x}\!\left[\bigl(\widetilde F_{\mathcal H}(z)-F_{\mathcal H}(z)\bigr)^2\right]
&\le
\sum_{H,K\in\mathcal H}
\sum_{\substack{\phi:V(H)\hookrightarrow[n],\ \psi:V(K)\hookrightarrow[n]\\
\phi(V(H))\cap\psi(V(K))\neq\varnothing}}
\E_{\PP_x}\!\left[
\prod_{e\in E(H)} z(\phi(e))
\prod_{f\in E(K)} z(\psi(f))
\right].
\label{eq:color-P-bound}
\end{align}

But these are exactly the same overlap sums that arise in the second-moment analysis of $F_{\mathcal H}(z)$. Under the same assumptions used there, the contribution of overlapping pairs is negligible compared with $\E_{\PP_x}[F_{\mathcal H}(z)]^2$. Consequently,
\[
\E_{\QQ}\!\left[
\left(
\frac{\widetilde F_{\mathcal H}(z)-F_{\mathcal H}(z)}
{\E_{\PP_x}[F_{\mathcal H}(z)]}
\right)^2
\right]
=o(1),
\]
and similarly,
\[
\E_{\PP_x}\!\left[
\left(
\frac{\widetilde F_{\mathcal H}(z)-F_{\mathcal H}(z)}
{\E_{\PP_x}[F_{\mathcal H}(z)]}
\right)^2
\right]
=o(1).
\]
This proves
\[
\frac{\widetilde F_{\mathcal H}(z)-F_{\mathcal H}(z)}
{\E_{\PP_x}[F_{\mathcal H}(z)]}
\;\xrightarrow[L_2]{}\;0
\]
under both $\PP_x$ and $\QQ$.

Finally, since $\widetilde F_{\mathcal H}(z)$ differs from $F_{\mathcal H}(z)$ by an $L_2$-negligible term on the scale of $\E_{\PP_x}[F_{\mathcal H}(z)]$, it follows that $\widetilde F_{\mathcal H}(z)$ has the same asymptotic mean--variance behavior, and therefore the same asymptotic detection power, as $F_{\mathcal H}(z)$.
\end{proof}

\subsection{Proof of Lemma \ref{lem:counting-even}}\label{post:counting}
\begin{proof}
For each $1\le v\le kt/2$, choose a vertex set $U\subseteq[n]$ of size $v$, and let
$d_1,\dots,d_v$ be the positive even degrees of the hypergraph on $U$. Since the hypergraph is even, each $d_i$ is a positive even integer, so we may write
\[
d_i=2e_i,
\qquad e_i\ge 1,
\qquad \sum_{i=1}^v e_i=\frac{kt}{2}=:m.
\]

For a fixed even degree sequence $(d_1,\dots,d_v)$, the number of simple $k$-uniform hypergraphs with these degrees is at most
\[
\frac{(kt)!}{t!\,(k!)^t\,\prod_{i=1}^v d_i!}.
\]
Indeed, assign $d_i$ stubs to vertex $i$, so that there are $kt$ stubs in total, and then partition these stubs into $t$ unlabeled groups of size $k$. This gives at most
\[
\frac{(kt)!}{t!\,(k!)^t}
\]
configurations, and dividing by $\prod_i d_i!$ accounts for permuting stubs at the same vertex. This construction overcounts simple hypergraphs with degree sequence $(d_1,\dots,d_v)$, since different stub partitions may produce the same hypergraph and some partitions may produce repeated edges. Hence it gives a valid upper bound.

Therefore,
\[
N_t
\le
\sum_{v=1}^{m}\binom{n}{v}
\sum_{\substack{e_1,\dots,e_v\ge 1\\ e_1+\cdots+e_v=m}}
\frac{(kt)!}{t!\,(k!)^t\,\prod_{i=1}^v (2e_i)!}.
\]
Now use the elementary bound
\[
(2a)!\ge 2^a a!
\qquad\text{for all }a\ge 0.
\]
Hence
\[
\sum_{\substack{e_1,\dots,e_v\ge 1\\ e_1+\cdots+e_v=m}}
\frac{1}{\prod_{i=1}^v (2e_i)!}
\le
2^{-m}
\sum_{\substack{e_1,\dots,e_v\ge 0\\ e_1+\cdots+e_v=m}}
\frac{1}{\prod_{i=1}^v e_i!}.
\]
By the multinomial identity,
\[
\sum_{\substack{e_1,\dots,e_v\ge 0\\ e_1+\cdots+e_v=m}}
\frac{1}{\prod_{i=1}^v e_i!}
=
\frac{v^m}{m!}.
\]
Thus
\[
N_t
\le
\frac{(kt)!}{t!\,(k!)^t\,2^m\,m!}
\sum_{v=1}^{m}\binom{n}{v}v^m.
\]
Since $v\le m\le \frac{n}{2}$,
\[
\binom{n}{v}v^m
\le
\binom{n}{m}m^m.
\]
Hence
\[
\sum_{v=1}^{m}\binom{n}{v}v^m
\le
m\binom{n}{m}m^m
\le
m\left(\frac{en}{m}\right)^m m^m
=
m(en)^m.
\]
Using $m\le e^m$ for $m\ge 1$, we get
\[
\sum_{v=1}^{m}\binom{n}{v}v^m
\le
(e^2n)^m.
\]
Therefore
\[
N_t
\le
\frac{(kt)!}{t!\,(k!)^t\,2^m\,m!}\,(e^2n)^m.
\]
Now apply Stirling bounds. Using
\[
a!\le e\sqrt a\left(\frac{a}{e}\right)^a,
\qquad
a!\ge \left(\frac{a}{e}\right)^a,
\]
we obtain
\[
\frac{(kt)!}{2^m\,m!\,t!}
\le
e\sqrt{kt}\,
\frac{(kt/e)^{kt}}{2^m\,(m/e)^m\,(t/e)^t}.
\]
Since $m=kt/2$, this simplifies to
\[
\frac{(kt)!}{2^m\,m!\,t!}
\le
e\sqrt{kt}\,
e^{-(k/2-1)t}\,
k^{kt/2}\,
t^{(k/2-1)t}.
\]
Substituting this above gives
\[
N_t
\le
e\sqrt{kt}\,
\left(
e^{k/2+1}\frac{k^{k/2}}{k!}\,
n^{k/2}\,
t^{k/2-1}
\right)^t.
\]
Since $e\sqrt{kt}\le (e\sqrt{k})^t$,
\[
N_t
\le
\left(
e^{k/2+2}\frac{k^{k/2+1/2}}{k!}\,
n^{k/2}\,
t^{k/2-1}
\right)^t.
\]
\end{proof}

\end{document}